\newtheorem{theorem}{Theorem}[section]
\newtheorem{lemma}[theorem]{Lemma}
\newtheorem{proposition}[theorem]{Proposition}
\theoremstyle{definition}
\theoremstyle{remark}
\newtheorem{remark}[theorem]{Remark}
\newcounter{num}
\renewcommand{\thenum}{\roman{num}}
\newtheorem{corollary}[theorem]{Corollary}
\numberwithin{equation}{section}
\newcommand\ka{\kappa}
\newcommand\la{\lambda}
\newcommand\al{\alpha}
\newcommand\de{\delta}
\newcommand\ga{\gamma}
\newcommand\CL{{\mathcal L}}
\newcommand\CH{{\mathcal H}}
\newcommand\CA{{\mathcal A}}
\newcommand\CK{{\mathcal K}}
\newcommand\CP{{\mathcal P}}
\newcommand\CS{{\mathcal S}}
\newcommand\CT{{\mathcal T}}
\renewcommand{\matrix}[2]{\left( \!\! \begin{array}{#1} #2 \end{array} \!\! \right)}
\newcommand{\ov}[1]{\overline{#1}}
\renewcommand\d{{\rm d}}
\newcommand{\dom}{\mathop{\rm dom}}
\newcommand{\Ran}{\mathop{\rm ran}}
\newcommand{\rank}{\mathop{\rm rank}}
\newcommand{\Ker}{\mathop{\rm ker}}
\newcommand{\N}{{\mathbb N}}
\newcommand{\R}{{\mathbb R}}
\newcommand{\C}{{\mathbb C}}
\newcommand{\Z}{{\mathbb Z}}
\newcommand{\T}{{\mathbb{T}}}
\newcommand\eps{\varepsilon}
\newcommand\wt{\widetilde}
\newcommand\wh{\widehat}
\newcommand\ontops[2]{\genfrac{}{}{0pt}{3}{#1}{#2}}
\newcounter{marke}
\newcommand{\bl}{\begin{list}{\roman{marke})}{\usecounter{marke}
\topsep 0 cm \itemsep 0cm}}
\newcommand{\el}{\end{list}}
\newcommand\be{\begin{equation}}
\newcommand\ee{\end{equation}}
\providecommand*{\mrm}[1]{\mathrm{#1}}
\newcommand{\iu}{\mrm{i}}
\newcommand{\eu}{\mrm{e}}
\newcommand{\Hone}{H^{1}(\mathbb{T}^{2})}
\newcommand{\Htwo}{H^{2}(\mathbb{T}^{2})}
\newcommand{\Ltwo}{L^{2}(\mathbb{T}^{2})}
\newcommand{\Hi}{\CH}
\newcommand{\Hii}{\wh\CH}
\renewcommand{\H}{\wt\CH}
\newcommand\ct[1]{\textcolor{red}{#1}}
\newcommand\cg[1]{\textcolor{green}{#1}}
\newcommand{\Si}{\CS}
\newcommand{\Sii}{\wh\CS}
\newcommand{\kam}{\kappa_1}
\newcommand{\kap}{\kappa_2}
\newcommand{\lanm}{\la_{1,j}}
\newcommand{\lanp}{\la_{2,j}}
\newcommand\PA{P}
\newcommand\PB{Q}
\newcommand{\lanmU}{\la_{1,j}^{U}}
\newcommand{\lanpU}{\la_{2,j}^{U}}
\newcommand{\lanmL}{\la_{1,j}^{L}}
\newcommand{\lanpL}{\la_{2,j}^{L}}
\newcommand{\TlanmU}{\wt{\la}_{1,j}^{U}}
\newcommand{\aaa}{a}
\newcommand{\bb}{b}
\newcommand{\cc}{c}
\begin{document}

\title[Non-linear eigenvalue problems and applications to photonic crystals \ ]{Non-linear eigenvalue problems and applications to photonic crystals}

\author{Christian Engstr\"om} 
\address{Department of Mathematics and Mathematical Statistics, Ume\aa \ University, SE-901 87 Ume\aa, Sweden}
\email{christian.engstrom@math.umu.se}

\author{Heinz Langer}
\address{Institute for Analysis and Scientific Computing, Vienna University of Technology, 
Wiedner Hauptstr.~8--10, 1040 Vienna, Austria}
\email{hlanger@email.tuwien.ac.at}

\author{Christiane Tretter}
\address{Mathematisches Institut, Universit\"at Bern, Sidlerstr.\ 5, 3012 Bern, Switzerland 
\ {\rm\&} \ Matematiska institutionen, Stockholms universitet, SE-106 91 Stockholm, Sweden}
\email{tretter@math.unibe.ch}

\date{\today}
\subjclass{47J10; 35Q61, 49R05, 65N30, 74A40, 78A60,~78M10}
\keywords{Non-linear spectral problem, eigenvalue, variational principle, Rayleigh-Ritz method, spectral gap, photonic crystal, Lorentz model, finite element method.}

\begin{abstract}
We establish new analytic and numerical results on a general class of rational operator Nevanlinna functions that arise e.g.\ in modelling
photonic crystals. 
The capability of these dielectric nano-structured materials to control the flow of light depends on specific features of their eigenvalues.
Our results provide a complete spectral analysis including variational principles and two-sided estimates for all eigenvalues along with numerical implementations.
They even apply to multi-pole Lorentz models of permittivity functions and to the eigenvalues 
between the poles where classical min-max variational principles fail completely. 
In particular, we show that our abstract two-sided eigenvalue estimates are optimal and we derive explicit bounds on the band gap above a Lorentz pole.  
A high order finite element method is used to compute the two-sided estimates of a selection of eigenvalues for several concrete Lorentz models, e.g.\ polaritonic materials and multi-pole models.
\vspace{-6mm}
\end{abstract}

\maketitle

\section{Introduction}

Many physical systems are passive in the sense that they do not produce energy. 
For example, in linear electromagnetic field theory energy can be transferred from the electromagnetic field into the material, but not from the material into the electromagnetic field.
Moreover, materials are in general dispersive which, when frequency is the spectral parameter, results in a non-linear operator function. Therefore a large number of systems are accurately described by Nevanlinna functions (also called Herglotz functions) whose values are differential operators.
In numerical analysis matrix-valued Nevanlinna functions such as Schur complements are known as versatile tools \cite{MR1866366}.
In recent years operator-valued Nevanlinna functions and their analytical properties have been studied intensively \cite[Introduction]{book}. However, we still lack a more detailed understanding of a class of rational Nevanlinna operator functions that are sufficiently general to cover several important  physical applications, such as acoustic and electromagnetic problems with frequency dependent materials e.g.\ in photonic crystals. Mathematical research in this direction has started, but is still in its infancy \cite{MR2718134}, \cite{MR2876569},~\cite{Schmalkoke}.

Here we establish, under new and very general conditions, a complete picture of the spectral properties for such rational operator functions. 
The novelty of our approach is that it applies, in the case of several poles, to the eigenvalues 
between the poles where classical min-max variational principles fail completely. As a result, in applications to photonics, we cover   
piecewise constant multi-pole Lorentz models \cite{KSSKB15}, \vspace{-2.75mm} \cite{PhysRevB.83.205131}
\begin{equation}
\label{veryverylast}
 \epsilon(\cdot,\omega) = \sum_{m=1}^M \epsilon_m(\omega) \chi_{\Omega_m}(\cdot), \quad \epsilon_m(\omega) = \epsilon_{m,\infty}  + \epsilon_{m,\infty} \sum_{\ell=1}^{L_m} \frac{\omega_{p,m,\ell}^2}{\omega_{0,m,\ell}^2 - \omega^2},
\vspace{-1.5mm}
\end{equation}
periodic on some bounded domain $\Omega=\Omega_1 \dot\cup \dots \dot\cup\,\Omega_M$, as well as permittivity functions $ \epsilon(\cdot,\omega)$ where the linear part of the corresponding operator function has eigenvalues $\lambda:=\omega^2$ below the Lorentz poles $\omega_{p,m,\ell}^2$.
The abstract operator functions we consider have the Nevanlinna property in the sense that they are analytic on the complex plane, have self-adjoint values on the real axis, and a finite number of poles which are real and of first order with non-positive residues, and their derivatives are non-negative between the poles.
This property enables us to introduce generalized Rayleigh functionals,
establish variational principles, and derive two-sided estimates for \emph{all} eigenvalues of this important class of rational operator~functions. 

We demonstrate the efficacy of the new theory for unbounded operator functions modelling photonic crystals. These dielectric nano-structured materials which are used to control and manipulate the flow of light \cite{JJWM2008} 
are commonly 
modelled by periodic Lorentz permittivity functions \eqref{veryverylast} with several rational terms. 
Explicit computations show that the abstract two-sided eigenvalue estimates are optimal and we derive explicit bounds on the band gap above a Lorentz pole. The operator function is discretised with a high order finite element method and several concrete examples e.g.\ for polaritonic materials illustrate the general theory. In particular, we compute the two-sided estimates of a selection of eigenvalues and we illustrate the 
accumulation of eigenvalues at the poles and the corresponding singular sequence.  In most examples a continuous finite element method is used to compute the eigenvalues, but in cases were a block diagonal mass matrix is an advantage a discontinuous Galerkin method is employed.

The paper is organized as follows. In Section \ref{sec:1} we set up the required operator theoretic framework. In Section \ref{sec:2} we consider the one pole case and establish min-max variational characterizations and two-sided estimates for all eigenvalues. In Sec\-tion \ref{sec:4} we generalize the min-max principles to the multi-pole case and identify cases where a band gap occurs. In Section \ref{Application} we apply our abstract results to pho\-tonic crystals with multi-pole Lorentz models \eqref{veryverylast}. Section \ref{sec:Galerkin} contains the~nume- rical finite element analysis for several material models, illustrating different features of the abstract results such as the occurrence of an index shift or band~gaps. \hspace{2mm}

Throughout this paper we use the following notations and conventions. All Hil\-bert spaces are separable. For a closed linear operator $T$ in a Hil\-bert space $\CH$ we denote by $\ker T$, $\Ran T$, $\rho(T)$, $\sigma(T)$, and $\sigma_{\rm p}(T)$ its kernel, range, resolvent set, spectrum, and point spectrum, respectively; the essential spectrum of $T$ is defined as  $\sigma_{\rm ess}(T)\!:=\!\{\la\in\C: T \!-\! \la \mbox{ is not Fredholm}\}$. If $T$ is self-adjoint, then $\la \!\in\! \sigma_{\rm ess}(T)$ iff $\la \!\in\! \sigma(T)$ and $\la$ is \emph{not} an isolated eigenvalue of finite multiplicity. Further, 
for a Borel set $I \!\subset\! \R$, we  denote by $\CL_I(T)$ the spectral subspace of $T$ cor\-responding to the set $I$ and, if $T$ is bounded from below and $\mu \!<\! \min \sigma_{\rm ess}(T)$, by $N(T,\mu) := \dim \CL_{(-\infty,\mu]}(T)$  the number of eigenvalues of $T$  that are $\le \mu$ counted with 
multiplicities.

\section{\bf Operator theoretic framework}
\label{sec:1}

Recent applications in nanophotonics require to study the spectral properties of operator functions that depend rationally on the spectral parameter.
From the analytical point of view, there are two possible approaches and often a combination of both is most advantageous.

The first one is to analyze the operator function directly using properties of its operator coefficients. Here we consider analytic operator functions $\Si$ 
whose values are linear operators in a Hilbert space $\Hi$ and which are given by
\begin{equation}
\label{schur1}
  \Si(\la) = A-\la - B(C-\la)^{-1}B^*, \quad \dom \Si(\la) = \dom A \subset \Hi, \quad \la \in \C \setminus \sigma(C),
\end{equation}
where $A$ and $C$ are linear operators in Hilbert spaces $\Hi$ and $\Hii$, respectively, $B$ acts from $\Hii$ to $\Hi$, and $B$, $C$ are bounded.

The second approach is to consider a linearization of the operator function $\Si$, i.e.\ a linear operator $\CA$ in a larger Hilbert space that reflects all the spectral properties of~$\Si$. A particular linearization of $\Si$ in \eqref{schur1} is the block operator matrix $\CA$ in the product Hilbert space $\H=\Hi \oplus \Hii$ given by
\begin{equation}\label{eq:A-gen}
  \CA = \matrix{cc}{A & B \\ B^* & C }, \quad 
  \dom \CA= \dom A \oplus \Hii.
\end{equation}
In fact, $\Si$ is the first Schur complement of $\CA$; the relations between spectral properties of $\Si$ and $\CA$ summarized in the next proposition are well-known and not difficult to check (see e.g.\ \cite[Section~2.3]{book}). Recall \vspace{-0.15mm} that the spectrum of $\Si$ is defined as
\[
 \sigma(\Si):= \{ \la \in \C \setminus \sigma(C): 0 \in \sigma(\Si(\la)) \},
\vspace{-0.15mm} 
\]
and analogously for the point spectrum $ \sigma_{\rm p}(\Si)$ and essential spectrum $\sigma_{\rm ess}(\Si)$ of $\Si$.
 
\begin{proposition}
\label{prop:2.1}
Let $\H$ be a Hilbert space, $\H=\Hi\oplus\Hii$ with Hilbert spaces $\Hi$,~$\Hii$.
Let $A$ be a closed linear operator in $\Hi$ and let $B: \Hii \to \Hi$, $C: \Hii \to \Hii$ be bounded linear operators. Then,
for the block operator matrix $\CA$ in \eqref{eq:A-gen} and the operator function $\Si$ in \eqref{schur1},
\begin{itemize}
\item[{\rm i)}] $\sigma(\CA) \setminus \sigma(C) = \sigma(\Si)$, $\sigma_{\rm p}(\CA) \setminus \sigma(C) = \sigma_{\rm p}(\Si)$, 
and $\sigma_{\rm ess}(\CA) \setminus \sigma(C) = \sigma_{\rm ess}(\Si)$;
\item[{\rm ii)}] if $ \la_0 \in \sigma_{\rm p}(\CA) \setminus \sigma(C)$ with eigenvector $(u_0\ \wh u_0)^{\rm t}$, then 
$u_0$ is an eigenvector of $\,\Si$ at \vspace{-2mm} $\la_0$;  
\item[{\rm iii)}] if $\,\la_0 \!\in\! \sigma_{\rm p}(\Si)$ with eigenvector $u_0$, then $\left(\!\!\!\begin{array}{c}u_0\\ -(C\!-\!\la_0)^{-1} B^* u_0\end{array}\!\!\!\right)$ 
\vspace{-2mm} is an eigenvector of $\CA$ at $\la_0$.
\end{itemize}
\end{proposition}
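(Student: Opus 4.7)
The plan is to derive everything from the Frobenius--Schur factorization of $\CA-\la$ for $\la\in\C\setminus\sigma(C)$. Since $C-\la$ is then boundedly invertible on $\Hii$, one checks by direct multiplication that on $\dom\CA=\dom A\oplus\Hii$
\[
  \CA-\la = \matrix{cc}{I & B(C-\la)^{-1} \\ 0 & I} \matrix{cc}{\Si(\la) & 0 \\ 0 & C-\la} \matrix{cc}{I & 0 \\ (C-\la)^{-1}B^* & I},
\]
where the two outer factors are bounded and boundedly invertible in $\H=\Hi\oplus\Hii$ (their inverses have the same triangular form with signs flipped on the off-diagonal), using the boundedness of $B$ and $(C-\la)^{-1}$.

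From this factorization, the middle diagonal operator $\Si(\la)\oplus(C-\la)$ has the same kernel, range, injectivity, surjectivity, and Fredholm properties as $\CA-\la$, up to conjugation by invertible bounded operators. Since $C-\la$ is a bounded bijection, these properties for the diagonal operator reduce to the corresponding properties for $\Si(\la)$. This yields (i): $\CA-\la$ is boundedly invertible iff $\Si(\la)$ is, proving $\sigma(\CA)\setminus\sigma(C)=\sigma(\Si)$; $\CA-\la$ is not injective iff $\Si(\la)$ is not injective, giving the point spectrum identity; and $\CA-\la$ is Fredholm iff $\Si(\la)$ is Fredholm, giving the essential spectrum identity.

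For (ii), suppose $(\CA-\la_0)(u_0\ \wh u_0)^{\rm t}=0$, i.e.,\ $(A-\la_0)u_0+B\wh u_0=0$ and $B^*u_0+(C-\la_0)\wh u_0=0$. Since $\la_0\notin\sigma(C)$, the second equation yields $\wh u_0=-(C-\la_0)^{-1}B^*u_0$; substituting into the first gives $\Si(\la_0)u_0=0$. Moreover $u_0\ne 0$, for otherwise $\wh u_0=0$ as well, contradicting that $(u_0\ \wh u_0)^{\rm t}$ is an eigenvector. For (iii), given $u_0\in\dom A$ with $\Si(\la_0)u_0=0$, set $\wh u_0:=-(C-\la_0)^{-1}B^*u_0\in\Hii$; then $(u_0\ \wh u_0)^{\rm t}\in\dom\CA$ is nonzero and a direct computation shows both block rows of $(\CA-\la_0)(u_0\ \wh u_0)^{\rm t}$ vanish, so it is an eigenvector of $\CA$ at $\la_0$.

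There is no real obstacle here; the only point requiring a bit of care is verifying that the factorization indeed holds as an equality of unbounded operators on $\dom A\oplus\Hii$, which is immediate because the right factor maps this domain into itself and $\Si(\la)$ has domain $\dom A$. The rest is bookkeeping.
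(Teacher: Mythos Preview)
Your proof is correct and follows essentially the same approach as the paper: both rely on the Frobenius--Schur factorization of $\CA-\la$ with bounded and boundedly invertible outer factors, from which all three claims are read off. You spell out the eigenvector correspondence in (ii) and (iii) by direct computation with the block equations, whereas the paper simply asserts that everything follows from the factorization; this is just a difference in level of detail, not in substance.
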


\begin{proof}
\!All claims follow from the 
Schur-Frobenius factorization (see \vspace{-1mm} e.g.~\cite[(2.2.12)]{book}), 
\[
\CA-\la = \matrix{cc}{I & B(C-\la)^{-1} \\ 0 & I} \matrix{cc}{ \Si(\la) & 0 \\ 0 & C-\la} \matrix{cc}{I & 0 \\ (C-\la)^{-1}B^* & I}, \quad \la \in \C \setminus \sigma(C), 
\]
since the outer two factors are bounded and boundedly invertible and so is $C-\la$ if $\la \notin \sigma(C)$ 
(comp.\ \cite[Theorem~2.3.3 ii)]{book});
note that closures in \cite[(2.2.12)]{book} may be omitted here since $B$ and $C$ are bounded.
\end{proof}  
 
The claims of the following proposition were proved in various degrees of generality (see \cite{MR1285306}, \cite{MR1354980}, \cite{MR1389460}, and also \cite{book}); for the convenience of the reader we give a simple proof. 
%

\begin{proposition}
\label{prop:1.1}
Let $\H$ be a Hilbert space, $\H=\Hi\oplus\Hii$ with Hilbert spaces $\Hi$,~$\Hii$. 
Let $A$ be a closed linear operator in $\Hi$ with compact resolvent, and let 
$B: \Hii \to \Hi$, $C: \Hii \to \Hii$ be bounded linear operators.   
Then
\begin{enumerate}
\item[{\rm i)}] the essential spectrum of the block operator matrix $\CA$ in \eqref{eq:A-gen} is given by
\begin{align}
\label{eq:sess1}
  \sigma_{\rm ess}(\CA)=\sigma_{\rm ess}(C).
\end{align}
\end{enumerate}
If $A$ and $C$ are self-adjoint, then so is $\CA$; if, in addition, $A$ is bounded from below, then so is $\CA$ and
\begin{enumerate}
\item[{\rm ii)}] if $\,\max \sigma(C)< \min\sigma(A)$, then $\big(\max \sigma(C),\min\sigma(A)\big) \subset \rho(\CA)$;
\item[{\rm iii)}] if \,{\rm dim} $\Hi = \infty$, then $\CA$ has a sequence of eigenvalues of finite multiplicities accumulating only at $+\infty$;
\item[{\rm iv)}] 
if $\,c\ge \sup \sigma_{\rm ess}(C)$ and
$\big( c, c+\delta \big)\!\subset\!\rho(C)$ for some $\delta\!>\!0$, then   
$\big(c,c+\varepsilon \big)\subset\rho(\CA)$ for some~$\varepsilon >0$.
\end{enumerate}
\end{proposition}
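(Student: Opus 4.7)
The plan is to handle the four claims in order, relying throughout on the Schur-Frobenius factorization of Proposition~\ref{prop:2.1}. For (i), the goal is to show that $(\CA-\la)^{-1} - (\CA_0-\la)^{-1}$ is compact, where $\CA_0 := \mathrm{diag}(A,C)$; this gives $\sigma_{\rm ess}(\CA) = \sigma_{\rm ess}(\CA_0) = \sigma_{\rm ess}(A)\cup\sigma_{\rm ess}(C) = \sigma_{\rm ess}(C)$, the last equality using that $A$ has compact resolvent. Fixing $\la_0 \in \rho(A)$, the rewriting
\[
 \Si(\la) = (A-\la_0)\bigl(I + (A-\la_0)^{-1}[(\la_0-\la) - B(C-\la)^{-1}B^*]\bigr)
\]
exhibits $\Si(\la)$ as the product of an invertible operator and an identity-plus-compact operator, so $\Si(\la)^{-1}$, wherever defined, is compact. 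Inserting this into the Schur-Frobenius block form of $(\CA-\la)^{-1}$ makes every entry of the resolvent difference compact.

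For (ii), self-adjointness and semiboundedness of $\CA$ are standard consequences of $B$ being bounded (cross terms are controlled by $2|\Re(B\wh u, u)| \leq \eps\|u\|^2 + \eps^{-1}\|B\|^2\|\wh u\|^2$). For $\la \in (\max\sigma(C),\min\sigma(A))$, one has $A-\la \geq (\min\sigma(A)-\la)I > 0$ and $C-\la \prec 0$, so $-B(C-\la)^{-1}B^* \geq 0$; hence $\Si(\la) \geq (\min\sigma(A)-\la)I$ is boundedly invertible, and Proposition~\ref{prop:2.1}(i) yields $\la \in \rho(\CA)$. For (iii), (i) places $\sigma_{\rm ess}(\CA)\subseteq[-\|C\|,\|C\|]$, so $\sigma(\CA)\cap(\|C\|,\infty)$ consists only of isolated eigenvalues of finite multiplicity. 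Were this set finite, $\CA$ would be a bounded self-adjoint operator, and testing on vectors $(u,0)^{\rm t}$ would force $A \leq \|\CA\|\,I$ on $\dom A$---impossible for a self-adjoint operator with compact resolvent on infinite-dimensional $\Hi$, whose eigenvalues necessarily accumulate at $+\infty$.

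For (iv), (i) gives $\sigma_{\rm ess}(\CA)\subseteq(-\infty,c]$, and by Proposition~\ref{prop:2.1}(ii)--(iii) any eigenvalue of $\CA$ in $(c,c+\delta)$ corresponds bijectively, with matching multiplicity, to $0$ being an eigenvalue of the Fredholm operator $\Si(\la)$. Differentiating gives the strict operator monotonicity
\[
 \Si'(\la) = -I - B(C-\la)^{-2}B^* \leq -I,
\]
so every eigenvalue branch of $\Si(\la)$ is strictly decreasing in $\la$, and the total multiplicity of eigenvalues of $\CA$ in $(c,c+\delta)$ equals the net increase of the nondecreasing counting function $\la \mapsto N_-(\Si(\la))$ of negative eigenvalues of $\Si(\la)$.

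The main obstacle is to bound $N_-(\Si(\la))$ uniformly in $\la \in (c,c+\delta)$, which is not immediate when $c \in \sigma_{\rm ess}(C)$ and $(C-\la)^{-1}$ blows up at $c$. Let $P$ be the spectral projection of $C$ onto $\sigma(C)\cap(c,\|C\|]$; this set is finite, so $r := \rank P < \infty$, because $\sigma_{\rm ess}(C)\subseteq(-\infty,c]$ combined with $(c,c+\delta)\subset\rho(C)$ leaves an essential-spectrum-free gap above $c$. The splitting
\[
 -B(C-\la)^{-1}B^* = \underbrace{-B(C-\la)^{-1}(I-P)B^*}_{\geq\, 0} - \underbrace{B(C-\la)^{-1}PB^*}_{=:\,F(\la)\geq 0,\ \rank\leq r}
\]
(the first term nonnegative because $C|_{(I-P)\Hii}\preceq cI$ while $\la>c$, and $F(\la)\geq 0$ because $(C-\la)|_{P\Hii}\succeq(c+\delta-\la)I>0$) yields $\Si(\la) \geq A-\la-F(\la)$. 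The standard perturbation inequality $N_-(T-F)\leq N_-(T)+\rank F$ then gives $N_-(\Si(\la)) \leq \#\{n : \la_n(A) < c+\delta\}+r < \infty$ uniformly in $\la$, so only finitely many eigenvalues of $\CA$ can lie in $(c,c+\delta)$, yielding the required $\eps > 0$ with $(c,c+\eps)\subset\rho(\CA)$.
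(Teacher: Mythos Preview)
Your proof is essentially correct but follows a different path from the paper in every part, and there is one small technical gap worth noting.

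For (i), the paper works with the \emph{second} Schur complement $\Sii(\la)=C-\la-B^*(A-\la)^{-1}B$: since $B^*(A-\la)^{-1}B$ is compact, $\Sii(\la)$ is Fredholm iff $C-\la$ is, which gives the equivalence directly for $\la\notin\sigma(A)$; the exceptional points $\la\in\sigma(A)$ are then removed by a finite-rank perturbation of $A$. Your approach via compactness of $(\CA-\la)^{-1}-(\CA_0-\la)^{-1}$ is conceptually clean, but to invoke Weyl's theorem you need a common point $\la\in\rho(\CA)\cap\rho(A)\cap\rho(C)$, and in the general (non-self-adjoint) setting of (i) the existence of such a point is not automatic---you have shown $\Si(\la)$ is Fredholm of index $0$ on $\rho(C)$, but not that it is \emph{invertible} somewhere. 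Under the self-adjointness assumed from (ii) onward this is trivial ($\C\setminus\R\subset\rho(\CA)$), so the gap is minor; the paper's route via $\Sii$ simply avoids the issue.

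For (ii) and (iii) you give direct, self-contained arguments where the paper cites the literature (\cite{MR1354980}, \cite{MR1335452}); your proofs are correct and arguably preferable in an expository sense.

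For (iv) the paper proceeds by a finite-rank perturbation of both $A$ and $C$ to force $\max\sigma(C_1)<\min\sigma(A_1)$, applies (ii) to the perturbed matrix $\CA_1$, and then transfers back. Your argument is genuinely different: you exploit the strict monotonicity $\Si'(\la)\le -I$ to identify the eigenvalue count in $(c,c+\delta)$ with the increase of $N_-(\Si(\la))$, and then bound $N_-(\Si(\la))$ uniformly by splitting $-B(C-\la)^{-1}B^*$ into a nonnegative part and a finite-rank part via the spectral projection of $C$ above $c$. This is correct and has the virtue of anticipating the variational machinery of Section~\ref{sec:2} (the index-shift estimates in Proposition~\ref{new-est} use exactly this kind of negative-eigenvalue counting). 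The paper's route is shorter but relies on (ii) and an external finite-rank perturbation result; yours is more in the spirit of what follows.
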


\begin{proof}
The block operator matrix $\CA$ is a bounded symmetric perturbation of the block diagonal operator matrix diag\,$(A, C)$ in $\Hi\oplus\Hii$. Hence $\CA$ is closed since so are $A$ and $C$, $\CA$~is self-adjoint if so are $A$ and $C$, and $\CA$ is semi-bounded if so is $A$. 

i) To prove \eqref{eq:sess1}, 
we consider the second Schur complement $\Sii$ of $\CA$ given by
\[
    \Sii(\la) = C -\la - B^*(A - \la)^{-1} B, \quad \dom \Sii(\la) = \Hii, \quad \la \in \C \setminus \sigma(A).
\] 
Since the two outer factors in the corresponding Frobenius-Schur factorization (see e.g.\ \cite[(2.2.11)]{book})
\[
\CA - \la = \matrix{cc}{I & 0 \\ B^*(A-\la)^{-1} & I } \matrix{cc}{A-\la & 0 \\ 0 & \Sii(\la)} \matrix{cc}{I & (A-\la)^{-1}B \\ 0 & I}, \quad \la \in\C\setminus\sigma(A),
\]
are bounded and boundedly invertible, it follows that $\sigma_{\rm ess}(\CA) \setminus \sigma(A) = \sigma_{\rm ess}(\Sii)$ (see \cite[Theorem 2.4.7]{book}). 
Since $B$ is bounded and $A$ has compact resolvent, $B^*(A - \la)^{-1} B$ is compact in $\Hii$ and hence $\sigma_{\rm ess}(\Sii) = \sigma_{\rm ess}(C) \setminus \sigma(A)$.
Altogether we obtain that, for $\la \notin \sigma(A)$, 
\begin{equation}
\label{eq:essS2}
  \la \in \sigma_{\rm ess}(\CA) \iff \la \in \sigma_{\rm ess}(\Sii) \iff \la \in \sigma_{\rm ess}(C).
\end{equation} 

It remains to be proved that $\la \in \sigma_{\rm ess}(\CA) \iff \la \in \sigma_{\rm ess}(C)$ also for $\la\in\sigma(A)$. 
In this case, since $A$ has compact resolvent, $\la$ is an isolated eigenvalue of $A$ and the algebraic eigenspace $\CL_\la(A)$ of $A$ at $\la$ is finite dimensional. If $P_\la$ is the orthogonal 
projection in $\Hi$ onto $\CL_\la(A)$ and we choose $\mu\in\R \setminus \sigma(A)$, then 
$A_0:=A- (\lambda -\mu) P_\la$  is a finite dimensional perturbation of $A$ with $\la \notin\sigma(A_0)$. 
Then the equivalence \eqref{eq:essS2} applies to the block operator matrix
\[
  \CA_0 := \matrix{cc}{A_0 & B \\ B^* & C },  \quad \dom \CA= \dom A \oplus \Hii,
\]
and yields that $\la \in \sigma_{\rm ess}(\CA_0) \iff \la \in \sigma_{\rm ess}(C)$. Since
$\CA_0$ is a finite dimensional perturbation of $\CA$, we have $\sigma_{\rm ess}(\CA_0)= \sigma_{\rm ess}(\CA)$, which completes the proof of \eqref{eq:sess1}.


Now suppose that $A$ and $C$ are self-adjoint, and hence so is $\CA$.

ii) The claim is an immediate consequence of \cite[Thm.\ 2.1]{MR1354980} (see also \cite[Theorem 5.2, Corollary~5.3]{CT09}).

iii) Since $A$ is semi-bounded with compact resolvent in $\Hi$ and \,{\rm dim} $\Hi = \infty$, the spectrum of the block diagonal operator matrix diag\,$(A, C)$ in $\Hi\oplus\Hii$ 
consists of the sequence $(\nu_j(A))_{j=1}^\infty$ of eigenvalues of $A$ of finite multiplicities accumulating only at~$+\infty$ and of the spectrum of~$C$. 
As $\CA$ is a bounded perturbation of diag\,$(A, C)$, it will also have a sequence of eigenvalues of finite multiplicities accumulating only at~$+\infty$ (see e.g.\ \cite[Section V.4.3]{MR1335452}). 

iv) 
By the assumptions on $A$, $C$, and $c$, we know that
$A$ has finitely many eigenvalues $\!\le\!c$ and $C$ has finitely many eigenvalues~$\!>\!c$ (counted with their finite multiplicities),~say, 
\[
\nu_1(A) \le \nu_2(A) \le \dots \le \nu_N(A) \le c < \nu_1(C) \le \nu_2(C) \le \dots \le \nu_M(C).
\]
If $P_j$ and $Q_k$ are the orthogonal projections in $\Hi$ onto $\ker(A-\nu_j(A))$ and in $\Hii$ onto $\ker(C-\nu_k(C))$, respectively, and $\nu_{N+1}(A)$ is the smallest eigenvalue of $A$ that is $>c$, then 
\[
A_1:=A-\sum_{j=1}^N \big(\nu_j(A)-\nu_{N+1}(A)\big) P_j, \quad
C_1:=C-\sum_{k=1}^M \big(\nu_k(C)-c \big) Q_k
\]
are finite dimensional perturbations of $A$ and $C$ with the property that 
\[
\max \sigma(C_1) = c < \nu_{N+1}(A) = \min \sigma(A_1).
\]
Now claim ii) applied to the block operator matrix
\[
\CA_1 := \matrix{cc}{A_1 & B \\ B^* & C_1 }, \quad \dom \CA = \dom A \oplus \Hii,
\]
yields that $\big(c,\nu_{N+1}(A)\big) \subset \rho(\CA_1)$. 
Since the self-adjoint operator $\CA$ is a finite dimensional perturbation of the self-adjoint operator $\CA_1$, 
it follows that $\CA$ has only a finite number of eigenvalues in the interval $\big(c,\nu_{N+1}(A)\big)$ counted with multiplicities
(see \cite[Theorem~9.3.3, p.~215]{MR1192782}). Hence $(c,c+\varepsilon)\subset\rho(\CA)$ for some $\varepsilon>0$.
\end{proof}

\begin{remark}
\label{rem:1.1}
In fact, the total multiplicity of the eigenvalues of $\CA$ in the interval $\big(c,\nu_{N+1}(A)\big)$ is at most
$N+M$, i.e.\ at most the sum of the multiplicities of all eigenvalues of $A$ less than or equal to~$c$ and of those of $C$ greater than $c$;
this follows from \cite[Theorem~9.3.3, p.~215]{MR1192782}.
\end{remark}

Since the diagonal part diag\,$(A, C)$ of $\CA$ in $\Hi\oplus\Hii$ is bounded from below and the off-diagonal part is bounded with norm $\|B\|$,
the block operator matrix $\CA$ is bounded from below with
\[ 
\min \!\big\{\! \min \sigma(A), \min \sigma(C) \!\big\}\! -\! \|B\| 
\le \min\sigma(\CA) \le 
\min \!\big\{\! \min \sigma(A), \min \sigma(C) \!\big\}\! +\! \|B\| 
\]
(see \cite[Theorem~V.4.11]{MR1335452} and e.g.\ \cite[Lemma~5.2]{CT13}). 
The off-diagonal nature of the perturbation allows one to strengthen this result (see  \cite{MR2247883} for the case of bounded $\CA$); 
the following proposition is immediate from \cite[Theorem~5.6~(i)]{CT09}.


\begin{proposition}
\label{stillill}
The block operator matrix $\CA$ in \eqref{eq:A-gen} is bounded from below with
\[
  \hspace{10mm}
  \min\bigl\{\min \sigma(A), \min \sigma(C) \bigr\} - \de \le \min \sigma(\CA) \le \min\bigl\{\min \sigma(A), \min \sigma(C)\bigr\} 
\]
with
\[
  \de := \|B\| \tan \left( \frac 12 \arctan \frac{2\,\|B\|}{|\min \sigma(A) - \min \sigma(C)|} \right);
\]
here $\de  \le \|B\|$, and $\de < \|B\|$ if and only if $\,\min \sigma(C) \ne \min \sigma(A)$.
\end{proposition}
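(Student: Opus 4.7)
The plan is to reduce the problem to a two-dimensional variational minimization. That $\CA$ is self-adjoint and semi-bounded is already clear from Proposition~\ref{prop:1.1} (it is a bounded symmetric perturbation of $\text{diag}(A,C)$), so only the quantitative bounds require work.

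The upper bound $\min\sigma(\CA)\le\min\{\min\sigma(A),\min\sigma(C)\}$ is the easy direction. I would use the variational characterization of the bottom of the spectrum: picking approximate eigenvectors $u_n\in\dom A$ with $\|u_n\|=1$ and $\langle Au_n,u_n\rangle\to\min\sigma(A)$ and plugging the test vector $(u_n,0)$ into the quadratic form of $\CA$ gives $\langle\CA(u_n,0),(u_n,0)\rangle=\langle Au_n,u_n\rangle\to\min\sigma(A)$; analogously with $(0,v_n)$ one obtains a sequence approaching $\min\sigma(C)$. Taking the infimum over both sequences yields the upper bound.

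For the lower bound, set $\aaa:=\min\sigma(A)$, $\cc:=\min\sigma(C)$, and $d:=|\aaa-\cc|$. For any $(u,v)\in\dom\CA$ with $\|u\|^2+\|v\|^2=1$, writing $s:=\|u\|$, $t:=\|v\|$ and applying Cauchy--Schwarz to the off-diagonal term,
\[
\langle\CA(u,v),(u,v)\rangle=\langle Au,u\rangle+2\,\text{Re}\langle Bv,u\rangle+\langle Cv,v\rangle\ge \aaa\,s^2+\cc\,t^2-2\|B\|st,
\]
so $\min\sigma(\CA)$ is bounded below by the minimum of $f(s,t):=\aaa s^2+\cc t^2-2\|B\|st$ on the unit quarter circle. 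Parametrizing $s=\cos\phi$, $t=\sin\phi$ and rewriting
\[
f=\tfrac{\aaa+\cc}{2}+\tfrac{\aaa-\cc}{2}\cos(2\phi)-\|B\|\sin(2\phi),
\]
a direct computation gives the minimum $\tfrac{\aaa+\cc}{2}-\tfrac12\sqrt{d^2+4\|B\|^2}=\min\{\aaa,\cc\}-\tfrac12\bigl(\sqrt{d^2+4\|B\|^2}-d\bigr)$. The last step is to verify that this matches the stated $\delta$: using the half-angle identity $\tan(\tfrac12\arctan x)=(\sqrt{1+x^2}-1)/x$ with $x=2\|B\|/d$ yields $\delta=\|B\|\tan\bigl(\tfrac12\arctan(2\|B\|/d)\bigr)=\tfrac12(\sqrt{d^2+4\|B\|^2}-d)$, completing the lower bound.

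The supplementary assertions follow quickly: $\delta\le\|B\|$ holds since $\sqrt{d^2+4\|B\|^2}\le d+2\|B\|$, and equality occurs precisely when $d=0$, i.e.\ $\min\sigma(A)=\min\sigma(C)$. The main obstacle, if any, is purely bookkeeping: one must be careful that the Cauchy--Schwarz inequality for the cross term is sharp (achievable by aligning $u$ with $Bv/\|Bv\|$ and choosing a suitable phase), which ensures the two-dimensional minimization really does capture the infimum and not just a lower bound, so that the alternative route via the referenced result \cite[Thm.~5.6(i)]{CT09} is consistent with the direct calculation.
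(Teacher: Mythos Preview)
Your argument is correct and self-contained. The paper, by contrast, does not actually prove this proposition: it simply records that the result is immediate from \cite[Theorem~5.6~(i)]{CT09}. So you have taken a genuinely different route, giving a direct elementary proof rather than appealing to the external reference. The reduction to a $2\times 2$ minimization via Cauchy--Schwarz, the explicit trigonometric computation, and the half-angle identity $\tan(\tfrac12\arctan x)=(\sqrt{1+x^2}-1)/x$ are all correct, and the identification $\delta=\tfrac12(\sqrt{d^2+4\|B\|^2}-d)$ is exactly what is needed.

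One small remark: your closing worry about sharpness of the Cauchy--Schwarz step is unnecessary and slightly misplaced. The proposition only asserts the two-sided bound $\min\{\aaa,\cc\}-\delta\le\min\sigma(\CA)\le\min\{\aaa,\cc\}$, not that the lower endpoint is attained. Your lower bound follows from the \emph{inequality} alone, and the upper bound was already handled by the test vectors $(u_n,0)$ and $(0,v_n)$. Sharpness would only matter if you wanted $\min\sigma(\CA)=\min\{\aaa,\cc\}-\delta$, which is neither claimed nor generally true (it can fail, for instance, when $B$ does not attain its norm or when the bottom eigenvectors of $A$ and $C$ do not align with the action of $B$). You can simply drop that final caveat.
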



In the next two sections, we will study the spectral properties of the block operator matrix $\CA$ in \eqref{eq:A-gen} and its Schur complement $\Si$ given by \eqref{schur1} in greater detail. Here we distinguish the cases that $\CS$ has one pole and more than one pole.

\section{\bf One pole case: variational principles and eigenvalue estimates}
\label{sec:2}

In this section we consider the case that $C=c I_{\Hii}$, i.e.\ the operator function $\Si$ in the Hilbert space $\Hi$
in \eqref{schur1} has a single pole at the point $c\in\R$, 
\begin{equation}
\label{opfS}
 \Si(\la) = A-\la - \frac{BB^*}{c-\la}, \quad \dom \Si(\la) = \dom A \subset \Hi, \quad \la \in \C\setminus\{c\},
\end{equation}
where $A$ is a self-adjoint operator with compact resolvent and bounded from below and where $B$ is a non-zero bounded linear operator.

\subsection{Spectral properties of $\CS$.}
\label{subsec:2.1}
To investigate the spectrum of $\Si$, we use its close relation to the spectrum 
of the block operator matrix $\CA$ in \eqref{eq:A-gen} given by
\begin{equation}
\label{eq:A}
  \CA = \matrix{cc}{A & B \\ B^* & c I_{\Hii}}, \quad 
  \dom \CA= \dom A \oplus \Hii \subset \Hi \oplus \Hii.
\end{equation}
%

Since $\CA$ is self-adjoint and bounded from below, the spectra $\sigma(\CA)$ and $\sigma(\CS)=\sigma(\CA) \setminus \{c\}$ are real and bounded from below.
By Propositions \ref{prop:2.1} and \ref{prop:1.1}, we have
\begin{align*}
 &\sigma_{\rm ess}(\CA) = \sigma_{\rm ess}(c I_{\Hii}) = \begin{cases} \{c\} & \mbox{if } \dim \Hii = \infty, \\ \ \,\emptyset & \mbox{if } \dim \Hii < \infty, \end{cases} \\
 &\sigma_{\rm ess}(\CS) =  \sigma_{\rm ess}(\CA) \setminus \sigma(c I_{\Hii}) = \emptyset,
\end{align*}
and $(c,c+\eps)\cap \sigma(\CA) = (c,c+\eps)\cap \sigma(\CS) = \emptyset$ for some $\eps>0$. 

Hence the spectrum of $\CA$ and of $\CS$ in the intervals $(-\infty,c)$ and $(c,+\infty)$ is discrete and accumulates at most at the right end-points $c$ and $+\infty$, respectively. We denote the corresponding sequences of eigenvalues, ordered non-decreasingly and counted with multiplicities, 
by $(\lanm)_{j=1}^{n_1} \subset (-\infty,c)$ and $(\lanp)_{j=1}^{n_2} \subset (c,+\infty)$, respectively, with $n_1$, $n_2 \in \N_0 \cup \{\infty\}$:
\begin{align*}
  &\sigma(\CA) \setminus \{c\} = \sigma_{\rm p}(\CA) \setminus \{c\} =
  \sigma(\CS) = \sigma_{\rm p} (\CS)   = (\lanm)_{j=1}^{n_1} \,\dot\cup\, (\lanp)_{j=1}^{n_2}.
\end{align*}
Here $n_1=\infty$ means that the sequence $(\lanm)_{j=1}^{n_1} \subset [\min \sigma(\CA),c)$ is infinite and  accumulates at $c$, while
$n_2=\infty$ means that the sequence $(\lanp)_{j=1}^{n_2} \subset [c\!+\!\eps,+\infty)$ is infinite and accumulates at~$+ \infty$.
Since $A$ has compact resolvent, we have
\begin{equation}
\label{inf-or-not}
  n_1 < \infty \mbox { if } \dim \Hii < \infty, \quad n_2 < \infty \ \mbox{ if and only if } \dim \Hi < \infty.
\end{equation}
Indeed, if $\dim \Hii < \infty$, then $\sigma_{\rm ess}(\CA) = \sigma_{\rm ess}(C)=\emptyset$ by Proposition \ref{prop:1.1};
the second claim follows because $A$, and hence $\CA$, is bounded if and only if $\dim \Hi < \infty$.

\vspace{-1.5mm}

\begin{center}
\begin{figure}[h]
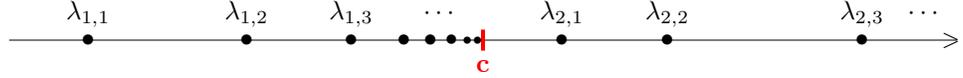

\psset{unit=7mm}\pspicture*(-9,-0.6)(9,1)
\psline[linewidth=.3pt,linecolor=black]{-}(-9,0)(0,0)
\psline[linewidth=.3pt,linecolor=black]{-}(0,0)(9,0)
\rput[C](8.9,0){$>$}
%
\psline[linewidth=1.5pt,linecolor=red]{-}(0,-0.2)(0,0.2)
\rput[C](0,-0.5){\ct{\boldmath{$\bf c$}}}
\rput[C](-7.5,0){$\bullet$}
\rput[C](-7.5,0.5){$\la_{1,1}$}
\rput[C](-4.49,0){$\bullet$}
\rput[C](-4.49,0.5){$\la_{1,2}$}
\rput[C](-2.5,0){$\bullet$}
\rput[C](-2.5,0.5){$\la_{1,3}$}
\rput[C](-1.5,0){$\bullet$}
\rput[C](-1,0){$\bullet$}
\rput[C](-0.6,0){\small$\bullet$}
\rput[C](-0.3,0){\tiny$\bullet$}
\rput[C](-0.1,0){\tiny$\bullet$}
\rput[C](-0.8,0.5){$\cdots$}
\rput[C](1.5,0){$\bullet$}
\rput[C](1.5,0.5){$\la_{2,1}$}
\rput[C](3.5,0){$\bullet$}
\rput[C](3.5,0.5){$\la_{2,2}$}
\rput(7.2,0){$\bullet$}
\rput[C](7.2,0.5){$\la_{2,3}$}
\rput[C](8.4,0.5){$\cdots$}
\endpspicture
\vspace{-3.5mm}
\caption{{\small Eigenvalues $(\la_{1,j})_{j=1}^{n_1} \!\subset\! (-\infty,c)$ and $(\la_{2,j})_{j=1}^{n_2} \!\subset\! (c,+\infty)$ of $\CS$.}}
\end{figure}
\vspace{-3mm}
\end{center}

The condition $\dim \Hii = \infty$ is only necessary for eigenvalue accumulation at $c$, i.e.\ for $n_1=\infty$. 
The following two propositions show that $\dim \Hii = \infty$ and $B$ injective are sufficient conditions for $n_1=\infty$.
Later, with more advanced tools, we will be able to show that the weaker condition $\dim \ov{\Ran B} = \infty$ is both 
necessary and sufficient (see Theorem~\ref{thm:2.3a}~iii)).


\begin{proposition}
\label{inf-at-c}
Let $\,\H$ be a Hilbert space, $\H=\Hi\oplus\Hii$ with Hilbert spaces $\Hi$,~$\Hii$. 
Let $A$ be a self-adjoint operator in $\Hi$ with compact resolvent and bounded from below, 
$B: \Hii \to \Hi$  a non-zero bounded linear operator, and $c\in\R$.

If $\,\dim \Hii = \infty$ and $c$ is not an eigenvalue if infinite multiplicity of $\CA$, 
then $c$ is an accumulation point of eigenvalues 
of $\CA$, and hence of $\CS$, from the left but not from the right.
\end{proposition}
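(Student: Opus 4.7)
The plan is to combine the essential spectrum identification from Proposition \ref{prop:1.1}(i) with the abstract gap result of Proposition \ref{prop:1.1}(iv), and then invoke the general dichotomy for points of the essential spectrum of a self-adjoint operator.

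First I would exploit that, by Proposition \ref{prop:1.1}(i) and $\dim\Hii=\infty$,
\[
\sigma_{\rm ess}(\CA) \;=\; \sigma_{\rm ess}(cI_{\Hii}) \;=\; \{c\},
\]
so every point of $\sigma(\CA)\setminus\{c\}$ is an isolated eigenvalue of finite multiplicity, and by Proposition \ref{prop:2.1}(i) these are in one-to-one correspondence with eigenvalues of $\CS$. In particular, any accumulation of $\sigma(\CA)$ at $c$ must come from eigenvalues of $\CA$ (equivalently of $\CS$) of finite multiplicity.

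Next I would rule out accumulation from the right. For $C=cI_{\Hii}$ one has $\sigma_{\rm ess}(C)=\{c\}$ and $(c,c+\delta)\subset\rho(C)$ for every $\delta>0$. Hence Proposition \ref{prop:1.1}(iv) applies with this choice of $c$ and yields an $\eps>0$ with $(c,c+\eps)\subset\rho(\CA)$, so no eigenvalues of $\CA$ lie in $(c,c+\eps)$.

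To obtain accumulation from the left I would use the standard characterization of essential spectrum: for a self-adjoint operator, a point of the essential spectrum is either an eigenvalue of infinite multiplicity or a non-isolated point of the spectrum. By hypothesis $c$ is not an eigenvalue of infinite multiplicity of $\CA$, while $c\in\sigma_{\rm ess}(\CA)$ by the first step. Therefore $c$ must be an accumulation point of $\sigma(\CA)\setminus\{c\}$, which consists of finite-multiplicity eigenvalues; combined with the gap $(c,c+\eps)\subset\rho(\CA)$, these eigenvalues necessarily accumulate at $c$ only from the left. The correspondence from Proposition \ref{prop:2.1}(i) then transfers the accumulation statement from $\CA$ to $\CS$.

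The only point that could require a small additional remark is the invocation of the self-adjoint essential-spectrum dichotomy itself; since it is already used implicitly in the paper's preliminary discussion (``$\la\in\sigma_{\rm ess}(T)$ iff $\la\in\sigma(T)$ and $\la$ is not an isolated eigenvalue of finite multiplicity''), it can be cited without further work. I do not anticipate a genuine obstacle: the hypotheses on $\dim\Hii$ and on the multiplicity of $c$ are precisely tailored so that the three ingredients (essential spectrum equals $\{c\}$, gap on the right, dichotomy) combine directly.
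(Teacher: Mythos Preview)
Your proposal is correct and follows essentially the same route as the paper's proof: identify $\sigma_{\rm ess}(\CA)=\{c\}$ via Proposition~\ref{prop:1.1}(i), invoke the self-adjoint essential-spectrum dichotomy to force accumulation, and use Proposition~\ref{prop:1.1}(iv) for the right-hand gap, then transfer to $\CS$ via Proposition~\ref{prop:2.1}. The only cosmetic difference is that the paper cites Proposition~\ref{prop:2.1}(ii) rather than (i) for the transfer to $\CS$, but your use of (i) is equally valid (indeed more directly relevant, since the claim concerns spectra rather than eigenvectors).
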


\begin{proof}
If ${\rm dim}\,\Hii = \infty$, Proposition \ref{prop:1.1} shows that $\sigma_{\rm ess}(\CA) = \{c\}$. 
Since $\CA$ is self-adjoint, this implies that $c$ is either an eigenvalue of infinite multiplicity or an accumulation point of eigenvalues. 
The former is excluded by assumption and $(c,c+\varepsilon) \subset \rho(\CA)$ for some $\varepsilon >0$ by Proposition \ref{prop:1.1} iv).
Thus $c$ is an accumulation point of eigenvalues of $\CA$, and hence of $\CS$ by Proposition \ref{prop:2.1} ii), from the left but not from the right.
\end{proof}

\begin{proposition}
\label{prop:1.3a}
Let $\,\H$ be a Hilbert space, $\H=\Hi\oplus\Hii$ with Hilbert spaces $\Hi$,~$\Hii$. 
Let $A$ be a self-adjoint operator in $\Hi$ with compact resolvent and bounded from below, 
$B: \Hii \to \Hi$  a non-zero bounded linear operator, 
and $c\in\R$. 

If $B$ is injective, then $\dim \ker (\CA-c) \le N(A,c)$; in particular, $c$ is not an eigenvalue of infinite multiplicity of $\CA$.
\end{proposition}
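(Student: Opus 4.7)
The plan is to analyze $\ker(\CA-c)$ component-wise and then project onto the spectral subspace of $A$ corresponding to $(-\infty,c]$. Suppose $(u,\wh u)^{\rm t}\in\ker(\CA-c)$ with $u\in\dom A$ and $\wh u\in\Hii$. Writing out the block equations gives $(A-c)u+B\wh u=0$ and $B^*u=0$. Since $B$ is injective, $\wh u$ is uniquely determined by $u$ via $B\wh u=-(A-c)u$, so the projection $(u,\wh u)\mapsto u$ is an injective linear map from $\ker(\CA-c)$ into $\Hi$. Thus it is enough to bound the dimension of its image.

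Next, I would let $P$ denote the orthogonal projection onto the finite-dimensional spectral subspace $\CL_{(-\infty,c]}(A)$, which has dimension exactly $N(A,c)$ by definition and is finite because $A$ has compact resolvent and is bounded from below. The goal is to show that the composition $u\mapsto Pu$ is still injective on the image obtained above, which would immediately yield $\dim\ker(\CA-c)\le N(A,c)$.

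For the injectivity step, suppose $u$ lies in the image and $Pu=0$, so $u\in\CL_{(c,\infty)}(A)$. On this subspace the self-adjoint operator $A-c$ is strictly positive in the sense that
\[
\langle (A-c)u,u\rangle=\int_{(c,\infty)}(\la-c)\,\d\|E_\la u\|^2\ge 0,
\]
with equality only for $u=0$, since $A$ has purely discrete spectrum bounded away from $c$ on $(c,\infty)$. On the other hand, using $(A-c)u=-B\wh u$ together with $B^*u=0$,
\[
\langle (A-c)u,u\rangle=-\langle B\wh u,u\rangle=-\langle \wh u,B^*u\rangle=0,
\]
forcing $u=0$, and then $\wh u=0$ by injectivity of $B$.

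The main obstacle is simply making the strict positivity of $A-c$ on $\CL_{(c,\infty)}(A)$ rigorous, but this is automatic from the spectral theorem because $A$ has compact resolvent, so its spectrum in $(c,\infty)$ is discrete and bounded away from $c$; the identity $\langle (A-c)u,u\rangle=0$ then forces the spectral measure of $u$ to vanish on $(c,\infty)$, which together with $Pu=0$ gives $u=0$. The "in particular" statement follows at once, since $N(A,c)<\infty$ under the hypotheses on $A$.
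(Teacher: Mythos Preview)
Your proof is correct and shares its core identity with the paper's argument: both hinge on the computation $\langle (A-c)u,u\rangle = -\langle \wh u, B^*u\rangle = 0$ for any $(u,\wh u)^{\rm t}\in\ker(\CA-c)$. The difference lies in how this is converted into a dimension bound. The paper first records the consequence $(Au,u)/\|u\|^2=c$ for every nonzero first component~$u$, then invokes the classical min--max principle for $A$ to conclude that the span of such first components can have dimension at most $N(A,c)$; it then argues separately (again from injectivity of $B$) that the first-component projection is injective on $\ker(\CA-c)$. You instead compose the first-component projection with the spectral projection onto $\CL_{(-\infty,c]}(A)$ and verify injectivity of the composite directly, using that $A-c$ is strictly positive on $\CL_{(c,\infty)}(A)$. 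Your route is a little more streamlined, avoiding the min--max detour, while the paper's version makes the Rayleigh-quotient structure more visible; both rest on exactly the same two ingredients (injectivity of $B$ and the vanishing of $\langle(A-c)u,u\rangle$).
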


\begin{proof}
First we show \vspace{-1mm} that 
\begin{equation}
\label{c-var}
 \wt u\!=\!(u \ \wh u)^{\rm t} \in \ker (\CA- c)\setminus \{0\} \ \implies \ u \ne 0, \quad c= \frac{(Au,u)}{\|u\|^2}.
\vspace{-1mm} 
\end{equation}
In fact, $(\CA-c) \wt u = 0$ is equivalent to
\begin{align}
\label{kerc1i}
  &(A-c)u +B \wh u =0, \\
\label{kerc2i}  
  &B^* u =0.
\end{align}
If $u=0$, then \eqref{kerc1i} implies $B \wh u=0$ and thus $\wh u=0$ since $B$ is injective by assumption, a contradiction to $\wt u\ne 0$; hence $u\ne 0$.
Taking the scalar product with $u$ in \eqref{kerc1i} and using \eqref{kerc2i}, we find that 
\[
0=((A-c)u,u) + (B \wh u,u) = ((A-c)u,u) + (\wh u,B^*u)= ((A-c)u,u), 
\]
which completes the proof of \eqref{c-var}. 
Next we prove that, if  $P: \Hi \oplus \Hii \to \Hi$ denotes the projection onto the first component, then
\begin{equation}
\label{nc}
n_c := \dim {\rm span} \big( P \ker (\CA-c) \big) %
\le N(A,c).
\end{equation}
Assume to the contrary that $n_c \ge N(A,c)+1$. We can characterize the eigenvalues $\nu_1(A) \le \nu_2(A) \le \dots \le \nu_{N(A,c)}(A) \le c < \nu_{N(A,c)+1}(A) \le \dots$
of $A$ counted with multiplicities by the classical min-max variational principle (see e.g.\ \cite[Theorem XIII.1]{MR0493421}).
If we use $P \ker (\CA-c) \subset \dom A$ and \eqref{c-var}, we obtain the estimate
\begin{align*}
  \nu_{j}(A) &= \min_{\ontops{\CL \subset \dom A}{\dim \CL = j}} \,\max_{\ontops{u \in \CL}{u\ne 0}} \frac{(Au,u)}{\|u\|^2} 
            \le  \min_{\ontops{\CL \subset P \ker (\CA-c)}{\dim \CL = j}} \!\!\max_{\ontops{u \in \CL}{u\ne 0}} \frac{(Au,u)}{\|u\|^2}
            = c, \quad j=1,2,\dots,n_c.
\end{align*}
Since $N(A,c)+1 \le n_c$ by assumption, it follows that $\nu_{N(A,c)+1}(A) \le c$, a contradiction.

Since $A$ has compact resolvent, \eqref{nc} implies that $n_c \le N(A,c) < \infty$. 
Thus for every choice of $n_c+1$ elements $\wt u_i = (u_i \ \wh u_i)^{\rm t} \in \ker (\CA- c) \setminus \{0\}$, $i=1,2,\dots,n_c+1$, the first components 
$\{u_1,u_2,\dots,u_{n_c+1}\}$ are linearly dependent. Hence there exists $(\al_i)_{i=1}^{n_c+1} \in \C^{n_c+1} \setminus \{0\}$ such that
$\sum_{i=1}^{n_c+1} \alpha_i u_i = 0$. 
Since $\sum_{i=1}^{n_c+1} \alpha_i \wt u_i \in \ker (\CA-c)$, \eqref{c-var} implies that 
$\sum_{i=1}^{n_c+1} \alpha_i \wh u_i = 0$ and hence $\sum_{i=1}^{n_c+1} \alpha_i \wt u_i = 0$.

Altogether, it follows that $\dim \ker (\CA-c) \le n_c \le N(A,c)<\infty$.
\end{proof}

\begin{remark}
i) \ A necessary condition for $B$ injective is that $\dim \Hii \le \dim \Hi$.

{\rm ii)} \ If $A>c$, then $N(A,c)=0$; in this case $B$ injective implies $c\notin \sigma_{\rm p}(A)$.
\end{remark}

\begin{proposition}
\label{prop:1.3}
If, under the assumptions of Proposition {\rm \ref{prop:1.3a}}, 
$B$ is injective and 
\begin{align}
\label{ass1}
& u \in \dom A  \cap \,\Ker B^*, \ u \ne 0 \ \implies \ (A-c)u \notin  \Ran B,
\end{align}  
then $c\notin \sigma_{\rm p}(\CA)$. Conversely, if $c\notin \sigma_{\rm p}(\CA)$, then \eqref{ass1} holds.
\end{proposition}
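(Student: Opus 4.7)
The plan is to reduce the statement to a direct analysis of the eigenvalue equation $(\CA-c)\wt u = 0$ written out in block form. For $\wt u = (u\ \wh u)^{\rm t} \in \dom A \oplus \Hii$, the identity $(\CA-c)\wt u = 0$ is equivalent to the pair of equations
\begin{equation}
\label{pairpair}
(A-c)u + B \wh u = 0, \qquad B^* u = 0.
\end{equation}
The first equation forces $(A-c)u = -B\wh u \in \Ran B$ and the second forces $u \in \Ker B^*$. The whole proposition becomes an easy translation of this system through the hypothesis \eqref{ass1}.

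For the forward implication, I would assume $B$ injective and \eqref{ass1} and take any $\wt u = (u\ \wh u)^{\rm t} \in \ker(\CA-c)$. Then $u \in \dom A \cap \Ker B^*$ and $(A-c)u \in \Ran B$ by \eqref{pairpair}. The contrapositive of \eqref{ass1} immediately gives $u=0$. Substituting back into the first equation of \eqref{pairpair} yields $B\wh u = 0$, whence $\wh u = 0$ by injectivity of $B$. Therefore $\wt u = 0$ and $c \notin \sigma_{\rm p}(\CA)$.

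For the converse, suppose $c \notin \sigma_{\rm p}(\CA)$ and, arguing by contradiction, pick $u \in \dom A \cap \Ker B^*$, $u \ne 0$, with $(A-c)u \in \Ran B$. Choose any $\wh u \in \Hii$ with $B\wh u = -(A-c)u$. Then both equations of \eqref{pairpair} hold, so $\wt u := (u\ \wh u)^{\rm t}$ lies in $\ker(\CA-c)$. Since $u \ne 0$, $\wt u \ne 0$, contradicting $c \notin \sigma_{\rm p}(\CA)$. Note that this direction does not use injectivity of $B$, only the existence of a preimage $\wh u$ guaranteed by $(A-c)u \in \Ran B$.

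There is no real obstacle here: the proposition is a transparent bookkeeping statement once one writes the eigenvalue equation in block form as \eqref{pairpair}. The only minor point to be careful about is that, in the forward direction, the injectivity of $B$ is genuinely needed to recover $\wh u = 0$ after one has shown $u=0$, whereas in the converse direction injectivity plays no role.
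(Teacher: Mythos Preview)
Your proof is correct and follows essentially the same approach as the paper: both write out the eigenvalue equation $(\CA-c)\wt u=0$ in block form as the pair \eqref{pairpair} and argue the two implications directly from it. The only cosmetic difference is that the paper phrases the forward direction as a contrapositive (if $c\in\sigma_{\rm p}(\CA)$ then \eqref{ass1} fails), referencing the earlier observation \eqref{c-var} for $u\ne 0$, whereas you prove $\ker(\CA-c)=\{0\}$ directly and re-derive that step via injectivity of $B$.
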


\begin{proof}
If $c \!\in\! \sigma_{\rm p}(\CA)$, then \eqref{c-var} and \eqref{kerc1i} show that the first component of every eigenvector at $c$ does not satisfy \eqref{ass1}.
Vice versa, if \eqref{ass1} does not hold, there exists $u\in \dom (A)\cap \ker B^*$, $u\ne 0$, and  $\wh u \in \Hii$ such that $(A-c)u=-B\wh u$. Then the non-zero vector $(u\ \wh u)^{\rm t}$ is an eigenvector of $\CA$ at $c$ and hence $c\in \sigma_{\rm p}(\CA)$.
\end{proof}

\begin{remark}

i) \ \ If $\Ker B^*=\{0\}$, then condition \eqref{ass1} is trivially satisfied.

ii) \ A sufficient (but not necessary) condition for \eqref{ass1} is
\begin{equation}\label{ass1'}
u\in \dom A  \cap \,\Ker B^*, \ u \ne 0 \ \implies \ ((A-c)u,u) \ne 0;
\end{equation}  
in fact, the latter implies that \vspace{1mm} $(A-c)u \notin (\ker B^*)^\perp = \overline{\Ran B}$. 

\end{remark}


In the next subsection, we characterize the eigenvalue sequences $(\lanm)_{j=1}^{n_1}$ and $(\lanp)_{j=1}^{n_2}$ of $\CS$ by min-max variational principles.
The following proposition provides necessary information for this and, in particular, for the index shifts occurring therein; 
here it is crucial that the spectrum of $\CS$ is bounded from below and has a gap
to the right of $c$.

\begin{proposition}
\label{new-est}
Let $\H$ be a Hilbert space, $\H=\Hi\oplus\Hii$ with Hilbert spaces $\Hi$,~$\Hii$. 
Let $A$ be a self-adjoint operator in $\Hi$ with compact resolvent and bounded from below, 
$B: \Hii \to \Hi$  a  non-zero  bounded linear operator, 
$c\in\R$, and 
\[
  N(A,c)= \dim \CL_{(-\infty,c]}(A).
\]
Then the operator function $\Si$ given by \eqref{opfS} has the following properties:
\begin{enumerate}
\item[{\rm i)}] 
$\Si$ satisfies Assumptions {\rm (i)}--{\rm (iv)} of\, \cite{MR2068432} on $(-\infty,c)$ and on $(c,+\infty)$;
\item[{\rm ii)}] there exist $\al_0 \!\in\! (-\infty,c)$, $\al_1 \!\in\! (c,+\infty)$ with $(-\infty,\al_0) \!\subset\! \rho(\Si)$, $(c,\al_1) \!\subset\! \rho(\Si)$; 
\item[{\rm iii)}] the dimensions 
\begin{alignat}{2}
\nonumber
 \dim \CL_{(-\infty,0)} (\Si(\la)) & =: \kam, \quad && \la \in (-\infty,\al_0), \\ 
\label{i-shift} 
 \dim \CL_{(-\infty,0)} (\Si(\la)) & =: \kap, 
 \quad && \la \in (c,\al_1),  
\end{alignat}
are independent of $\la$ and finite with
\[
\kam=0, \quad \kap \le N(A,c);
\]
\item[{\rm iv)}]
if $\,B$~has closed range, 
$\,\PA$ denotes the orthogonal projection onto $\CL_{(-\infty,c]}(A)$ 
and $\PB$ the orthogonal projection onto $\ker B^*$, then
\begin{equation}
\label{kappa2est}
  \kap  \le \rank \big( \PA \PB \big) \le \min\{N(A,c),\dim \ker B^*\}.
\end{equation}
\end{enumerate}
\end{proposition}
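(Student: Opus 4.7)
The plan is to dispatch parts (i)--(iii) as largely routine and to concentrate on (iv), the substance of the proposition, by exhibiting a closed subspace of codimension $r:=\rank(\PA\PB)$ in $\CH$ on which the form of $\CS(\la)$ is non-negative.

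For (i), I would verify the four assumptions of \cite{MR2068432} by inspection: $\dom\CS(\la)=\dom A$ is $\la$-independent, $\CS$ is analytic on $\C\setminus\{c\}$, $\CS(\la)$ is self-adjoint for real $\la\ne c$ because both $A$ and $BB^*/(c-\la)$ are, and the form derivative $\frac{d}{d\la}(\CS(\la)u,u)=-\|u\|^2-\|B^*u\|^2/(c-\la)^2$ is strictly negative for $u\ne 0$, so $\la\mapsto(\CS(\la)u,u)$ is strictly monotone on each of $(-\infty,c)$ and $(c,+\infty)$. For (ii), the bound $\CS(\la)\ge(\nu_1(A)-\la)-\|B\|^2/|c-\la|$ yields $\al_0<c$ with $\CS(\la)>0$ on $(-\infty,\al_0)$, while the existence of $\al_1>c$ is immediate from Proposition~\ref{prop:1.1}(iv) combined with Proposition~\ref{prop:2.1}(i). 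For (iii), the $\la$-independence of the two dimensions is the standard Riesz-projection argument once $0\in\rho(\CS(\la))$; $\kam=0$ by the same estimate that gave (ii); and $\kap\le N(A,c)$ follows from $\CS(\la)\ge A-\la$ (because $BB^*/(\la-c)\ge 0$ for $\la>c$) together with min-max applied to any $\la\in(c,\nu_{N(A,c)+1}(A))$.

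The substance is in (iv). The easy bound $r\le\min\{N(A,c),\dim\ker B^*\}$ follows from $\rank(\PA\PB)\le\rank\PA$ and $\rank(\PA\PB)=\rank(\PB\PA)\le\rank\PB$. For the main inequality $\kap\le r$, the candidate is $M:=\ker(\PA\PB)$, which is closed of codimension $r$ in $\CH$. The key geometric observation is that $u\in M$ forces $\PB u\in(\Ran\PA)^\perp$, hence $\PA u=\PA(I-\PB)u$ and $\|\PA u\|\le\|(I-\PB)u\|$. Since $B$ has closed range, the open mapping theorem supplies $\beta>0$ with $\|B^*v\|\ge\beta\|v\|$ for every $v\in\overline{\Ran B}=(\ker B^*)^\perp$; applied to $v=(I-\PB)u$ this yields $\|B^*u\|^2\ge\beta^2\|\PA u\|^2$. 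Splitting $u=\PA u+(I-\PA)u$ in the $A$-invariant summands $\CL_{(-\infty,c]}(A)$ and $\CL_{(c,+\infty)}(A)$ then gives, for $\la\in(c,\nu_{N(A,c)+1}(A))$,
\[
 (\CS(\la)u,u)\ge(\nu_1(A)-\la)\|\PA u\|^2+(\nu_{N(A,c)+1}(A)-\la)\|(I-\PA)u\|^2+\frac{\beta^2\|\PA u\|^2}{\la-c},
\]
which is non-negative once $(\la-c)(\la-\nu_1(A))\le\beta^2$. Choosing $\al_1$ so small that this inequality and $\al_1\le\nu_{N(A,c)+1}(A)$ hold simultaneously, the min-max principle delivers $\dim\CL_{(-\infty,0)}(\CS(\la))\le r$, which equals $\kap$ by the $\la$-independence from (iii).

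The main obstacle is the coordination in (iv) of three independent smallness constraints on $\al_1$ (the resolvent gap from Proposition~\ref{prop:1.1}(iv), the spectral constraint $\al_1\le\nu_{N(A,c)+1}(A)$, and the quadratic constraint $(\al_1-c)(\al_1-\nu_1(A))\le\beta^2$), together with a brief check that $M\cap\dom A$ is form-dense in $M\cap\CQ(\CS(\la))$ so that the form inequality on $M\cap\dom A$ extends to the entire form domain; the latter is immediate because $BB^*/(\la-c)$ is bounded, so $\CS(\la)$ has the same form domain as $A$.
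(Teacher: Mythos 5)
Your argument is correct throughout, and for part (iv) — the only genuinely non-trivial step — you use a decomposition that differs from the one in the paper. The paper's proof splits $u=u_1+u_2\in\ker B^*\oplus\Ran B$, invokes a "factor-of-two" inequality for the quadratic form of a non-positive $2\times 2$ block operator (namely $(Tu,u)\ge 2(T_{11}u_1,u_1)+2(T_{22}u_2,u_2)$ when $T\le0$), and then counts the negative eigenvalues of the finite-rank operator $\PB\PA(A-\la)\PA\PB$, which turn out to be exactly $\rank(\PA\PB)$. By contrast, you exhibit directly the closed subspace $M=\ker(\PA\PB)\subset\CH$ of codimension $r=\rank(\PA\PB)$ on which the form of $\CS(\la)$ is non-negative for $\la$ sufficiently close to $c$, the essential geometric fact being that $u\in M$ entails $\PB u\perp\Ran\PA$ and hence $\|B^*u\|\ge\beta\|\PA u\|$, which lets the $BB^*/(\la-c)$ term absorb the negative contribution from $\PA(A-\la)\PA$. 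Your route avoids the factor-of-two manipulation and the ancillary observation that $\PB\PA(A-\la)\PA\PB$ has exactly $r$ negative eigenvalues, at the cost of the slightly more subtle intersection argument via the codimension-$r$ subspace; both give the same bound, and yours even yields a marginally wider interval of admissible $\la$ (your constraint is $(\la-c)(\la-\nu_1(A))\le\beta^2$ against the paper's $\la-c\le\beta^2/\bigl(2(\al_1-\nu_1(A))\bigr)$), though this is immaterial since $\kap$ is $\la$-independent on $(c,\al_1)$. Two small points: $M$ lives in $\Hi$, not in $\H=\Hi\oplus\Hii$ (your use of $\CH$ is ambiguous), and in (i) you should record that Assumption (iv) of \cite{MR2068432} holds because $\Si(\la)$ is a bounded symmetric perturbation of $A-\la$, hence bounded below with compact resolvent and only finitely many negative eigenvalues.
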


\begin{proof}
i) Assumptions (i) and (ii) in \cite{MR2068432} are satisfied since $\dom \Si(\la)\!=\!\dom A$ is independent of $\la$ and $\Si$ is holomorphic on $(-\infty,c)$ and on $(c,+\infty)$.
Because 
\begin{equation}
\label{schur-der}
  \frac{\d}{\d \la} (\Si(\la)u,u) = - \|u\|^2 - \frac{\|B^*u\|^2}{(c-\la)^2} \le - \|u\|^2, \quad u \in  \dom A, \, u\ne 0,
\end{equation}
we have
$\Si'(\lambda)\le -I$ for $\la \in (-\infty,c)$ and for $\la\in(c,+\infty)$. Thus also Assumption (iii) in \cite{MR2068432} is satisfied for $\Si$;
note that the values $-\infty$ and $+\infty$, respectively, of the functional $p$ in \cite[(2.3)]{MR2068432} can be replaced by the end-points $\alpha$ and~$\beta$, respectively of the interval in which the operator function is considered, here $c$ in both cases.
For fixed $\la \in \R\setminus\{c\}$, the operator $\Si(\la)$ is a bounded symmetric perturbation of the self-adjoint operator $A-\la$ which is bounded from below with compact resolvent. This shows that $\Si(\la)$ has finitely many negative eigenvalues and hence Assumption (iv) in \cite{MR2068432} is satisfied.

ii) Due to i) we can apply  \cite[Lemma~2.6]{MR2068432} which yields that the function $\la \mapsto \dim \CL_{(-\infty,0)} (\Si(\la))$ has constant values $\kam$ on $(-\infty,\al_0)$ and $\kap$ on $(c,\al_1)$ for all $\al_0 \in (-\infty,c)$, $\al_1 \in (c,+\infty)$ with $(-\infty,\al_0)\subset \rho(\CS)$, $(c,\al_1)\subset \rho(\CS)$.

Since $(\Si(\la)u,u) \to +\infty$, $u\in \dom A$, $u\ne 0$, for $\la \to - \infty$ and $\Si'(\lambda)\le -I$ for $\la \in (-\infty,c)$, there exists $\al_0 \in (-\infty,c)$ with $\Si(\la)\gg 0$ for $\la \in (-\infty,\al_0)$. Hence  
we have $(-\infty,\al_0) \!\subset\! \rho(\Si)$ and $\kam=\dim \CL_{(-\infty,0)} (\Si(\la)) = 0$ for all $\la \in (-\infty,\al_0)$.

Since $\{c\}=\sigma(C)$, Proposition \ref{prop:1.1} iv) shows that
$(c,c+\varepsilon) \subset \rho(\CA) \setminus\{c\}=\rho(\CS)$ for some $\varepsilon>0$,  without loss of generality $c+\eps < \nu_{N+1}(A)$,
and hence we can choose $\al_1=c+\eps$. 
With $\PA$ defined as in claim iv) and $\PA^\perp := I-\PA$, \vspace{-1mm}  we have  
\begin{equation}
\label{schur-index-0}
  \PA A \PA \le c P, \quad \PA^\perp  A \PA^\perp  \ge \nu_{N+1}(A) P^\perp.
\end{equation}
Then, for $\la \in (c,\al_1)$ and \vspace{-1mm} $u \in \dom A$, 
\begin{align}
\label{schur-index-1}
 (\Si(\la)u,u) & = ( \PA (A \!-\!\la) \PA u,u) \!+\! ( \PA^\perp (A \!-\!\la) \PA^\perp u,u) \!-\! \displaystyle{\frac{(BB^*u,u)}{c-\la}} \\
 &\ge ( \PA (A -\la) \PA u,u)
\nonumber               
\end{align}
since $c<\la < \al_1 <\nu_{N+1}(A)$ which implies that the last two terms in \eqref{schur-index-1} are non-negative. 
By the classical min-max variational principle, this shows that the number of negative eigenvalues of $\Si(\la)$ is less than or equal to the number of negative eigenvalues of $\PA (A -\la) \PA$. The latter is $N(A,c)$ by definition of~$\PA$ and so $\kap=\dim \CL_{(-\infty,0)} (\Si(\la)) \le N(A,c)$ for $\la \in (c,\al_1)$.

iii) In order to show the first inequality in \eqref{kappa2est}, let $\PB^\perp:= I-\PB$. 
We proceed from the equality in \eqref{schur-index-1} above, but now we keep the last term  in \eqref{schur-index-1} to obtain that,
for $\la \in (c,\al_1)$ and $u \in \dom A$, 
\begin{equation}
\label{schur-index-2}
(\Si(\la)u,u) \ge \big( \PA (A -\la) \PA u,u \big) + \frac{(BB^*u,u)}{\la-c}. 
\end{equation}

Since $\la>c$, the operator $\PA (A -\la) \PA$ is non-positive. Hence,
for $u\in \Hi$, $u=(u_1\ u_2)^{\rm t} \in \ker B^* \oplus \Ran B$, 
we can \vspace{-1mm} estimate
\begin{equation}
\label{schur-index-3}
\begin{array}{rl}
  \hspace{-4mm} \big(\PA (A-\la) \PA u,u \big) \hspace{-3mm}
  &= 
  \left(\matrix{cc}{\PB \PA (A-\la) \PA \PB & \PB \PA A \PA \PB^\perp \\ (\PB \PA A \PA \PB^\perp)^* & 
  \PB^\perp \PA (A-\la) \PA  \PB^\perp} \!\!\displaystyle \binom{u_1}{u_2}, \binom{u_1}{u_2} \!\right)\\[4mm]
  & \ge 2 \,\Big( \big( \PB \PA (A-\la) \PA \PB u_1,u_1\big) + \big( \PB^\perp \PA (A-\la) \PA  \PB^\perp u_2,u_2\big) \Big). \hspace{-7mm}
\end{array}
\end{equation}

Since $A \!\ge\! \nu_1(A)$ and $\la \!\le\! \al_1$, we have  $\PA (A-\la) \PA \!\ge\! (\nu_1(A) - \al_1) P$. 
Because $B$ has closed range by assumption and $\PB^\perp$ is the orthogonal projection onto $(\ker B^*)^\perp$, 
there exists $\beta>0$ such that $\PB^\perp B B^* \PB^\perp \ge \beta^2 $ on $\Ran B$. 
Hence,~if we choose $\la \in (c,\al_1)$ such \vspace{-2mm}that 
\begin{equation}
\label{spring}
c < \la  \le c + \frac{\beta^2}{2(\al_1-\nu_1(A))},
\vspace{-2mm}
\end{equation}
then \eqref{schur-index-2}, \eqref{schur-index-3} yield \vspace{-1mm}  that
\begin{align*}
(\Si(\la)u,u) & \ge 2 \, \big( \PB \PA (A-\la) \PA \PB u_1,u_1\big) + 2 (\nu_1(A) - \al_1) \|u_2\|^2 + \frac{\beta^2}{\la-c} \|u_2\|^2\\
&\ge 2 \, \big( \PB \PA (A-\la) \PA \PB u_1,u_1\big).
\end{align*}
Now the classical min-max variational principle shows that, for $\la$ with \eqref{spring},
the number of negative eigenvalues of $\CS(\la)$, is less than or equal to the 
number of negative eigenvalues of the negative operator $\PB \PA (A-\la) \PA \PB$, which is equal to $\rank (\PA \PB)$. 
The second inquality in \eqref{kappa2est} is obvious.
\end{proof}

\subsection{Variational principles.}
\label{subsec:2.2}

In the following we characterize the two eigenvalue sequences $(\lanm)_{j=1}^{n_1} \subset (-\infty,c)$, 
$(\lanp)_{j=1}^{n_2} \subset (c,+\infty)$ 
of $\Si$ by means of min-max variational principles. They are based on variational principles for eigenvalues of analytic operator functions established in \cite{MR2068432}, 
applied to the operator function $\Si$ in the intervals $(-\infty,c)$ and $(c,+\infty)$. 

One of the key properties of the Schur complement ensuring that the assumptions of \cite{MR2068432} are met 
is its Nevanlinna property ensuring monotonicity on $(-\infty,c)$ and $(c,+\infty)$. It guarantees the existence of (at most) one zero of 
the function $\la \mapsto (\Si(\la)u,u)$ on each of these intervals; this zero serves as a generalized Rayleigh functional in the variational principle.

\begin{lemma}
\label{prop:2.2}
Let $\H$ be a Hilbert space, $\H=\Hi\oplus\Hii$ with Hilbert spaces $\Hi$,~$\Hii$. 
Let $A$ be a self-adjoint operator in $\Hi$ with compact resolvent and bounded from below, 
$B: \Hii \to \Hi$  a  non-zero   bounded linear operator, 
and $c\in\R$. 
Then, for every $u\in \dom A$, $u \ne 0$, the function
\begin{equation}
\label{dec2}
 \varphi_u(\la) := (\Si(\la)u,u) = \big((A-\la)u,u\big) - \frac{(BB^*u,u)}{c-\la}, \quad \la \in \R\setminus \{c\},
\end{equation}
has at most one pole at $c$, 
is strictly decreasing on $(-\infty,c)$ and on $(c,+\infty)$. 
Thus~$\varphi_u$ has at most one zero $p_1(u)\!\in\!(-\infty,c)$ and at most one zero $p_2(u)\!\in\!(c,+\infty)$, given~by 
\begin{equation}
\label{ppm}
  p_{1,2} (u) = \frac{1}{2} \bigg( \frac{(Au,u)}{\|u\|^2} + c \bigg) \mp \sqrt{\frac{1}{4} \bigg( \frac{(Au,u)}{\|u\|^2} - c \bigg)^2  + \frac{\|B^{*}u\|^2}{\|u\|^2}}.
\end{equation}
\end{lemma}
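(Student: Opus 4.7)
The proof is essentially a direct computation, so I would present it in three short steps corresponding to the three assertions.

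First, I would read off from the definition \eqref{dec2} that $\varphi_u$ is a rational function of $\lambda$ with a possible simple pole at $c$; the residue there is $-(BB^*u,u) = -\|B^*u\|^2 \le 0$, so the pole is genuine exactly when $B^*u \ne 0$ and otherwise absent. This gives the first assertion immediately.

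Second, for the monotonicity, I would invoke the derivative computation already carried out in \eqref{schur-der}:
\[
  \varphi_u'(\la) = -\|u\|^2 - \frac{\|B^*u\|^2}{(c-\la)^2} \le -\|u\|^2 < 0, \quad \la \in \R \setminus \{c\},
\]
where the strict inequality uses $u \ne 0$. Hence $\varphi_u$ is strictly decreasing on each of $(-\infty,c)$ and $(c,+\infty)$, which already forces at most one zero on each of these two intervals.

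Third, to obtain the explicit formula \eqref{ppm}, I would clear the denominator: on $\R\setminus\{c\}$, $\varphi_u(\la)=0$ is equivalent to
\[
 (c-\la)\bigl((Au,u) - \la\|u\|^2\bigr) - \|B^*u\|^2 = 0.
\]
Dividing by $\|u\|^2$ and setting $a := (Au,u)/\|u\|^2$ turns this into the quadratic $\la^2 - (a+c)\la + ac - \|B^*u\|^2/\|u\|^2 = 0$, whose discriminant simplifies to $(a-c)^2 + 4\|B^*u\|^2/\|u\|^2 \ge 0$. Solving yields exactly the two candidates in \eqref{ppm} with the $\mp$ sign. When $B^*u \ne 0$ the discriminant strictly exceeds $(a-c)^2$, so $p_1(u) < \min\{a,c\} \le c \le \max\{a,c\} < p_2(u)$, confirming that each candidate lies in the correct interval; when $B^*u = 0$, one candidate equals $c$ and is a spurious root introduced by the multiplication, while the other coincides with the unique zero $\la=a$ of the then-linear function $\varphi_u$.

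There is no genuine obstacle here: the only minor subtlety is checking that the $c$-root of the cleared quadratic is spurious precisely in the degenerate case $B^*u=0$, so that the $p_{1,2}(u)$ in \eqref{ppm} really describe the (at most one) zero of $\varphi_u$ in each of the two open intervals, consistently with the monotonicity already established.
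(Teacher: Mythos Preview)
Your proof is correct and follows essentially the same approach as the paper: both use the derivative estimate \eqref{schur-der} for strict monotonicity and then clear the denominator to obtain the quadratic equation yielding \eqref{ppm}. The only cosmetic difference is that the paper verifies the location of the zeros via the limiting behaviour of $\varphi_u$ at $\pm\infty$ and at $c^\pm$, whereas you check directly from the discriminant that $p_1(u)<c<p_2(u)$ when $B^*u\ne 0$; your treatment of the degenerate case $B^*u=0$ is slightly more explicit than the paper's but amounts to the same observation.
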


\vspace{0.5mm}

\begin{remark}
i) If $B^*u\!=\!0$, then $\varphi_u$ is linear with zero $(Au,u)$; nevertheless, formula \eqref{ppm} is still meaningful and gives  
$p_1(u)\!=\!\min\{(Au,u),c\}$, $p_2(u)\!=\!\max\{(Au,u),c\}$.

ii) The functionals $p_{1,2}$ induced by the numerical range of the Schur complement $\Si$ are related to the functionals $\la_\pm$ 
induced by the quadratic numerical range of the block operator matrix $\CA$ (see \cite[(3.1) and Lemma 3.7]{MR1948455}).
\end{remark}


\begin{proof}
The claimed monotonicity is immediate from inequality \eqref{schur-der}.
If $B^*u\ne 0$, then the claim follows from the fact that the function in \eqref{dec2} has a pole at $c$ with $(\Si(\la)u,u) \to +\infty$ for $\la \to - \infty$ and $\la \searrow c$, $(\Si(\la)u,u) \to -\infty$ for $\la \nearrow c$ and $\la \to +\infty$. If $B^*u=0$, then 
the assertion is immediate from the fact that the function in \eqref{dec2} is linear with precisely one zero $(Au,u)$.

Formula \eqref{ppm} follows from the fact that $\la\in\R \setminus \{c\}$ is a zero of $\varphi_u$ if and only if it is a solution of the quadratic equation
\[
 \big((A-\la)u,u\big)(c-\la) - (BB^*u,u)=0. \qedhere
\]
\end{proof}

\begin{theorem}
\label{thm:2.3a}
Let $\H$ be a Hilbert space, $\H=\Hi\oplus\Hii$ with Hilbert spaces $\Hi$,~$\Hii$. 
Let $A$ be a self-adjoint operator in $\Hi$ with compact resolvent and bounded from below, $B: \Hii \to \Hi$ a  non-zero  bounded linear operator, and $c\in\R$.
Let $N(A,c)= \dim \CL_{(-\infty,c]} \le \dim\Hi$ and let the eigenvalues $(\nu_{j}(A))_{j=1}^{\dim\Hi}$ of~$A$, counted with multiplicities, be ordered non-decreasingly, i.e.
\[
\nu_1(A) \le \dots \le \nu_{N(A,c)}(A) \le c < \nu_{N(A,c)+1}(A) \le \cdots.
\]
Then the spectrum of \,$\CS$ consists of two $($finite or infinite$)$ eigenvalue sequences $(\lanm)_{j=1}^{n_1} \subset (-\infty,c)$ 
and $(\lanp )_{j=1}^{n_2} \subset (c,+\infty)$, $n_1$, $n_2 \in \N_0 \cup \{\infty\}$,
which can be characterized as
\begin{align}
\label{minmax-}
\lanm &= \min_{\ontops{\CL \subset \dom A}{\dim \CL = j}} \,\max_{\ontops{u \in \CL}{u\ne 0}} \ p_1(u),  \quad j=1,2,\dots, n_1,\\
\label{minmax+}
\lanp &= \!\!\!\!\min_{\ontops{\CL \subset \dom A}{\dim \CL = j+\kap }} \!\!\max_{\ontops{u \in \CL}{u\ne 0}} \ p_2(u), \quad j=1,2,\dots, n_2.
\end{align}  
Here $p_{1}$, $p_{2}$ are the functionals in \eqref{ppm} and the index shift $\kap$ defined in \eqref{i-shift} satisfies
$\kap \le N(A,c)$; in particular, $\kap=0$ if $A>c$. Moreover, 
\begin{enumerate}
\item[{\rm i)}] if $\,B$ has closed~range 
and $\PA$ and $\PB$ are the orthogonal projections onto $\CL_{(-\infty,c]}(A)$ and $\ker B^*\!\!$, respectively, then
\[
\kap  \le \rank \big( \PA \PB \big) \le \min\{N(A,c),\dim \ker B^*\};
\]
\item[{\rm ii)}] $n_2\!=\!\infty$ if and only if $\,\dim \Hi \!=\! \infty$; in this case, $(\lanp )_{j=1}^{\infty} \!\subset\! (c,+\infty)$ accumu\-lates $($only$)$ to~$+\infty$;
\item[{\rm iii)}] $n_1\!=\!\infty$ if and only if $\,\dim \ov{\Ran B} \!=\! \infty$; 
in this case, $(\lanm )_{j=1}^{\infty} \!\subset\! (-\infty,c)$ accumulates $($only$)$ to $c$ from the left.
\end{enumerate}
\end{theorem}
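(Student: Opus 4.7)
The plan is to invoke the general min-max variational principle for self-adjoint analytic operator functions of \cite{MR2068432} separately on each of the intervals $(-\infty,c)$ and $(c,+\infty)$. All the prerequisites have already been assembled: Proposition \ref{new-est} verifies Assumptions (i)--(iv) of \cite{MR2068432} on both intervals, gives the trivial left index shift $\kam=0$ and the bound $\kap \le N(A,c)$ on the right, while Lemma \ref{prop:2.2} supplies the generalized Rayleigh functionals $p_1$, $p_2$ together with the explicit formula \eqref{ppm}. Feeding these into the abstract theorem delivers the two min-max identities \eqref{minmax-} and \eqref{minmax+} at once. Assertion (i) is then nothing but a restatement of Proposition \ref{new-est}~iv).

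For assertion (ii) my plan is to compare $p_2$ with the ordinary Rayleigh quotient of $A$. Vieta's formulas applied to \eqref{ppm} give $p_1(u)+p_2(u)=(Au,u)/\|u\|^2+c$, and combining this with $p_1(u)\le c$ (from Lemma \ref{prop:2.2} and the remark after it, which handles the degenerate case $B^*u=0$) yields $p_2(u)\ge (Au,u)/\|u\|^2$ for every $u\in\dom A$, $u\ne 0$. Substituting into \eqref{minmax+} and invoking the classical variational principle for $A$ produces $\lanp\ge \nu_{j+\kap}(A)$. Since $A$ has compact resolvent, $\dim\Hi=\infty$ forces $\nu_j(A)\to+\infty$, so $n_2=\infty$ and $\lanp\to+\infty$; the converse implication is already recorded in \eqref{inf-or-not}.

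The main obstacle is assertion (iii), which calls for a reduction to the injective case of $B$. Here my plan is to use the closed subspace $\CM:=\{0\}\oplus\ker B \subset \H$. One checks that $\CM\subset\dom\CA$ and that it reduces $\CA$, with $\CA|_\CM=cI_\CM$; its orthogonal complement $\CM^\perp=\Hi\oplus\ov{\Ran B^*}$ is therefore invariant under $\CA$, and the restriction $\CA_0:=\CA|_{\CM^\perp}$ has exactly the form \eqref{eq:A} with $\Hii$ replaced by $\ov{\Ran B^*}$ and $B$ by $B_0:=B|_{\ov{\Ran B^*}}$, which is injective by construction. Polar decomposition gives $\dim\ov{\Ran B^*}=\dim\ov{\Ran B}$, and since $\sigma(\CA|_\CM)=\{c\}$ the eigenvalues of $\CA$ (and hence of $\CS$, by Proposition \ref{prop:2.1} i)) in $(-\infty,c)$ coincide with those of $\CA_0$.

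Within this reduction the two implications in (iii) are immediate. If $\dim\ov{\Ran B}<\infty$, then \eqref{inf-or-not} applied to $\CA_0$ forces $n_1<\infty$. If $\dim\ov{\Ran B}=\infty$, then Proposition \ref{prop:1.3a} applied to $\CA_0$ (using that $B_0$ is injective and $N(A,c)<\infty$) guarantees that $c$ is not an eigenvalue of infinite multiplicity of $\CA_0$, so Proposition \ref{inf-at-c} supplies the desired accumulation of eigenvalues of $\CA_0$, and hence of $\CS$, at $c$ from the left. The delicate step is verifying that the reduction to $\CM^\perp$ really preserves the spectral data on $(-\infty,c)$ and that the hypotheses of Propositions \ref{prop:1.3a} and \ref{inf-at-c} transfer to $\CA_0$; once this is in place the remaining assertions follow without further work.
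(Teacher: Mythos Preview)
Your argument for the variational principles \eqref{minmax-}, \eqref{minmax+} and for claim i) is exactly the paper's: invoke Proposition~\ref{new-est} to verify the hypotheses of \cite{MR2068432} on each interval, read off $\kam=0$ and $\kap\le N(A,c)$, and quote Proposition~\ref{new-est}~iv) for the refined bound on~$\kap$.

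For ii) and iii), however, your route diverges from the paper's. For ii) the paper simply cites \eqref{inf-or-not} (which already records the full equivalence), whereas you insert the Vieta-based inequality $p_2(u)\ge (Au,u)/\|u\|^2$; this is correct and in fact anticipates the lower bound in Theorem~\ref{thm:2.3}, though to turn it into ``$n_2=\infty$'' you implicitly use the converse direction of the abstract min-max theorem (that a finite min-max value in the interval is necessarily an eigenvalue). For iii) the paper argues directly: if $\dim\ov{\Ran B}<\infty$, then $\CA$ is a finite-rank perturbation of $\mathrm{diag}(A,c)$ and hence has only finitely many eigenvalues below $c$; if $\dim\ov{\Ran B}=\infty$, one constructs for each $j$ a $j$-dimensional subspace $\CL_j\subset\dom A$ with $\CL_j\cap\ker B^*=\{0\}$ (by approximating an orthonormal system in $\ov{\Ran B}$ by elements of $\dom A$), so that $p_1<c$ on $\CL_j$ by compactness, and then \eqref{minmax-} gives at least $j$ eigenvalues below $c$. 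Your reduction $\H=(\{0\}\oplus\ker B)\oplus(\Hi\oplus\ov{\Ran B^*})$ is a genuinely different and perfectly valid alternative: it cleanly recycles Propositions~\ref{inf-at-c} and~\ref{prop:1.3a} and explains conceptually why $\dim\ov{\Ran B}$ (rather than $\dim\Hii$) is the right invariant. The paper's direct construction, on the other hand, avoids the reduction apparatus and shows more transparently how the newly established variational principle \eqref{minmax-} itself yields the accumulation at~$c$.
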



\begin{proof}
Proposition~\ref{prop:2.1} i) yields that $\sigma_{\rm p}(\CS) = \sigma_{\rm p}(\CA) \setminus \{c\}$. 
Moreover, $n_1 + n_2 + \dim \CL_{\{c\}}(\CA) = \dim \Hi + \dim \Hii$.

By Proposition \ref{new-est} i), the Schur complement $\Si$ satisfies Assumptions (i) to (iv) of \cite[Theorem~2.1]{MR2068432} on each of the two intervals $(-\infty,c)$ and $(c,+\infty)$. 
Hence the variational characterizations \eqref{minmax-} and \eqref{minmax+} follow from \cite[(2.9]{MR2068432} applied on $(-\infty,c)$ and $(c,+\infty)$, respectively; the claims for the index shifts $\kam$ in $(-\infty,c)$ and $\kap$ in $(c,+\infty)$, in particular i), follow from Proposition \ref{new-est} ii).
%


Claim ii) was proved in \eqref{inf-or-not}. For the proof of claim iii), we first suppose that $\,\dim \ov{\Ran B} \!<\! \infty$. 
Then $\CA$ is a finite rank perturbation of diag$\,(A,c)$ and hence there are at most finitely many eigenvalues in $(-\infty,c)$, i.e.\ $n_1<\infty$.
For the converse,  
assume that $\dim\ov{\Ran B}\!=\!\infty$. If $x\notin\ker B^* = \ker (BB^*)$, then
\begin{equation}\label{inf}
\lim_{\la \nearrow c} (S(\la)x,x) = -\infty.
\end{equation}
Now let $j\in\mathbb N$ be arbitrary.
Then there exists a subspace  $\CL_{j} \subset \dom A$ with $\dim \CL_{j} = j$ and $\CL_{j} \cap \ker B^* = \{0\}$.
In fact, since $\ov{\Ran B}$ is infinite dimensional by assumption, there exists a subspace  $\CL_{j}' \subset \ov{\Ran B}$ with
$\dim \CL_{j}' = j$. Let $\{e_1', e_2', \dots, e_{j}'\}$ be an orthonormal basis of $\CL_{j}'$ and $\eps \in (0,1/2)$. As
$\dom A$ is dense in $\Hi$, there exist  $\{e_1, e_2, \dots, e_{j}\} \subset \dom A$ such that $\|e_k-e_k'\|< \eps$, $k=1,2,\dots, j$.
Then
\begin{equation}
\label{close}
  | (e_i, e_k')| \le \de_{ik} + \eps, \quad i,k=1,2,\dots,j,
\end{equation}
where $\de_{ik}$ is the Kronecker symbol. Since $\eps<1/2$, it is easy to see that \eqref{close} implies that $\{e_1, e_2, \dots, e_{j}\}$ are linearly independent and that 
$\CL_{j}:= {\rm span}\,  \{e_1, e_2, \dots, e_{j}\} \cap \ker B^* = {\rm span}\,  \{e_1, e_2, \dots, e_{j}\} \cap (\ov{\Ran B})^\perp = \{0\}$.

Since $\CL_{j} \cap \ker B^* =\{0\}$ and $\CL_{j}$ is finite dimensional, whence closed, relation \eqref{inf} 
implies that there exists $\la_0\!<\!c$ such that $(\CS(\la_0)x,x)\!<\!0$, $x\in\CL_{j}\setminus\{0\}$. The latter means that 
\[
   p_1(x)<\la_0,\quad x\in\CL_{j}\setminus\{0\}.
\]
It follows that $\max_{x\in\CL_{j}\setminus\{0\}}p_1(x)<\la_0$. Now the variational principle \eqref{minmax-}
implies that $\la_{j} <\la_0$. Therefore there are at least $n$ eigenvalues of $\CS$ below $c$. Since $n$ was arbitrary, 
there must be infinitely many eigenvalues of $\CS$ below $c$.
%
\end{proof}

%
%


\begin{remark}
\label{formdomain}
The variational principles \eqref{minmax-}, \eqref{minmax+} continue to hold if we replace the domain $\dom A$ of the operator $A$ therein by the form domain $\dom \mathfrak{a} = \dom |A|^{1/2}$ of the quadratic form $\mathfrak{a}$ associated with $A$, i.e.\ $\mathfrak{a}[u]:=(Au,u)$ (see \cite[Theorem~3.1, Lemma 3.5]{MR2077211}); 
note that the functionals $p_{1,2}(u)$ in \eqref{ppm} are defined for all $u \in \dom \mathfrak{a}$.
\end{remark}

\begin{remark}
Propositions \ref{inf-at-c} and \ref{prop:1.3a} together imply that if ${\rm dim}\,\Hii = \infty$ and $B$ is injective, then $c$ is an accumulation point of eigenvalues 
of $\CA$, and hence of~$\CS$, from the left. 
Theorem \ref{thm:2.3a} iii) shows that $\dim \ov{\Ran B} = \infty$ ensures eigenvalue accumulation at $c$ from the left. 

The latter result is stronger because ${\rm dim}\,\Hii \!=\! \infty$ and $B$ injective imply that $\dim \ov{\Ran B} \!=\! \infty$. Otherwise, if $\dim \ov{\Ran B} \!<\! \infty$, there exist $n_B\in \N$ and $\wh b_i \!\in\! \Hii$, $b_i \!\in\! \Hi$, $i\!=\!1,2,\dots,n_B$, with $B\!=\! \sum_{i=1}^{n_B} (\cdot, \wh b_i) b_i$. Since 
${\rm dim}\,\Hii \!=\! \infty$, there exists $\wh u_0 \!\in\! \Hii \setminus \{0\}$, $\wh u_0 \perp \wh b_i$, $i\!=\!1,2,\dots,n_B$. Then $B \wh u_0 \!=\! 0$, a contradiction to $B$ being injective.
\end{remark}

%

The estimate for the index shift $\kap$ in Theorem \ref{thm:2.3a} i) is of special interest for numerical approximations 
where infinite dimensional spaces have to be replaced by finite dimensional~ones (see Section \ref{sec:Galerkin}).

\begin{remark}
\label{remark:shift}
If $\Hi$, $\Hii$ are finite dimensional, then  $\dim \ker B^* \!= \dim \Hi - \dim \Ran B^* \!= \dim \Hi - \dim (\ker B)^\perp \ge \dim \Hi- \dim\Hii$.
Hence the condition $\dim \Hi- \dim\Hii \ge N(A,c)$ ensures that $\min\{N(A,c),\dim \ker B^*\}=N(A,c)$.

So to obtain correct information on the index shift $\kap$ of an infinite dimensional problem in $\Hi \oplus \Hii$ with $\dim \ker B^* = \infty$ 
using finite-dimensional approximations in $\Hi_M \oplus \Hii_M$, one should ensure that $\dim \Hi_M -\dim\Hii_M \ge N(A_M,c)$ where
$N(A_M,c)$ is the number of eigenvalues of the corresponding operator $A_M$ in $\Hi_M$ that are $\le c$. 
\end{remark}


\subsection{Two-sided eigenvalue estimates.}
\label{subsec:2.3}

Next we show that the min-max variational principles in Theorem~\ref{thm:2.3a} provide two-sided estimates for all the eigenvalues 
of $\Si$.
The bounds are expressed in terms of the eigenvalues of the left upper entry $A$ of $\CA$ and the norm of the off-diagonal entry~$B$. 

\begin{theorem}
\label{thm:2.3}
Let $\H$ be a Hilbert space, $\H=\Hi\oplus\Hii$ with Hilbert spaces $\Hi$,~$\Hii$. 
Let $A$ be a self-adjoint operator in $\Hi$ with compact resolvent and bounded from below, $B: \Hii \to \Hi$ a  non-zero  bounded linear operator, and $c\in\R$.
Let $N(A,c)= \dim \CL_{(-\infty,c]} \le \dim\Hi$ and let the eigenvalues $(\nu_{j}(A))_{j=1}^{\dim\Hi}$ of~$A$, counted with multiplicities, be ordered non-decreasingly, i.e.
\begin{equation}
\label{defN}
\nu_1(A) \le \dots \le \nu_{N(A,c)}(A) \le c < \nu_{N(A,c)+1}(A) \le \cdots.
\end{equation}
Then the eigenvalues $(\lanm)_{j=1}^{n_1}\! \subset\! (-\infty,c)$, 
$(\lanp )_{j=1}^{n_2}\! \subset\! (c,+\infty)$, $n_1, n_2 \!\in\! \N_0 \!\cup\! \{\infty\}$, of $\CS$ 
satisfy the two-sided estimates
\begin{equation}
\label{twosidedestimates}
  \lanmL \le \lanm \le \lanmU, \quad 
  \la_{2,j+\kap }^L \le \lanp \le \la_{2,j+\kap }^U
\end{equation}
for $j=1,2,\dots,n_1$ and $j=1,2,\dots, n_2-\kap$, respectively, where
\begin{align}
\label{upbound-}
\lanmL &:= \frac{\nu_{j}(A)+c}2
  -\sqrt{\biggl(\frac{\nu_{j}(A)-c}2\biggr)^{\!2}\!\!+\|B\|^2}, \\ 
\label{lobound-}
\lanmU &:= \frac{\nu_{j}(A)+c}2
  -\sqrt{\biggl(\frac{\nu_{j}(A)-c}2\biggr)^{\!2}\!\!+\!\min\sigma(BB^*)} \le \min\{ \nu_{j}(A), c\}
\vspace{-1mm}  
\end{align}
\vspace{-1mm}and 
\begin{align}
\label{upbound+}
\hspace{-2mm}
\la_{2,j+\kap }^L &\!\!:=\! \frac{\nu_{j+\kap\!}(A)\!+\!c}2
  \!+\!\sqrt{\!\biggl(\frac{\nu_{j+\kap\!}(A)\!-\!c}2\biggr)^{\!2}\!\!\!+\!\min\sigma(BB^*)} \ge \max\{ \nu_{j+\kap}(A),c\}, \hspace{-2mm}\\
\label{lobound+}
\hspace{-2mm}
\la_{2,j+\kap }^U &\!\!:=\! \frac{\nu_{j+\kap\!}(A)\!+\!c}2
  \!+\!\sqrt{\!\biggl(\frac{\nu_{j+\kap\!}(A)\!-\!c}2\biggr)^{\!2}\!\!\!+\|B\|^2}. 
\end{align}
Here $\kap$ and $n_1$, $n_2$ satisfy 
\begin{enumerate}
\item[{\rm i)}] $\kap \le N$, and $\kap \le \min\{N(A,c),\dim \ker B^*\}$ if $B$ has closed range;  
\item[{\rm ii)}] $n_2 = \infty$ if and only if $\dim \Hi=\infty$;
\item[{\rm iii)}] $n_1=\infty$ if and only if $\,\dim \ov{\Ran B} \!=\! \infty$;
\item[{\rm iv)}] $n_1 + n_2 + \dim \CL_{\{c\}}(\CA) = \dim \Hi + \dim \Hii$, \\
$n_1 \ge N(A,c)$, $n_2 \ge \dim \Hi -N(A,c)$.
\end{enumerate}
\end{theorem}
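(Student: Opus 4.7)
The plan is to combine the min-max variational principles \eqref{minmax-}, \eqref{minmax+} of Theorem~\ref{thm:2.3a} with a monotonicity analysis of the Rayleigh functionals \eqref{ppm}. Writing $p_1(u) = f(\alpha(u),\beta(u))$ and $p_2(u) = g(\alpha(u),\beta(u))$ with $\alpha(u) := (Au,u)/\|u\|^2$, $\beta(u) := (BB^*u,u)/\|u\|^2$, and
\[
 f(\alpha,\beta) = \tfrac{\alpha+c}{2} - \sqrt{\bigl(\tfrac{\alpha-c}{2}\bigr)^{2}+\beta},\qquad
 g(\alpha,\beta) = \tfrac{\alpha+c}{2} + \sqrt{\bigl(\tfrac{\alpha-c}{2}\bigr)^{2}+\beta},
\]
a direct partial-derivative computation shows that $f$ is nondecreasing in $\alpha$ and nonincreasing in $\beta$, while $g$ is nondecreasing in both arguments. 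Combined with the spectral-theorem bound $\min\sigma(BB^*)\le\beta(u)\le\|B\|^{2}$ applied to $BB^*\ge 0$, this yields the pointwise estimates
\[
 f\bigl(\alpha(u),\|B\|^{2}\bigr)\le p_1(u)\le f\bigl(\alpha(u),\min\sigma(BB^*)\bigr),
\]
\[
 g\bigl(\alpha(u),\min\sigma(BB^*)\bigr)\le p_2(u)\le g\bigl(\alpha(u),\|B\|^{2}\bigr),
\]
valid for every $u\in\dom A\setminus\{0\}$.

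Substituting these inequalities into \eqref{minmax-} and \eqref{minmax+} and exploiting that monotone functions commute with $\min$ and $\max$, together with the classical characterization $\nu_j(A)=\min_{\dim\CL=j}\max_{u\in\CL\setminus\{0\}}\alpha(u)$, immediately produces \eqref{upbound-}--\eqref{lobound+}. The subsidiary inequalities $\lanmU\le\min\{\nu_j(A),c\}$ and $\la_{2,j+\kap}^{L}\ge\max\{\nu_{j+\kap}(A),c\}$ then follow by evaluating $f(\alpha,0)=\min\{\alpha,c\}$ and $g(\alpha,0)=\max\{\alpha,c\}$ and invoking monotonicity in $\beta$.

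Claims (i)--(iii) are direct consequences of Theorem~\ref{thm:2.3a}. For (iv), the identity $n_1+n_2+\dim\CL_{\{c\}}(\CA)=\dim\Hi+\dim\Hii$ is a dimension count: by Proposition~\ref{prop:2.1} one has $\sigma(\CA)\setminus\{c\}=\sigma(\CS)$, and the spectral projections of the self-adjoint operator $\CA$ onto $(-\infty,c)$, $\{c\}$, $(c,+\infty)$ sum to the identity on $\Hi\oplus\Hii$. The lower bound $n_1\ge N(A,c)$ holds because for $j=1,\dots,N(A,c)$ we have $\nu_j(A)\le c$, hence $\lanmU\le\min\{\nu_j(A),c\}\le c$, so \eqref{minmax-} delivers an eigenvalue of $\CS$ in $(-\infty,c)$ at each such index. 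Dually, since $\la_{2,j+\kap}^{L}\ge c$ for every admissible $j$ and the admissible range reaches up to $\dim\Hi-\kap$, the variational characterization in $(c,+\infty)$ supplies at least $\dim\Hi-\kap\ge\dim\Hi-N(A,c)$ eigenvalues there. The main obstacle will be the boundary bookkeeping at $\la=c$: strict versus non-strict inequalities must be tracked to ensure that the min-max values actually correspond to eigenvalues of $\CS$ and are not absorbed into the pole; this matters precisely when $\min\sigma(BB^*)=0$ or $\nu_j(A)=c$.
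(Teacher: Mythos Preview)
Your proposal is correct and follows essentially the same route as the paper: both exploit the monotonicity of the functions
\[
(s,\beta)\mapsto \tfrac{s+c}{2}\pm\sqrt{\bigl(\tfrac{s-c}{2}\bigr)^{2}+\beta}
\]
in each variable, combine this with the bounds $\min\sigma(BB^*)\le \|B^*u\|^2/\|u\|^2\le\|B\|^2$ and the classical min--max characterization of $\nu_j(A)$, and feed the result into the variational principles \eqref{minmax-}, \eqref{minmax+} of Theorem~\ref{thm:2.3a}. Your treatment of (iv) is also the paper's (the paper invokes the estimates \eqref{dec1a}, \eqref{dec1d}, which are just reformulations of the bounds you use), and the boundary subtlety at $\la=c$ that you flag---namely that $\lanmU$ may equal $c$ when $\nu_j(A)=c$ and $\min\sigma(BB^*)=0$---is not addressed in the paper's proof either.
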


\begin{proof}
In order to prove \eqref{twosidedestimates}, we use the variational characterizations \eqref{minmax-}, \eqref{minmax+} of $\lanm$, $\lanp$ proved in Theorem \ref{thm:2.3} with $p_{1,2}$ as defined in \eqref{ppm}; note that for $(\lanm)_{j=1}^{n_1}$ the index shift is $0$ by Theorem~\ref{thm:2.3}.
 
For estimating $p_{1,2}$ we note that the functions 
\begin{align}
\label{mon1}
f(s,\beta)&:=\frac{s+c}2+\sqrt{\Big(\frac{s-c}2\Big)^2+\beta^2}, \quad s \in \R, \ \beta \in [0,+\infty), \\
\label{mon2}
 g(s,\beta)&:=\frac{s+c}2-\sqrt{\Big(\frac{s-c}2\Big)^2+\beta^2}, \quad s \in \R, \ \beta \in [0,+\infty),
\end{align}
are both increasing in $s$; moreover, $f$ is increasing in $\beta$, while $g$ is decreasing in $\beta$.


These monotonicity properties, together with the classical min-max characterization of the eigenvalues $\nu_j(A)$ and 
the two-sided bounds 
\[
 0 \le \min\sigma(BB^*) \le \frac{\|B^*u\|^2}{\|u\|^2} \le \|B^*\|^2 = \|B\|^2, \quad u\ne 0,
\] 
yield the claimed estimates \eqref{twosidedestimates}; note that the leftmost bound 
yields the estimates in \eqref{lobound-}, \eqref{upbound+} since \vspace{-1mm} e.g.
\[
\lanmU \le \frac{\nu_{j}(A)+c}2  -\sqrt{\biggl(\frac{\nu_{j}(A)-c}2\biggr)^{\!2}} = \min\{ \nu_{j}(A), c\}. 
\vspace{-1mm}
\]

Claim i) follows from Proposition \ref{new-est}. Claims ii), iii) were proved in Theorem \ref{thm:2.3a} ii), iii). The first claim in iv) is obvious; 
the lower bounds for $n_1$ and $n_2$ follow from the upper estimate in \eqref{dec1a} and the lower estimate in \eqref{dec1d}.
\end{proof}

The following corollary shows how to obtain two-sided computable bounds for the eigenvalues $(\lanm)_{j=1}^{n_1}$ and $(\lanp)_{j=1}^{n_2}$ of $\CA$. 
It is an immediate consequence of Theorem \ref{thm:2.3} and of the monotonicity of the functions $f$, $g$ in \eqref{mon1}, \eqref{mon2}.

\begin{corollary}
\label{boundsmonotonic}
Suppose that, for some $\wt N \le \dim \Hi$, the first $\wt N$ eigenvalues of $A$ admit two-sided estimates 
\begin{equation}
\label{twosided-computable}
\nu_{j}^L(A) \le \nu_{j}(A) \le \nu_{j}^U(A), \quad j=1,2,\dots, \wt N.
\end{equation}
Then
\begin{equation}
\label{twosidedestimates-computable}
  \wh \la_{1,j}^L \le \lanm \le \wh \la_{1,j}^U, \quad 
  \wh \la_{2,j+\kap }^L \le \lanp \le \wh \la_{2,j+\kap }^U, \quad j=1,2,\dots, \wt N,
\end{equation}
where $\wh \la_{1,j}^L$, $\wh \la_{2,j}^L$ are obtained from $\lanmL$, $\lanpL$ by replacing $\nu_{j}(A)$ by its lower bound $\nu_{j}^L(A)$, while
$\wh \la_{1,j}^U$, $\wh \la_{2,j}^U$  are obtained from $\lanmU$, $\lanpU$ by replacing $\nu_{j}(A)$ by its upper bound $\nu_{j}^U(A)$.
\end{corollary}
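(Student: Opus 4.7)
The plan is to deduce the corollary directly from Theorem~\ref{thm:2.3}, by exploiting the monotonicity of the auxiliary functions $f(s,\beta)$ and $g(s,\beta)$ in \eqref{mon1}, \eqref{mon2} with respect to their first argument $s$. The crucial observation is that the bounds in \eqref{upbound-}--\eqref{lobound+} have the structural form
\[
\lanmL = g(\nu_j(A),\|B\|), \qquad \lanmU = g(\nu_j(A),\sqrt{\min\sigma(BB^*)}),
\]
\[
\la_{2,j+\kap}^L = f(\nu_{j+\kap}(A),\sqrt{\min\sigma(BB^*)}), \qquad \la_{2,j+\kap}^U = f(\nu_{j+\kap}(A),\|B\|),
\]
where both $f(\cdot,\beta)$ and $g(\cdot,\beta)$ are strictly increasing in $s$ (a fact already recorded and used in the proof of Theorem~\ref{thm:2.3}). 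Once this is in place, passing from two-sided estimates on $\nu_j(A)$ to two-sided estimates on $\lanm$ and $\lanp$ reduces to a single substitution.

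Concretely, I would proceed in two symmetric steps. For the lower cluster, substitute $\nu_j^L(A)\le\nu_j(A)$ into $g(\cdot,\|B\|)$ to obtain $\wh\la_{1,j}^L \le \lanmL$, and then chain with $\lanmL \le \lanm$ from \eqref{twosidedestimates} to conclude the lower inequality; dually, substitute $\nu_j(A)\le\nu_j^U(A)$ into $g(\cdot,\sqrt{\min\sigma(BB^*)})$ and combine with $\lanm \le \lanmU$ to obtain the upper inequality. The analogous argument with $f$ in place of $g$ and index $j+\kap$ in place of $j$ yields the two-sided estimates for $\lanp$, using the monotonicity of $f$ in its first variable.

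There is essentially no analytic obstacle; the corollary is purely a transport of inequalities through two monotone functions. The only point worth flagging is an indexing subtlety: the second chain in \eqref{twosidedestimates-computable} requires a priori bounds on $\nu_{j+\kap}(A)$, and hence it should be read under the implicit restriction $j+\kap \le \wt N$. Subject to this range issue, the proof is a direct two-line consequence of Theorem~\ref{thm:2.3} and the monotonicity in $s$ of the functions defined in \eqref{mon1}, \eqref{mon2}.
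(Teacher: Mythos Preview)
Your proposal is correct and follows exactly the same approach as the paper: the paper's proof is a one-line remark that the corollary is ``an immediate consequence of Theorem~\ref{thm:2.3} and of the monotonicity of the functions $f$, $g$ in \eqref{mon1}, \eqref{mon2}'', and you have simply spelled this out. Your observation about the indexing subtlety ($j+\kap \le \wt N$ is needed for the second chain) is a valid caveat that the paper does not make explicit.
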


\begin{remark}
If the two-sided bounds for $\nu_{j}(A)$ in \eqref{twosided-computable} are computable, then so are the two-sided bounds for $\lanm$, $\lanp$ in \eqref{twosidedestimates-computable}.
Computable upper bounds for $\nu_{j}(A)$ may be obtained by the classical min-max variational principle (see e.g. \cite{MR738929}),
e.g.\ from Galerkin approximations since, for every $\wt N$-dimensional subspace $\Hi_{\wt N}$ of $\dom A \subset \Hi$, 
\[
 \nu_{j}(A) = \min_{\ontops{\CL \subset \dom A}{\dim \CL = j}} \,\max_{\ontops{u \in \CL}{u\ne 0}} \ (Au,u) 
        \le \min_{\ontops{\CL \subset \Hi_{\wt N}}{\dim \CL = j}} \,\max_{\ontops{u \in \CL}{u\ne 0}} \ (Au,u) =: \nu_{j}^U(A),  
            \quad j=1,2,\dots, \wt N.
\]
\end{remark}

\begin{proposition}
\label{beh-bounds}
Let the assumptions of Theorem {\rm \ref{thm:2.3}} hold, set $\nu_0(A):=-\infty$, \vspace{-2mm} and 
\begin{equation}
\label{arctan}
\de_B(t):= \|B\| \tan \left( \frac 12 \arctan \frac{2\|B\|}{|t-c|} \right), \quad t\in (-\infty,+\infty).
\end{equation}
Then the eigenvalues $(\lanm)_{j=1}^{n_1} \!\subset\! (-\infty,c)$ of $\CA$ 
satisfy the estimates
\begin{alignat}{3}
\label{dec1a}
\hspace{6mm} \nu_{j}(A) - \de_B(\nu_{j}(A)) & \le \lanm && \le \nu_{j}(A), \quad && j=1,2,\dots,N(A,c), \\
\label{dec1b}
\hspace{6mm} c - \de_B(\nu_{j}(A)) & \le \lanm && \le c, \quad && j=N(A,c)+1,\dots,n_1, 
\end{alignat}
and the eigenvalues  $(\lanp)_{j=1}^{n_2} \!\subset\! (\nu_1(A),+\infty)$ of $\CA$ 
satisfy
\begin{alignat}{3}
\label{dec1c}
c & \le \lanp && \le c + \de_B(\nu_{j+\kap }(A)) , \quad && j\!=\!1,2,\dots,N(A,c)-\kap,\\
\label{dec1d}
\nu_{j+\kap }(A) & \le \lanp && \le \nu_{j+\kap }(A) +  \de_B(\nu_{j+\kap }(A)), \quad &&  j\!=\!N(A,c)\!-\!\kap\!+\!1,\dots,n_2\!-\!\kap.
\end{alignat}
\end{proposition}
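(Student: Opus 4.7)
The plan is to recognize that \eqref{dec1a}--\eqref{dec1d} are nothing but the two-sided bounds of Theorem~\ref{thm:2.3} rewritten in a more geometric form, once $\de_B$ is expressed without trigonometric functions.

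The first step I would carry out is an elementary simplification of $\de_B$. Using the half-angle identity $\tan(\tfrac12 \arctan x) = (\sqrt{1+x^2}-1)/x$ for $x>0$ with $x = 2\|B\|/|t-c|$, one obtains for $t \ne c$
\[
\de_B(t) = \tfrac{1}{2}\bigl(\sqrt{(t-c)^2 + 4\|B\|^2} - |t-c|\bigr),
\]
and this extends continuously at $t=c$ (with value $\|B\|$) and at $t=-\infty$ (with value $0$, consistent with the convention $\nu_0(A):=-\infty$). With this closed form, matching $\de_B$ against $\lanmL, \lanmU, \la_{2,j+\kap }^L, \la_{2,j+\kap }^U$ is a direct computation.

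Next I would split into the two natural cases for the first sequence. If $1 \le j \le N(A,c)$, then $\nu_{j}(A) \le c$, so $|\nu_{j}(A)-c|=c-\nu_{j}(A)$, and inserting the closed form of $\de_B$ shows that $\nu_{j}(A) - \de_B(\nu_{j}(A))$ equals exactly $\lanmL$; the upper bound $\lanm \le \nu_{j}(A)$ is already contained in the inequality $\lanmU \le \min\{\nu_{j}(A), c\}$ from Theorem~\ref{thm:2.3}. If instead $N(A,c) < j \le n_1$, then $\nu_{j}(A) > c$ and the parallel computation identifies $c - \de_B(\nu_{j}(A))$ with $\lanmL$, while $\lanm < c$ is built into $\lanm \in (-\infty,c)$.

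For \eqref{dec1c}--\eqref{dec1d} I would repeat the same case split, now with $\nu_{j+\kap}(A)$ and the upper bound $\la_{2,j+\kap}^U$. When $j+\kap \le N(A,c)$, one checks $c + \de_B(\nu_{j+\kap}(A)) = \la_{2,j+\kap}^U$, and the lower bound $\lanp > c$ is automatic; when $j+\kap > N(A,c)$, one checks $\nu_{j+\kap}(A) + \de_B(\nu_{j+\kap}(A)) = \la_{2,j+\kap}^U$, and the lower bound $\lanp \ge \nu_{j+\kap}(A)$ comes from $\la_{2,j+\kap}^L \ge \max\{\nu_{j+\kap}(A), c\}$ in Theorem~\ref{thm:2.3}. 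The only real hazard is bookkeeping the case split consistently across the four regimes; the algebraic identifications themselves are routine, and no input is needed beyond the half-angle identity and Theorem~\ref{thm:2.3}.
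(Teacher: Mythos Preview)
Your proposal is correct and follows essentially the same route as the paper: both reduce everything to Theorem~\ref{thm:2.3} together with the algebraic identity $\la_{1,j}^L = \min\{\nu_j(A),c\} - \de_B(\nu_j(A))$ and $\la_{2,j+\kap}^U = \max\{\nu_{j+\kap}(A),c\} + \de_B(\nu_{j+\kap}(A))$. The only difference is cosmetic: the paper cites an external lemma (\cite[Lemma~5.1 and (5.1)]{CT09}) for this identity, whereas you derive it in-line via the half-angle formula $\tan(\tfrac12\arctan x)=(\sqrt{1+x^2}-1)/x$; your version is slightly more self-contained but otherwise identical in substance.
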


\begin{proof} 
The upper bound for $\lanm$ follows from the definition of $\lanmU$ in \eqref{lobound-}; the lower bound for $\lanp$ follows from the definition of $\la_{1,j+\kap }^L$ in \eqref{upbound+} 
and the estimate $\min\sigma(BB^*)\ge 0$ which yield
\[
 \la_{1,j+\kap }^L  
 \ge \max\{ \nu_{j+\kap }(A),c\} = \begin{cases} \ \ c, \ & \quad j=1,2, \dots, N(A,c)-\kap ,\\ 
 \nu_{j+\kap }(A),  & \quad j=N(A,c)-\kap +1,\dots,n_2-\kap. \\ \end{cases}
\]
The lower bound for $\lanm$ and the upper bound for $\lanp$ follow from the following alternative formulas for the solutions of quadratic equations on the right hand sides of \eqref{upbound-}, \eqref{lobound+} 
(see e.g.\ \cite[Lemma~5.1 and (5.1)]{CT09},
\begin{alignat}{2}
\label{tanarctan}
\la_{1,j}^L & = \min \{ \nu_{j}(A),c\} - \de_B(\nu_{j}(A)), \quad && j=1,2,\dots,n_1, \\
\la_{2,j+\kap }^U & = \max \{ \nu_{j+\kap }(A),c\} + \de_B(\nu_{j+\kap }(A)), \quad && j=1,2,\dots,n_2-\kap, 
\end{alignat}
together with the definition of $N(A,c)$ (see \eqref{defN}). 
\end{proof}

\begin{remark}
\label{re:order}
Note that, if $n_1=\infty$ and $n_2=\infty$, respectively, \eqref{dec1b} and \eqref{dec1d} 
also yield the order of convergence  of $\lanm \!\nearrow\! c$ 
and  of $\lanp - \nu_{j+\kap }(A) \!\searrow\! 0$ for  $j\to\infty$,
\begin{alignat*}{3}
c\!-\! \lanm & \stackrel{n\to\infty}{=}\! {\rm O}(\nu_{j}(A)^{-1}), \quad   && j\!\ge\!N(A,c)\!+\!1,
\\
\lanp \!-\! \nu_{j+\kap}(A)  & \stackrel{n\to\infty}{=}\!{\rm O}(\nu_{j+\kap }(A)^{-1}), \quad  && j\!\ge\!N(A,c)\!-\!\kap \!+\!1,
\end{alignat*}
since a Taylor series expansion shows that $\de_B(t)= {\rm O}(t^{-1})$, $t\to \infty$.
\end{remark}


Classical perturbation theory for the block operator matrix $\CA$ in \eqref{eq:A} with diagonal entries $A$, $c$ and off-diagonal entries $B$, $B^*$ shows that 
\begin{alignat}{2}
\label{ill1}
  &\|B\|< \frac{c-\nu_{N(A,c)}(A)}2   && \implies  \sigma(\CA) \cap \big(\nu_{N(A,c)}(A) +\|B\|,c -\|B\| \big) = \emptyset,  \\
\label{ill2}  
  &\|B\|< \frac{\nu_{N(A,c)+1}(A)-c}2 && \implies  \sigma(\CA) \cap \big( c+\|B\|,\nu_{N(A,c)+1}(A) -\|B\| \big) = \emptyset. \hspace{-5mm}
\end{alignat}
However, the off-diagonal structure of the perturbation allows for stronger inclusions. More precisely,  
the two-sided estimates in Theorem \ref{thm:2.3} and Proposition \ref{beh-bounds} provide tighter estimates 
for subintervals of $(\nu_{N(A,c)}(A),c)$ and $(c,\nu_{N(A,c)+1}(A))$ to be free of eigenvalues.

\begin{proposition}
\label{thm:1.4}
Let $\H$ be a Hilbert space, $\H=\Hi\oplus\Hii$ with Hilbert spaces $\Hi$,~$\Hii$. 
Let $A$ be a self-adjoint operator in $\Hi$ with compact resolvent and bounded from below, $B: \Hii \to \Hi$ a  non-zero  bounded linear operator, and $c\in\R$.
If 
\[ N:= N(A,c)= \dim \CL_{(-\infty,c]}(A)\] 
and the eigenvalues $(\nu_{j}(A))_{j=1}^{\dim \Hi}$ of $A$, counted with multiplicities, 
are ordered non-decreasingly $($see \eqref{defN}$)$,
then the eigenvalues $(\lanm)_{j=1}^{n_1} \subset (-\infty,c)$ 
and  $(\lanp )_{j=1}^{n_2} \subset (c,+\infty)$, $n_1$, $n_2 \in \N_0 \cup \{\infty\}$, of $\,\CS$ 
satisfy the inclusions
\begin{alignat}{2}
\label{dec1e}
 & (\lanm)_{j=1}^N \!\subset\! \big( \nu_1(A) \!-\! \de_B(\nu_1(A)), \nu_N(A) \big), \ \ && (\lanm)_{j=N+1}^{n_1} \!\subset\! \big(c\!-\!\de_B(\nu_{N+1}(A)),c\big),
 \hspace{-4mm} \\
\label{dec1f} 
 & (\lanp)_{j=1}^{N-\kap } \!\subset\! \big( c, c\!+\! \de_B(\nu_N(A))\big), \quad  && (\lanp)_{j=N-\kap +1}^{n_2} \!\subset\! \big(\nu_{N+1}(A), \infty\big). 
 \hspace{-4mm}
\end{alignat}
In particular, if $\|B\| < \sqrt{(c\!-\!\nu_N(A))^2\!+\!(c\!-\!\nu_N(A))(c\!-\!\nu_{N+1}(A))}$, then
\begin{align}
\label{umea2a}
  \sigma(\CA) \cap \big( \nu_N(A), c-\de_B(\nu_{N+1}(A)) \big) = \emptyset, \\
\intertext{and if $\|B\| < \sqrt{(c\!-\!\nu_{N+1}(A))^2\!+\!(c\!-\!\nu_N(A))(c\!-\!\nu_{N+1}(A))}$, then}
\label{umea2b}  
  \sigma(\CA) \cap \big( c + \de_B(\nu_N(A)), \nu_{N+1}(A) \big) = \emptyset;
\end{align}
if $\|B\|< \sqrt 2 \min\{ c-\nu_N(A),\nu_{N+1}(A)-c\}$, then both \eqref{umea2a} and \eqref{umea2b} hold.
\end{proposition}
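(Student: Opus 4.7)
The plan is to derive all four inclusions and the two gap statements directly from Proposition \ref{beh-bounds}, using only the ordering of the $\nu_j(A)$, the monotonicity of $j \mapsto \lanm$ and $j \mapsto \lanp$, and the monotonicity of $\de_B(\pt)$ in \eqref{arctan}. First I would record the closed-form identity
\[
\de_B(t) = \tfrac{1}{2}\!\left(\sqrt{(t-c)^2 + 4\|B\|^2} - |t-c|\right),
\]
obtained from the half-angle formula $\tan(\theta/2) = \sin\theta/(1+\cos\theta)$ applied with $\tan\theta = 2\|B\|/|t-c|$. From this it is immediate that $\de_B(t)$ depends on $t$ only through $|t-c|$ and is strictly decreasing in $|t-c|$.

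Next, for the inclusions \eqref{dec1e} and \eqref{dec1f} I would apply the bounds of Proposition \ref{beh-bounds} at the extreme index of each range and use monotonicity in $j$. Concretely, $\lanm \le \nu_j(A) \le \nu_N(A)$ for $j \le N$ by \eqref{dec1a}, while the lower endpoint $\nu_1(A)-\de_B(\nu_1(A))$ is attained at $j=1$ in \eqref{dec1a} and inherited for larger $j$ from monotonicity of $\lanm$. For $j \ge N+1$ the upper bound $\lanm \le c$ is \eqref{dec1b}, and the lower bound $\lanm \ge c-\de_B(\nu_{N+1}(A))$ uses $\nu_j(A) \ge \nu_{N+1}(A) > c$, whence $|\nu_j(A)-c| \ge \nu_{N+1}(A)-c$ and $\de_B(\nu_j(A)) \le \de_B(\nu_{N+1}(A))$. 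The two claims for $\lanp$ are handled symmetrically, with care for the shift $\kap$: for $j \le N-\kap$ one has $\nu_{j+\kap}(A) \le \nu_N(A) \le c$, hence $\de_B(\nu_{j+\kap}(A)) \le \de_B(\nu_N(A))$ and $\lanp \le c+\de_B(\nu_N(A))$ by \eqref{dec1c}; for $j \ge N-\kap+1$, \eqref{dec1d} gives $\lanp \ge \nu_{j+\kap}(A) \ge \nu_{N+1}(A)$.

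Given these inclusions, the gap statements \eqref{umea2a} and \eqref{umea2b} follow once the displayed intervals are shown to be non-empty, since $c$ itself does not lie in either of them. Thus \eqref{umea2a} reduces to the scalar inequality $\de_B(\nu_{N+1}(A)) < c-\nu_N(A)$; writing $a := c-\nu_N(A)\ge 0$ and $b := \nu_{N+1}(A)-c > 0$ and substituting the closed form for $\de_B$, a squaring step converts this into $\|B\|^2 < a(a+b)$, which is exactly the stated hypothesis once $a+b$ is expanded. The gap \eqref{umea2b} is proved in the same way with the roles of $a$ and $b$ interchanged, and the joint condition $\|B\| < \sqrt{2}\min\{a,b\}$ implies both scalar inequalities because $2\min\{a,b\}^2 \le \min\{a,b\}(a+b)$ for $a,b > 0$.

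The hardest part will be the book-keeping around the index shift $\kap$ in the $\lanp$-inclusions, since only $\kap \le N$ is known in general and the split in \eqref{dec1c}--\eqref{dec1d} has to be made at $N-\kap$ rather than at $N$; but this does not enter the scalar gap conditions, which involve only $\nu_N(A)$ and $\nu_{N+1}(A)$. Beyond this, the argument reduces to a monotonicity-plus-algebra exercise built on Proposition \ref{beh-bounds}.
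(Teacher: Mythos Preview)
Your proposal is correct and follows essentially the same route as the paper: derive the inclusions from Proposition~\ref{beh-bounds} via monotonicity, then reduce each gap statement to a scalar inequality on $\de_B$ which is verified by algebra, and finally observe that the $\sqrt 2$-condition implies both scalar bounds. The only notable difference is cosmetic: for the lower endpoint in the first inclusion of \eqref{dec1e} the paper uses that $t\mapsto t-\de_B(t)$ is increasing (so $\nu_j(A)-\de_B(\nu_j(A))\ge \nu_1(A)-\de_B(\nu_1(A))$ for each $j$), whereas you instead appeal to the monotonicity $\la_{1,j}\ge\la_{1,1}$ of the eigenvalue sequence; both work, and your explicit closed form for $\de_B$ makes the ``not difficult to check'' step in the paper fully transparent.
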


\smallskip

Note that the lower bound $\min\bigl\{\nu_1(A), c \bigr\}-\de_B(\nu_1(A))$ for the whole spectrum of $\CA$ in \eqref{dec1e} coincides with the bound in Proposition \ref{stillill} since therein $\min\sigma(A)=\nu_1(A)$, $\min\sigma(C)=c$, 
and $\de=\de_B(\nu_1(A))$.

\smallskip

The eigenvalue estimates \eqref{umea2a}, \eqref{umea2b} are better than the classical perturbation results \eqref{ill1}, \eqref{ill2} for two reasons;
firstly, they show that the spectral gaps $\big(\nu_{j}(A),c\big)$, $\big(c,\nu_{N+1}(A)\big)$ between the diagonal elements $A$ and $c$ may shrink 
only from one side and, secondly, the norm constraint on the perturbation is not only improved from a factor $\frac 12$ to $1$ as a result of 
the one-sided shrinking, but even to~$\sqrt 2$. 

\vspace{2mm}

\noindent
\emph{Proof of Proposition} \ref{thm:1.4}.
The first inclusion in \eqref{dec1e} follows from \eqref{dec1a}, the fact that the function $t\mapsto t- \de_B(t)$ is increasing on $(-\infty,+\infty)$, and $\nu_1(A) \le \nu_{j}(A) \le \nu_N(A)\le c$, $j=1,2,\dots,N$. The second inclusion in \eqref{dec1e} follows from \eqref{dec1b}, the fact that the function $t\mapsto \de_B(t)$ is decreasing on $(c,+\infty)$, and $c < \nu_{N+1}(A) \le \nu_{j}(A)$, 
$j=N+1,N+2,\dots$. The inclusions in \eqref{dec1f} follow from \eqref{dec1c}, \eqref{dec1d} by similar arguments.

It is not difficult to check that $\de_B(\nu_{N+1}(A)) < c-\nu_N(A)$ if and only if $\|B\| < \sqrt{(c\!-\!\nu_N(A))^2\!+\!(c\!-\!\nu_N(A))(c\!-\!\nu_{N+1}(A))}$.
Hence the latter condition guarantees that the two intervals in \eqref{dec1e} are disjoint and thus \eqref{umea2a} follows. In a similar way, claim \eqref{umea2b} follows from \eqref{dec1f}. Finally, for the last claim it suffices to note~that e.g.\ $ \sqrt 2 \min\{ c\!-\!\nu_N(A),\nu_{N+1}(A)\!-\!c\} \!\le\! \sqrt{(c\!-\!\nu_N(A))^2\!+\!(c\!-\!\nu_N(A))(c\!-\!\nu_{N+1}(A))}$.
\qed
%

\begin{remark}
\label{cor:1.5}
If $c\in\sigma_{\rm p}(A)$, 
we have $c=\nu_N(A)$ since $N=N(A,c)$ and hence $\de_B(\nu_N(A)) = \|B\|  \lim_{\tau\to\infty} \tan \big( \frac 12 \arctan \tau \big) = \|B\|$.
In this case, \eqref{dec1e} is not applicable and \eqref{arctan} \vspace{-1mm} becomes
\[
  \sigma(\CA) \cap  \big(c+\|B\|,\nu_{N+1}(A)\big) = \emptyset.
\]
\end{remark}

As we have seen above, many of the previous results can be strengthened if $A>c$. In this case no eigenvalue of $A$ lies below $c$ and hence $N=N(A,c)=0$, which implies that also $\kap  \, (\le N(A,c)) = 0$ for the index shift $\kap $. The following remark summarizes these stronger results. The proposition below adds a so-called half range basis result for this particular case.

\begin{remark}
\label{rem:1.4a}
Suppose that, in addition to the assumptions of Theorem \ref{thm:2.3} (and hence of Proposition \ref{thm:1.4}), we have $A>c$. Then
the point spectrum of $\CA$ consists of two sequences of eigenvalues of finite multiplicities, 
$(\lanm)_{j=1}^{n_1} \!\subset\! (-\infty,c)$, tending to $c$ if $n_1=\infty$, and $(\lanp)_{j=1}^{n_2} \!\subset\! (\nu_1(A),+\infty)$, tending to $+\infty$ if $n_2=\infty$, which satisfy the two-sided estimates \eqref{twosidedestimates} with $N=\kap =0$, in particular,
\[
\sigma(\CA) \cap  \big(c,\nu_1(A)\big) = \emptyset.
\]
\end{remark}

In the following proposition we formulate basis properties for the first components of the eigenvectors of $\CA$. Observe that, for eigenvalues different from $c$, the latter coincide with the eigenvectors of $\CS$. 

\begin{proposition}
\label{thm:1.4a}
Let $\H$ be a Hilbert space, $\H=\Hi\oplus\Hii$ with Hilbert spaces $\Hi$,~$\Hii$. %
Let $A$ be a self-adjoint operator in $\Hi$ with compact resolvent and bounded from below, 
$(\nu_{j}(A))_{j=1}^{\dim \Hi}$ the sequence of eigenvalues of $A$ ordered non-decreasingly, $B: \Hii \to \Hi$  a  non-zero  bounded linear operator, 
and $c\in\R$ such that $A>c$.
Then each of the following sets can be chosen to form a Riesz basis in~$\Hi$:
\begin{enumerate}
\item[{\rm i)}] the first components 
of the eigenvectors of $\CA$ corresponding to eigenvalues in $(-\infty,c]$, i.e.\ corresponding to
$(\lanm)_{j=1}^{n_1}$ and to $c$ if $c\in \sigma_{\rm p}(\CA)$;
\item[{\rm ii)}] the first components 
of the eigen\-vectors of $\CA$ corresponding to eigenvalues in $(c,+\infty)$, i.e.\ corresponding to $(\lanp)_{j=1}^{n_2}$. 
\end{enumerate}  
\end{proposition}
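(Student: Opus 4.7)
The plan is to show that the orthogonal projection $P_1\colon\H\to\Hi$, $(u,\hat u)\mapsto u$, restricted to the spectral subspaces $\Ran E_\pm$ of $\CA$ for the intervals $(-\infty,c]$ and $(c,+\infty)$, is a bounded bijection onto $\Hi$ with bounded inverse---after quotienting out, if $c\in\sigma_{\rm p}(\CA)$, the trivial summand $\{0\}\oplus\ker B$ sitting inside $\Ran E_-$, which is exactly what the phrase ``can be chosen'' in the statement licenses. Once this is established, applying $P_1$ to an orthonormal eigenbasis of $\Ran E_\pm$ (provided by the spectral theorem, since $\CA$ has only isolated eigenvalues of finite multiplicity away from $\{c\}$ by Propositions \ref{prop:2.1} and \ref{prop:1.1}) yields the desired Riesz basis of $\Hi$. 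The pivotal input, furnished by the hypothesis $A>c$ together with Remark~\ref{rem:1.4a}, is the spectral gap $(c,\nu_1(A))\subset\rho(\CA)$, from which $\CA|_{\Ran E_+}\ge\nu_1(A)\,I$ and $\CA|_{\Ran E_-}\le c\,I$ as quadratic forms.

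Injectivity of $P_1|_{\Ran E_+}$ is then a one-line quadratic-form argument: for $\tilde v=(0,\hat v)\in\Ran E_+$, the block computation $(\CA\tilde v,\tilde v)_\H=c\|\hat v\|^2=c\|\tilde v\|^2$ contradicts $(\CA\tilde v,\tilde v)_\H\ge\nu_1(A)\|\tilde v\|^2>c\|\tilde v\|^2$ unless $\tilde v=0$. Density of the range follows by duality: if $f\in\Hi$ is orthogonal to $P_1(\Ran E_+)$, then $(f,0)\perp\Ran E_+$ in $\H$, hence $(f,0)\in\Ran E_-$, and $(\CA(f,0),(f,0))_\H=(Af,f)_\Hi\ge\nu_1(A)\|f\|^2$ against the upper bound $\le c\|f\|^2$ on $\Ran E_-$ forces $f=0$. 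The same two calculations, performed now on $\Ran E_-$, produce equalities which via \eqref{kerc1i}--\eqref{kerc2i} identify $\ker P_1|_{\Ran E_-}=\{0\}\oplus\ker B$ sitting inside the $c$-eigenspace; the complementary eigenvectors, which by Proposition \ref{prop:1.3} necessarily have non-zero first component in $\ker B^*\cap\dom A$, together with those at the $\la_{1,j}<c$ furnish the eigenbasis to be projected.

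The main obstacle is the quantitative closed-range step---equivalently, the uniform lower bound $\|u\|_\Hi\ge c_\pm\|\tilde u\|_\H$ for $\tilde u=(u,\hat u)\in\Ran E_\pm$ needed to upgrade injectivity plus dense range to bijectivity with bounded inverse. The cleanest route is to construct a bounded angular operator $K_+\colon\Hi\to\Hii$ with $\Ran E_+=\{(u,K_+u):u\in\Hi\}$, solving the operator Riccati equation $K(A-c)+KBK=B^*$; existence of such $K_+$ with $\|K_+\|<\infty$ is standard for $2\times 2$ self-adjoint block operator matrices whose spectral parts are separated by a genuine gap (cf.\ \cite[Ch.~5]{book}), and the gap $(c,\nu_1(A))$ is exactly what we have. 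The Pythagorean identity $\|\tilde u\|^2=\|u\|^2+\|K_+u\|^2\le(1+\|K_+\|^2)\|u\|^2$ then delivers the bound with $c_+=(1+\|K_+\|^2)^{-1/2}$, and an analogous construction of $K_-$ on $\Ran E_-\ominus(\{0\}\oplus\ker B)$ produces $c_-$ and completes the proof.
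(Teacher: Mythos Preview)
Your argument for claim ii) is correct and is precisely the graph--subspace\,/\,angular--operator mechanism that underlies the paper's citation of \cite[Theorem~3.5]{MR1354980}: the spectral gap $(c,\nu_1(A))$ guarantees a bounded $K_+:\Hi\to\Hii$ with $\Ran E_+=\{(u,K_+u):u\in\Hi\}$, and your injectivity and density computations are exactly what one needs to see that $P_1|_{\Ran E_+}$ is a bounded bijection onto $\Hi$. So for ii) you have essentially unpacked the paper's reference.

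For claim i), however, the asserted symmetry breaks down and your proof has a genuine gap. The standard angular--operator theory you invoke represents $\Ran E_-$ as a graph over $\Hii$, via $-K_+^*:\Hii\to\Hi$, \emph{not} over $\Hi$; there is no reason for an operator $K_-:\Hi\to\Hii$ with $\Ran E_-\ominus(\{0\}\oplus\ker B)=\{(u,K_-u):u\in\Hi\}$ to exist. Concretely, your density argument does not transfer: if $f\perp P_1(\Ran E_-)$ then $(f,0)\in\Ran E_+$, but now both the block computation and the spectral bound give $(\CA(f,0),(f,0))\ge\nu_1(A)\|f\|^2$, which is no contradiction, so you cannot force $f=0$. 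In fact $P_1(\Ran E_-)=\Ran K_+^*$, and $K_+$ can have nontrivial kernel. A minimal example: take $\Hi=\C^2$, $\Hii=\C$, $A=\mathrm{diag}(a_1,a_2)$ with $a_1,a_2>c$, and $B=(b,0)^{\rm t}$ with $b\ne 0$. Then $\CA$ has exactly one eigenvalue $\la_{1,1}<c$, and the first component of the corresponding eigenvector is a single nonzero vector in $\C^2$, which cannot be a Riesz basis of $\Hi$. (Note also that under $A>c$ one has $\ker(\CA-c)=\{0\}\oplus\ker B$ entirely, so the ``complementary eigenvectors at $c$ with nonzero first component'' you appeal to do not exist.) The paper's own proof of i) is a bare citation of \cite[Theorem~2.3\,(iv)]{MR1354980}; that result gives $\Ran E_-$ as a graph over $\Hii$, so the conclusion it directly supports is that the \emph{second} components form a Riesz basis in $\Hii$. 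Claim i) as written seems to require the additional hypothesis $\ker B^*=\{0\}$ (equivalently $\overline{\Ran B}=\Hi$), which your argument would also need.
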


\begin{proof}
%
The second claim follows directly from \cite[Theorem 3.5]{MR1354980}. The first claim follows from \cite[Theorem 2.3 (iv)]{MR1354980}, which shows that the restriction $\CA_-$ of the block operator matrix $\CA$ to the spectral subspace corresponding to the spectrum in $(-\infty,c)$ is unitarily equivalent to a self-adjoint operator in some Hilbert space, and from the fact that $\sigma(\CA_-)=\sigma(\CA)\cap (-\infty,c) = (\lanm)_{j=1}^{n_1}$ is discrete.
\end{proof}

\section{\bf Multi-pole case: variational principles}
\label{sec:4}

In this section we consider the case that the rational operator function $\Si$ in the Hilbert space $\Hi$
in \eqref{schur1} is of the \vspace{-1.5mm} form
\begin{equation}
\label{H1}
\hspace{4mm}
\Si(\la)=A-\la-\sum_{\ell=1}^L\dfrac{B_\ell B_\ell^*}{c_\ell-\la}, 
\quad \dom \Si(\la) = \dom A, \quad \la \in \C \setminus \{c_1,c_2,\dots, c_L\},
\end{equation}
with $c_1,c_2, \dots, c_L\in\R$, $c_1 < c_2 < \dots < c_L$, and non-zero
bounded linear operators $B_\ell$ from Hilbert spaces $\Hii_\ell$ to $\Hi$, $\ell=1,2,\dots,L$. 
As before, $A$ is supposed to be self-adjoint and bounded from below with compact resolvent. 

In the sequel, we set $c_0\!:=\!-\infty$, $c_{L+1}\!:=\!+\infty$ and introduce the bounded operators
\[
 \Gamma_\ell:= B_\ell B_\ell^*: \Hi \to \Hi, \quad \ell=1,2,\dots,L.
\]

\subsection{Spectral properties of $\CS$.}
\label{subsec:3.1}
Let $\Hii\!:=\!\Hii_1 \oplus \Hii_2 \oplus \dots \oplus \Hii_L$. Then $\Si$ is the Schur com\-plement of 
the operator matrix $\CA$ in \eqref{eq:A-gen} in 
$\H\!=\! \Hi \oplus \Hii$ given~by
\begin{equation}
\label{eq:A-multi}
\CA\!=\!
\matrix{cc}{\!A &\! B \\\! B^* &\! C }\!=\!
\matrix{c|cccc}{A&B_1\!\!&\!\!B_2\!\!&\!\!\cdots\!\!&\!\!B_L\\ \hline
B_1^*&c_1\!\!&\!\!0\!\!&\!\!\cdots\!\!&\!\!0\\
B_2^*&0\!\!&\!\!c_2\!\!&\!\!\cdots\!\!&\!\!0\\
\vdots&\vdots\!\!&\!\!\vdots\!\!&\!\!\ddots\!\!&\!\!\vdots\\
B_L^*&0\!\!&\!\!0\!\!&\!\!\cdots\!\!&\!\!c_L
}\!, \quad  \dom \CA \!=\! \dom A \oplus \Hii,
\vspace{-1mm}
\end{equation}
i.e.\ $B\!=\!\big(B_1\, B_2\, \dots \,B_L \big)\!:\! \Hii_1 \oplus \Hii_2 \oplus \dots \oplus \Hii_L \!\to\! \Hi$, 
$C\!=\!{\rm diag\,} \big( c_1 I_{\Hii_1},c_2 I_{\Hii_2},\dots ,c_L I_{\Hii_L} \big)$.
%
%

Since $\CA$ is self-adjoint and bounded from below, the spectra $\sigma(\CA)$ and $\sigma(\CS)=\sigma(\CA) \setminus \{c_1,c_2,\dots,c_L\}$ 
are real and bounded from below. By Propositions \ref{prop:2.1} and~\ref{prop:1.1}, we have
\begin{align}
\label{feb18}
  &\sigma_{\rm ess}(\CA) = \sigma_{\rm ess}(C)=\{c_\ell: \dim\,\Hii_\ell = \infty,\  \ell=1,2,\dots,L\}, \\
\nonumber  
  &\sigma_{\rm ess}(\CS) =  \sigma_{\rm ess}(\CA) \setminus \sigma(C) = \emptyset,
\end{align}
and $(c_ L,c_L+\varepsilon_L)\subset\rho(\CA)$ for some $\varepsilon_L>0$. 

Hence the spectrum of $\CS$ outside of its poles $c_\ell$, i.e.\ in the intervals $(-\infty,c_1)$, $(c_{\ell-1},c_{\ell})$, $\ell=2,3,\dots,L$, and
$(c_L,+\infty)$, is discrete and accumulates at most at the points $c_1,c_2,\dots, c_L$ and at $+\infty$, respectively.
Moreover, since $A$ is bounded from below and $\max \sigma_{\rm ess}(C)=c_L$, Proposition \ref{prop:1.1} shows that eigenvalues of $\CA$ cannot accumulate from the right at the points $c_0=-\infty$ and $c_L$. 


Unlike the one pole case, we now also have to study the question of eigenvalue accumulation between two poles. 
Our first result shows that there are gaps in the spectrum to the right of \emph{all} points $c_\ell$, $\ell=0,1,\dots,L$,
and hence the eigenvalues in each interval $(c_{\ell-1},c_{\ell})$ cannot accumulate at $c_{\ell-1}$ from the right, $\ell=1,2,\dots, L$.

\begin{proposition} 
\label{pro1}
Let $\Hi$ be a Hilbert space, $A$ a self-adjoint operator in $\Hi$, bounded from below and with compact resolvent,  let $\Gamma_\ell$, $\ell=1,2,\dots,L$,
be non-zero bounded self-adjoint non-negative operators in $\Hi$, $c_1,c_2, \dots, c_L\in\R$ with $c_1<c_2<\dots<c_L$, and set $c_0=-\infty$, $c_{L+1}=+\infty$. 
Then the operator function $\CS$ given by 
\begin{equation}\label{Meise}
\CS(\la)=A-\la-\sum_{\ell=1}^L\dfrac{\Gamma_\ell}{c_\ell-\la}, \quad \dom\CS(\la)=\dom A, \quad \la \in \C \setminus \{c_1,c_2,\dots, c_L\}, \hspace{-4mm}
\end{equation}
has the following properties:
\begin{enumerate}
\item[\rm i)] $\Si$ satisfies Assumptions {\rm (i)}--{\rm (iv)} of\, \cite{MR2068432} on $(c_{\ell-1},c_{\ell})$, $\ell=1,2,\dots,L\!+\!1$;
\item[\rm ii)] 
there exist $\al_{\ell-1} \in (c_{\ell-1},c_\ell)$ such that $(c_{\ell-1},\al_{\ell-1}) \!\subset\! \rho(\CS)$, $\ell=1,2,\dots,L+1$; \vspace{-1.5mm} further,
\begin{alignat*}{2}
\dim \CL_{(-\infty,0)} (\Si(\la)) & =: \kam, \quad && \la \!\in\! (-\infty,\al_0), \\ 
\dim \CL_{(-\infty,0)} (\CS(\la)) & =: \kappa_\ell, \ \ && \la \!\in\! (c_{\ell-1},\alpha_{\ell-1}), \quad \ell=2,3,\dots,L+1, \hspace{-10mm}
\end{alignat*}
are independent of $\la$ and finite \vspace{-1.5mm} with
\[
  \kam = 0, \quad \kappa_\ell \le N\Big(A\!-\!\!\!\sum_{j=\ell}^L \!\frac{\Gamma_j}{c_j\!-\!c_{\ell-1}},c_{\ell-1}\Big), \quad \ell=2,3,\dots,L+1,
\vspace{-1.5mm}  
\]
in particular, $\kappa_{L+1} \le N(A,c_L)$; \vspace{1mm}
\item[\rm iii)] 
if $\Gamma_{\ell-1}$ has closed range  for some $\ell\in\{2,3,\dots,L\}$, $P_{\ell-1}$ denotes the orthogonal projection onto $\CL_{(-\infty,c_{\ell-1}]}(A)$ and $Q_{\ell-1}$ 
the orthogonal projection onto $\ker \Gamma_{\ell-1}$, \vspace{-2mm} then 
\begin{equation}
\label{iml}
   \ka_\ell \le N\Big(Q_{\ell-1}\Big(P_{\ell-1}AP_{\ell-1}-\sum_{j=\ell}^L\dfrac{\Gamma_j}{c_j-c_{\ell-1}}\Big)Q_{\ell-1},c_{\ell-1}\Big);
   \vspace{-2mm}
\end{equation}
in particular, 
\begin{equation}\label{amsel}
\ka_{L+1}\le{\rm rank}\,(P_L Q_L) \le \min \big\{ N(A,c_L), \dim \ker \Gamma_L \big\}.
\end{equation}
\end{enumerate}
\end{proposition}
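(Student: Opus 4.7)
The plan is to adapt the proof of Proposition \ref{new-est} interval by interval, with the genuinely new work concentrated on the interior poles.

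For (i) I would verify the four assumptions of \cite{MR2068432} on each open interval $(c_{\ell-1},c_\ell)$ exactly as in Proposition~\ref{new-est}~i): (i) and (ii) hold because $\dom\CS(\lambda)=\dom A$ is $\lambda$-independent and $\CS$ is holomorphic there; (iii) follows from
\begin{equation*}
\frac{\d}{\d\lambda}(\CS(\lambda)u,u)=-\|u\|^2-\sum_{j=1}^L\frac{(\Gamma_j u,u)}{(c_j-\lambda)^2}\le -\|u\|^2, \qquad u\in\dom A\setminus\{0\},
\end{equation*}
so $\CS'(\lambda)\le -I$; and (iv) holds because $\CS(\lambda)$ is a bounded symmetric perturbation of the semi-bounded operator $A-\lambda$ with compact resolvent, hence has only finitely many negative eigenvalues.

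For (ii), the case $\ell=1$ goes through as in Proposition~\ref{new-est}: $(\CS(\lambda)u,u)\to +\infty$ as $\lambda\to -\infty$ combined with $\CS'(\lambda)\le -I$ gives $\CS(\lambda)\gg 0$ for $\lambda$ sufficiently negative, so $\kam=0$. For $\ell\ge 2$ I would introduce the auxiliary operator
\begin{equation*}
\widetilde A_\ell := A - \sum_{j=\ell}^L \frac{\Gamma_j}{c_j-c_{\ell-1}},
\end{equation*}
which is self-adjoint, bounded below, and has compact resolvent as a bounded symmetric perturbation of $A$. The identity $\Gamma_j/(c_j-c_{\ell-1})-\Gamma_j/(c_j-\lambda)=-(\lambda-c_{\ell-1})\Gamma_j/[(c_j-c_{\ell-1})(c_j-\lambda)]\le 0$ for $j\ge\ell$ and $\lambda\in(c_{\ell-1},c_\ell)$, together with the non-negativity of $\Gamma_j/(\lambda-c_j)$ for $j<\ell-1$, yields
\begin{equation*}
(\CS(\lambda)u,u)\ge \bigl((\widetilde A_\ell-\lambda)u,u\bigr)+\frac{(\Gamma_{\ell-1}u,u)}{\lambda-c_{\ell-1}}+(R(\lambda)u,u), \qquad u\in\dom A,
\end{equation*}
with $R(\lambda)\le 0$ and $\|R(\lambda)\|=O(\lambda-c_{\ell-1})$ as $\lambda\searrow c_{\ell-1}$. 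Letting $P$ be the spectral projection of $\widetilde A_\ell$ onto $\CL_{(-\infty,c_{\ell-1}]}(\widetilde A_\ell)$, of finite rank $N(\widetilde A_\ell,c_{\ell-1})$, the operator $\widetilde A_\ell-\lambda+R(\lambda)$ is strictly positive on $P^\perp\Hi$ for $\lambda$ close to $c_{\ell-1}^+$, and the classical min-max principle then forces $\CS(\lambda)$ to have at most $N(\widetilde A_\ell,c_{\ell-1})$ non-positive eigenvalues there. Since $\CS'(\lambda)\le -I$ makes $\lambda\mapsto\dim\CL_{(-\infty,0)}(\CS(\lambda))$ a non-decreasing integer-valued function with jumps only on $\sigma(\CS)\cap(c_{\ell-1},c_\ell)$, being bounded near $c_{\ell-1}^+$ it is eventually constant, which produces $\alpha_{\ell-1}$ and the bound $\kappa_\ell\le N(\widetilde A_\ell,c_{\ell-1})$; constancy on all of $(c_{\ell-1},\alpha_{\ell-1})$ then follows from \cite[Lemma~2.6]{MR2068432}. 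For $\ell=L+1$ the sum defining $\widetilde A_{L+1}$ is empty, so the bound specialises to $\kappa_{L+1}\le N(A,c_L)$.

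For (iii) I would further decompose using the orthogonal projection $Q_{\ell-1}$ onto $\ker\Gamma_{\ell-1}$, imitating Proposition~\ref{new-est}~iv). The closed-range hypothesis gives $Q_{\ell-1}^\perp\Gamma_{\ell-1}Q_{\ell-1}^\perp\ge\beta^2 Q_{\ell-1}^\perp$ for some $\beta>0$, so the positive term $\Gamma_{\ell-1}/(\lambda-c_{\ell-1})$ absorbs every bounded negative contribution on $Q_{\ell-1}^\perp\Hi$ once $\lambda-c_{\ell-1}$ is small enough; the negative-eigenvalue count of $\CS(\lambda)$ is then dominated by that of the compression $Q_{\ell-1}(P_{\ell-1}AP_{\ell-1}-\sum_{j\ge\ell}\Gamma_j/(c_j-c_{\ell-1}))Q_{\ell-1}$ to $\ker\Gamma_{\ell-1}$, which delivers \eqref{iml}; specialising to $\ell=L+1$ (empty sum, $P_LAP_L\le c_LP_L$) gives $\kappa_{L+1}\le\rank(P_LQ_L)$, and the remaining bound $\rank(P_LQ_L)\le\min\{N(A,c_L),\dim\ker\Gamma_L\}$ in \eqref{amsel} is obvious. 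The main obstacle I expect is producing the gap $(c_{\ell-1},\alpha_{\ell-1})\subset\rho(\CS)$ for \emph{interior} poles $\ell\in\{2,\ldots,L\}$, since Proposition~\ref{prop:1.1}~iv) does not apply there: the poles $c_\ell,\ldots,c_L$ belong to $\sigma_{\rm ess}(C)$ and cannot be removed by any finite-dimensional modification of $C$. The replacement argument above rests on the Nevanlinna sign structure, the monotonicity $\CS'(\lambda)\le -I$, and the fact that the dominating positive pole $\Gamma_{\ell-1}/(\lambda-c_{\ell-1})$ overwhelms bounded negative contributions on a finite-codimension subspace of $\Hi$.
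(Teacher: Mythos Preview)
Your proposal is correct and follows essentially the same approach as the paper: part~i) is identical, part~ii) uses the same comparison $\CS(\lambda)\ge A-\lambda-\sum_{j\ge\ell}\Gamma_j/(c_j-\lambda)$ and evaluates the right-hand side at $\lambda=c_{\ell-1}+\eps$, and part~iii) reproduces the $Q_{\ell-1}$/$P_{\ell-1}$ block decomposition of Proposition~\ref{new-est}~iv) verbatim.

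The one point worth flagging is your ``main obstacle''. The paper does not re-derive the resolvent gap $(c_{\ell-1},\alpha_{\ell-1})\subset\rho(\CS)$ at interior poles by hand; instead it observes that once Assumptions (i)--(iv) of \cite{MR2068432} are verified on $(c_{\ell-1},c_\ell)$ --- which is exactly part~i) --- then \cite[Theorem~2.1 and Lemma~2.6]{MR2068432} already deliver both the gap and the constancy of $\lambda\mapsto\dim\CL_{(-\infty,0)}(\CS(\lambda))$ on it. So the block-operator-matrix route via Proposition~\ref{prop:1.1}~iv) is never needed, and your monotonicity argument for the dimension function, while correct, is essentially re-proving a special case of the cited abstract result. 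This makes the proof shorter than you anticipate: part~ii) reduces to one citation plus the single operator inequality \eqref{ps}.
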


\begin{proof}
i) The proof that $\CS$ satisfies Assumptions {\rm (i)}--{\rm (iv)} of\, \cite{MR2068432} on each $(c_{\ell-1},c_{\ell})$, $\ell=1,2,\dots,L+1$,
is completely analogous to the proof of Proposition \ref{new-est} i). 
Here we note that, since $\Gamma_\ell$, $\ell=1,2,\dots,L$, is non-negative and bounded, we still have $\Si'(\lambda)\le -I$ and the operator $\CS(\la)$ is a bounded symmetric perturbation of $A-\la$ for $\la \in (c_{\ell-1}, c_{\ell})$. 

ii) Due to i) we can apply \cite[Theorem 2.1 and Lemma~2.6]{MR2068432} which yield 
that there are $\alpha_{\ell-1} \in (c_{\ell-1},c_\ell)$ with $(c_{\ell-1},\alpha_{\ell-1}) \subset \rho(\CS)$ and 
that the function $\la \mapsto \dim \CL_{(-\infty,0)} (\Si(\la))$ has a constant value
$\kappa_\ell$ on $(c_{\ell-1},\alpha_{\ell-1})$, $\ell=1,2,\dots,L+1$.

The proof of $\kappa_1=0$ is the same as in the one pole case since we again have $(\Si(\la)u,u)$ $\!\to\! +\infty$,
$u\in \dom A$, $u\ne 0$, for $\la \!\to\! - \infty$, $\Si'(\lambda)\!\le\! -I$ for $\la \!\in\! (-\infty,c_1)$, and so $\Si(\la) \gg 0$ for $\la \!\in\! (-\infty,\al_0)$.

In order to prove the upper bound for $\kappa_\ell$, $\ell=2,3\dots,L+1$, let
$\la \in (c_{\ell-1},c_{\ell})$. 
Then the \vspace{-1mm} estimate 
\begin{equation}
\label{ps}
\CS(\la)=A-\la-\sum_{j=1}^L\dfrac{\Gamma_j}{c_j-\la}\ge
A-\la-\sum_{j=\ell}^L\dfrac{\Gamma_j}{c_j-\la}=: \CT_0(\la)
 \vspace{-1mm} 
\end{equation}
shows that
the number of negative eigenvalues of $\CS(\la)$ is less than or equal to the number of negative eigenvalues of $\CT_0(\la)$. 
%
Choosing $\la=c_{\ell-1} + \eps$ with $\eps >0$ arbitrarily small,  we find
\begin{equation}
\label{spring-new}
 \CT_0(c_{\ell-1}+\eps) = A-c_{\ell-1} \!-\! \sum_{j=\ell}^L \frac{\Gamma_j}{c_j\!-\!c_{\ell-1}} - \eps \Big( I +\! \sum_{j=\ell}^L \frac{\Gamma_j}{(c_j\!-\!(c_{\ell-1}\!+\!\eps))(c_j\!-\!c_{\ell-1})}\Big).
\end{equation}
If we let $\eps\searrow 0$, the bound for $\kappa_\ell$ claimed in ii) follows.

iii) Let $Q_{\ell-1}^\perp:=I-Q_{\ell-1}$ be the orthogonal projection onto $\Ran \Gamma_{\ell-1}$. Without loss of generality, we may assume that $\eps_{\ell-1}>0$ is so
small that $(c_{l-1},c_{l-1}+\eps_{\ell-1}] \cap \sigma(A) = \emptyset$.
Then, for  $\la\in (c_{\ell-1},c_{\ell-1}+\eps_{\ell-1})$, we have the estimate
\begin{equation}
\label{ps1}
\CS(\la)\ge P_{\ell-1}(A-\la)P_{\ell-1}-\sum_{j=\ell}^L\dfrac{\Gamma_j}{c_j-\la}+\dfrac{\Gamma_{\ell-1}}{\la - c_{\ell-1}}=:\CT_1(\la)+\dfrac{\Gamma_{\ell-1}}{\la - c_{\ell-1}}.
\end{equation}
Since $c_{\ell-1}<\la<c_\ell$,  $P_{\ell-1}(A-\la)P_{\ell-1}$ is non-positive while $\Gamma_j / (c_j-\la)$ is non-negative, $j=\ell,\ell+1,\dots,L$. Thus $\CT_1(\la)$ is non-positive and hence, for 
$u\in \CH$, $u=(u_1\ u_2)^{\rm t} \in \ker \Gamma_{\ell-1}^*\oplus\Ran \Gamma_{\ell-1}$, we can estimate
\begin{equation}
\label{Peteseger}
\begin{array}{rl}
\big(\CT_1(\la)u,u\big)\!\!\!\!&=
\left( \begin{pmatrix}
Q_{\ell-1} \CT_1(\la)Q_{\ell-1}& Q_{\ell-1} \CT_1(\la) Q_{\ell-1}^\perp
\\[2mm] 
\big(Q_{\ell-1} \CT_1(\la)Q_{\ell-1}^\perp\big)^*& Q_{\ell-1}^\perp\CT_1(\la) Q_{\ell-1}^\perp \end{pmatrix} \displaystyle{\binom{u_1}{u_2}, \binom{u_1}{u_2}} \right)\\[4mm]
&\ge 2 \Big( \big( Q_{\ell-1} \CT_1(\la)Q_{\ell-1} u_1,u_1 \big) + \big( Q_{\ell-1}^\perp\CT_1(\la) Q_{\ell-1}^\perp  u_2,u_2\big) \Big). 
\end{array} \hspace{-5mm}
\end{equation}
Since $A\ge \nu_1(A)$, $\la<c_{\ell-1}+\eps_{\ell-1}$, and $P_{\ell-1}=0$ if $\nu_1(A) > c_{\ell-1}+\eps_{\ell-1}$, we have 
\begin{align*}
P_{\ell-1}(A-\la)P_{\ell-1} &\ge \min \{(\nu_1(A)-(c_{\ell-1}+\eps_{\ell-1})),0\} P_{\ell-1}, \\ 
-\sum_{j=\ell}^L\dfrac{\Gamma_j}{c_j-\la} &\ge - \sum_{j=\ell}^L\dfrac{\|\Gamma_j\|}{c_j-(c_{\ell-1}+\eps_{\ell-1})}.
\end{align*}
Therefore, for $u_2 = Q_{\ell-1}^\perp u_2 \in \Ran\Gamma_{\ell-1}$,
\begin{align}
\nonumber
& \big( \CT_1(\la) u_2, u_2 \big)  \\[-1mm]
\label{22.2.}
& \ge - \max \{c_{\ell-1}\!+\!\eps_{\ell-1}\!-\!\nu_1(A),0\} \|P_{\ell-1} u_2\|^2 \!-\! \sum_{j=\ell}^L\dfrac{\|\Gamma_j\|}{c_j\!-\!(c_{\ell\!-\!1}+\eps_{\ell-1})}  \| u_2\|^2 \\[-2mm]
\nonumber
& \ge - \Big(  \max \{c_{\ell-1}\!+\!\eps_{\ell-1}\!-\!\nu_1(A),0\} \!+\! \sum_{j=\ell}^L\dfrac{\|\Gamma_j\|}{c_j\!-\!(c_{\ell-1}+\eps_{\ell-1})} \Big) \| u_2\|^2 \!=: - \gamma \|u_2\|^2 \hspace{-3mm}
\end{align}
where $\gamma\ge 0$. 
Because $\Gamma_{\ell-1}$ has closed range by assumption and $Q_{\ell-1}^\perp$ is the or\-tho\-gonal projection onto $(\ker \Gamma_{\ell-1})^\perp\!$, there exists $\gamma_{\ell-1} >0$ with $Q_{\ell-1}^\perp\Gamma_{\ell-1} Q_{\ell-1}^\perp \ge \gamma_{\ell-1}$ on $\Ran \Gamma_{\ell-1}$. 
Altogether, \eqref{ps1}, \eqref{Peteseger}, and \eqref{22.2.}, yield that 
\begin{align*}
  (\Si(\la) u,u) 
  &\ge 2 \big( Q_{\ell-1} \CT_1(\la)Q_{\ell-1} u_1,u_1 \big) - 2 \gamma \|u_2\|^2 + \frac{\gamma_{\ell-1}}{\la-c_{\ell-1}} \|u_2\|^2.
\end{align*}
Hence, if we choose $\la \in (c_{\ell-1},c_{\ell-1}+\eps_{\ell-1})$ such \vspace{-0.75mm} that
\begin{equation}
\label{march22}
 c_{\ell-1} < \la < c_{\ell-1} + \frac{\gamma_{\ell-1}}{2\gamma} \quad \text{if } \gamma>0,
\vspace{-1.75mm} 
 \end{equation}
we have the \vspace{-1mm} estimate 
\begin{align*}
  (\Si(\la) u,u) 
  &> 2 \big( Q_{\ell-1} \CT_1(\la)Q_{\ell-1} u_1,u_1 \big).
\end{align*}
Now the classical min-max variational principle for semi-bounded operators shows that, for $\la$ as in \eqref{march22}, the number $\kappa_\ell$ of negative eigenvalues of $\Si(\la)$ is less than or equal to the number of negative eigenvalues of $Q_{\ell-1} \CT_1(\la)Q_{\ell-1}$. Since $\kappa_\ell$ is independent of~$\la$, we may choose $\la=c_{\ell-1}+\eps$ with $\eps >0$ arbitrarily small and proceed in the same way as in \eqref{spring-new} to see that, in the limit $\eps\searrow 0$, the number of negative eigenvalues of $Q_{\ell-1} \CT_1(\la)Q_{\ell-1}$ is given by the right hand side of \eqref{iml}.

The last claim for $\ell=L+1$ follows since in this case $\CT_1(\la)=A-\la$ and the operator $Q_L P_L (A-\la) P_L Q_L$ is negative for $\la>c_L$ and thus has $\rank (P_LQ_L)$ negative eigenvalues.
\end{proof}

\begin{remark}
If there is just one pole, i.e.\ $L=1$, then the claims in Proposition \ref{pro1} coincide with those in Proposition \ref{new-est}; note that $\ker \Gamma_\ell = \ker B_\ell^*$.
\end{remark}

\vspace{-3.25mm}

\begin{center}
\begin{figure}[h]
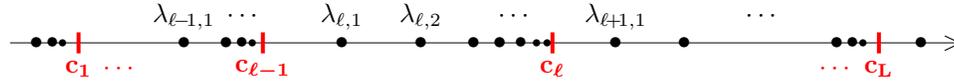

\psset{unit=7mm}\pspicture*(-10.3,-0.7)(7.7,1)
\psline[linewidth=.3pt,linecolor=black]{-}(-10.3,0)(0,0)
\psline[linewidth=.3pt,linecolor=black]{-}(0,0)(7.7,0)
\rput[C](7.6,0){$>$}
%
\rput[C](-9.8,0){$\bullet$}
\rput[C](-9.5,0){\small$\bullet$}
\rput[C](-9.3,0){\tiny$\bullet$}
\psline[linewidth=1.5pt,linecolor=red]{-}(-9,-0.2)(-9,0.2)
\rput[C](-9,-0.5){\ct{\boldmath{$\bf c_1$}}}
\rput[C](-8.2,-0.5){\ct{$\cdots$}}
\rput[C](-7,0){$\bullet$}
\rput[C](-7,0.5){$\la_{\ell\!-\!1,1}$}
\rput[C](-6.2,0){$\bullet$}
\rput[C](-5.9,0){\small$\bullet$}
\rput[C](-5.7,0){\tiny$\bullet$}
\rput[C](-5.85,0.5){$\cdots$}
\psline[linewidth=1.5pt,linecolor=red]{-}(-5.5,-0.2)(-5.5,0.2)
\rput[C](-5.5,-0.5){\ct{\boldmath{$\bf c_{\ell-1}$}}}
\psline[linewidth=1.5pt,linecolor=red]{-}(0,-0.2)(0,0.2)
\rput[C](0,-0.5){\ct{\boldmath{$\bf c_{\ell}$}}}
\rput[C](-4,0){$\bullet$}
\rput[C](-4,0.5){$\la_{\ell,1}$}
\rput[C](-2.5,0){$\bullet$}
\rput[C](-2.5,0.5){$\la_{\ell,2}$}
\rput[C](-1.5,0){$\bullet$}
\rput[C](-1,0){$\bullet$}
\rput[C](-0.6,0){\small$\bullet$}
\rput[C](-0.3,0){\tiny$\bullet$}
\rput[C](-0.1,0){\tiny$\bullet$}
\rput[C](-0.7,0.5){$\cdots$}
\rput[C](1.2,0){$\bullet$}
\rput[C](1.2,0.5){$\la_{\ell\!+\!1,1}$}
\rput[C](2.5,0){$\bullet$}
\rput[C](4,0.5){$\cdots$}
\rput[C](5.4,0){$\bullet$}
\rput[C](5.7,0){\small$\bullet$}
\rput[C](5.9,0){\tiny$\bullet$}
\psline[linewidth=1.5pt,linecolor=red]{-}(6.2,-0.2)(6.2,0.2)
\rput[C](6.2,-0.5){\ct{\boldmath{$\bf c_L$}}}
\rput[C](5.4,-0.5){\ct{$\cdots$}}
\rput[C](7,0){$\bullet$}
\endpspicture
\vspace{-2.1mm}
\caption{{\small Eigenvalues $(\la_{\ell,j})_{j=1}^{n_\ell} \!\subset\! (c_{\ell-1},c_{\ell})$ of $\CS$.}}
\end{figure}
\vspace{-4.25mm}
\end{center}

According to Proposition \ref{pro1}, the spectrum of $\CA$ and of $\CS$ in \emph{all} intervals $(c_{\ell-1},c_{\ell})$ may accumulate at most at the right end-points 
$c_{\ell}$ for $\ell=1,2,\dots,L+1$. We denote the corresponding sequences of eigenvalues, ordered non-decreasingly and counted with multiplicities, 
by $(\la_{\ell,j})_{j=1}^{n_\ell} \subset (c_{\ell-1},c_{\ell})$ with \vspace{-2.5mm} $n_\ell \in \N_0 \cup \{\infty\}$:
\begin{align*}
  &\sigma(\CA) \setminus \{c_1,c_2,\dots, c_L\} = \sigma_{\rm p}(\CA) \setminus \{c_1,c_2,\dots, c_L\} 
  = \sigma(\CS) = \sigma_{\rm p} (\CS)  
  = \bigcup_{\ell=1}^{L+1} ( \la_{\ell,j} )_{j=1}^{n_\ell}.
\vspace{-4.5mm}  
\end{align*}
Here $n_\ell=\infty$ means that the sequence $(\la_{\ell,j})_{j=1}^{n_\ell} \subset (c_{\ell-1},c_{\ell})$ is infinite and  accumulates at $c_{\ell}$. It is not difficult to see that
\begin{equation}
\label{inf-or-not-ell}
\begin{array}{ll}
  n_{\ell} < \infty & \mbox { if } \dim \Hii_{\ell} < \infty, \ \ell=1,2,\dots,L, \\
  n_{L+1} < \infty & \mbox{ if and only if } \dim \Hi < \infty.
\end{array}  
\end{equation}
Both claims are consequences of the fact that $A$ has compact resolvent.
Indeed, if $\dim \Hii_{\ell} < \infty$, then $c_{\ell} \notin \sigma_{\rm ess}(\CA)$ by Proposition \ref{prop:1.1};
the second claim follows because $A$, and hence $\CA$, is bounded if and only if $\dim \Hi < \infty$.

The condition $\dim \Hii_{\ell} = \infty$ is only necessary for eigenvalue accumulation at~$c_{\ell}$, i.e.\ for $n_{\ell}=\infty$.
%
The following two propositions show that $\dim \Hii_\ell = \infty$ and $B_\ell$ injective are sufficient conditions for $n_\ell=\infty$, 
in analogy to Propositions \ref{inf-at-c} and~\ref{prop:1.3a} where $L=1$.
In Theorem~\ref{th_j}~iii) below, we show that the weaker condition $\dim \ov{\Ran B_{\ell}} = \dim \ov{\Ran \Gamma_{\ell}} = \infty$ is both necessary and sufficient.



\begin{proposition}
\label{inf-at-c-ell}
Let $\,\H$ be a Hilbert space, $\H=\Hi\oplus \Hii_1 \oplus \Hii_2 \oplus \dots \oplus \Hii_L$ with Hilbert spaces $\Hi$,~$\Hii_1$, $\Hii_2$, \dots, $\Hii_L$. 
Let $A$ be a self-adjoint operator in $\Hi$ with com\-pact resolvent and bounded from below, 
$B_\ell: \Hii_\ell \to \Hi$, $\ell=1,2,\dots,L$, bounded linear operators, and
$c_1, c_2, \dots, c_L\in\R$ with $c_1 < c_2 < \dots < c_L$. 

If $\,\dim \Hii_\ell = \infty$ for $\ell\in \{1,2,\dots,L\}$ and $c$ is not an eigenvalue of infinite multiplicity of $\CA$, 
then $c_\ell$ is an accumulation point of eigenvalues 
of $\CA$, and hence of $\CS$, from the left but not from the right.
\end{proposition}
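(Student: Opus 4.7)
The plan is to mirror the argument used for the one-pole case in Proposition \ref{inf-at-c}, since all the needed multi-pole ingredients are now in place. The claim decomposes into two independent assertions: that $c_\ell$ is an accumulation point of eigenvalues from the left, and that it is not an accumulation point from the right.

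For the first, I would begin from the essential spectrum identity \eqref{feb18}, namely $\sigma_{\rm ess}(\CA)=\sigma_{\rm ess}(C)=\{c_k:\dim\Hii_k=\infty\}$. The hypothesis $\dim\Hii_\ell=\infty$ therefore puts $c_\ell\in\sigma_{\rm ess}(\CA)$. Because $A$ and $C$ are self-adjoint, the block operator matrix $\CA$ in \eqref{eq:A-multi} is self-adjoint, and the general characterization of the essential spectrum recalled in the introduction (for self-adjoint $T$, $\lambda\in\sigma_{\rm ess}(T)$ iff $\lambda\in\sigma(T)$ and $\lambda$ is not an isolated eigenvalue of finite multiplicity) forces $c_\ell$ to be either an eigenvalue of $\CA$ of infinite multiplicity or an accumulation point of $\sigma(\CA)$. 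The standing hypothesis rules out the first alternative, so $c_\ell$ is an accumulation point of $\sigma(\CA)$.

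For the second, I would invoke Proposition \ref{pro1} ii), which furnishes $\alpha_\ell\in(c_\ell,c_{\ell+1})$ with $(c_\ell,\alpha_\ell)\subset\rho(\CS)$. Since this interval avoids every pole $c_1,\dots,c_L$, it lies in $\rho(C)$, and Proposition \ref{prop:2.1} i) then yields $(c_\ell,\alpha_\ell)\cap\sigma(\CA)=(c_\ell,\alpha_\ell)\cap\sigma(\CS)=\emptyset$. Consequently $c_\ell$ cannot be approached by points of $\sigma(\CA)$ from the right.

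Combining the two parts, the accumulation established in the first step must take place from the left. Moreover, since the spectrum of $\CA$ inside the open interval $(c_{\ell-1},c_\ell)$ is contained in $\sigma(\CS)$ by Proposition \ref{prop:2.1} i) and consists of discrete eigenvalues of finite multiplicity (no essential spectrum lies there, by \eqref{feb18}), the accumulation from the left is genuinely an accumulation of eigenvalues, and Proposition \ref{prop:2.1} ii) transfers them to eigenvalues of $\CS$. There is no real obstacle here: everything reduces to locating the right statement among Propositions \ref{prop:2.1}, \ref{prop:1.1}, and \ref{pro1}; the only mild care is in noting that the gap interval $(c_\ell,\alpha_\ell)$ automatically misses $\sigma(C)$ and so belongs to $\rho(\CA)$, not just to $\rho(\CS)$.
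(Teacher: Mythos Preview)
Your proposal is correct and follows exactly the approach the paper intends: it mirrors the proof of Proposition~\ref{inf-at-c}, replacing the single-pole gap result Proposition~\ref{prop:1.1}~iv) by the multi-pole gap result Proposition~\ref{pro1}~ii). Your extra care in transferring the gap $(c_\ell,\alpha_\ell)\subset\rho(\CS)$ to $\rho(\CA)$ via Proposition~\ref{prop:2.1}~i) is a welcome clarification that the paper leaves implicit.
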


\pagebreak

\begin{proof}
The proof of Proposition \ref{inf-at-c-ell} is analogous to the proof of Proposition \ref{inf-at-c} if one replaces Proposition \ref{prop:1.1} iv) by Proposition \ref{pro1} ii).
\end{proof}

\begin{proposition}
\label{prop:4.2a}
Let $\,\H$ be a Hilbert space, $\H=\Hi\oplus \Hii_1 \oplus \Hii_2 \oplus \dots \oplus \Hii_L$ with Hilbert spaces $\Hi$,~$\Hii_1$, $\Hii_2$, \dots, $\Hii_L$. 
Let $A$ be a self-adjoint operator in $\Hi$ with com\-pact resolvent and bounded from below, 
$B_\ell: \Hii_\ell \to \Hi$, $\ell=1,2,\dots,L$, bounded linear operators, and
$c_1, c_2, \dots, c_L\in\R$ with $c_1 < c_2 < \dots < c_L$.

If $B_\ell$ is injective for $\ell\in \{1,2,\dots,L\}$, then $\dim \ker (\CA-c_\ell) \le N(A,c_\ell)$; in particular, $c_\ell$ is not an eigenvalue of infinite multiplicity of $\CA$.

\end{proposition}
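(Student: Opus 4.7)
The plan is to adapt the one-pole argument of Proposition~\ref{prop:1.3a} to the multi-pole setting. Let $\wt u=(u\ \hat u_1\ \cdots\ \hat u_L)^{\rm t}\in\ker(\CA-c_\ell)\setminus\{0\}$. Unfolding $(\CA-c_\ell)\wt u=0$ against the block structure \eqref{eq:A-multi} gives
\[
  (A-c_\ell)u+\sum_{k=1}^L B_k\hat u_k=0,\qquad B_k^*u+(c_k-c_\ell)\hat u_k=0,\quad k=1,\dots,L.
\]
The $k=\ell$ relation forces $B_\ell^*u=0$, while for $k\ne\ell$ one may solve explicitly $\hat u_k=B_k^*u/(c_\ell-c_k)$.

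First I would prove the multi-pole analogue of~\eqref{c-var}: if $\wt u\ne 0$ then $u\ne 0$. Indeed, if $u=0$ the explicit formulas give $\hat u_k=0$ for every $k\ne\ell$, the top equation collapses to $B_\ell\hat u_\ell=0$, and injectivity of $B_\ell$ forces $\hat u_\ell=0$, contradicting $\wt u\ne 0$. Taking the inner product of the top equation with $u$, using $B_\ell^*u=0$ to kill the $k=\ell$ contribution, and substituting the formulas for the remaining $\hat u_k$, I would then derive the scalar identity
\[
  ((A-c_\ell)u,u)+\sum_{k\ne\ell}\frac{\|B_k^*u\|^2}{c_\ell-c_k}=0,
\]
the direct multi-pole replacement of the one-pole relation $(Au,u)=c\|u\|^2$.

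Let $V_\ell:=P\ker(\CA-c_\ell)\subset\dom A$, where $P:\Hi\oplus\Hii\to\Hi$ is the projection onto the first component. A linear-independence argument copied verbatim from the one-pole proof (using that every nonzero element of $\ker(\CA-c_\ell)$ has nonzero first component) shows that $\wt u\mapsto u$ is injective on $\ker(\CA-c_\ell)$, whence $\dim\ker(\CA-c_\ell)=\dim V_\ell$. I would then combine the quadratic identity above with the classical min-max principle for $A$, tested against subspaces of $V_\ell$, to conclude $\dim V_\ell\le N(A,c_\ell)$, yielding in particular that $c_\ell$ is not an eigenvalue of infinite multiplicity of $\CA$.

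The main obstacle is that the cross-pole sum $\sum_{k\ne\ell}\|B_k^*u\|^2/(c_\ell-c_k)$ has no definite sign: summands with $k<\ell$ are non-negative while those with $k>\ell$ are non-positive. For the extreme pole $\ell=L$ the difficulty vanishes, since only $k<L$ contribute, the sum is non-negative, and the identity directly yields $(Au,u)\le c_L\|u\|^2$, from which $\dim V_L\le N(A,c_L)$ is immediate via min-max. For intermediate $\ell$ the natural remedy is to absorb the cross terms into the modified self-adjoint operator $M_\ell:=A-\sum_{k\ne\ell}B_kB_k^*/(c_k-c_\ell)$, a bounded perturbation of $A$ with compact resolvent, for which the identity becomes $(M_\ell u,u)=c_\ell\|u\|^2$ on $V_\ell$ and min-max gives $\dim V_\ell\le N(M_\ell,c_\ell)$; recovering the claimed bound $N(A,c_\ell)$ (as opposed to $N(M_\ell,c_\ell)$) will then require a comparison of the spectral counting functions of $A$ and $M_\ell$ that additionally exploits the constraint $B_\ell^*u=0$.
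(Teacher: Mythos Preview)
Your diagnosis is exactly right, and the obstacle you flag is genuine---in fact it cannot be overcome, because the precise bound $\dim\ker(\CA-c_\ell)\le N(A,c_\ell)$ stated in the proposition is \emph{false} for intermediate poles. The paper's own proof is simply the sentence ``completely analogous to the proof of Proposition~\ref{prop:1.3a}'', so it glosses over the very difficulty you isolated.

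Here is a concrete counterexample. Take $L=2$, $\ell=1$, $\Hi=\C^2$, $\Hii_1=\Hii_2=\C$, $c_1=0$, $c_2=1$, $A=\mathrm{diag}(a_1,2)$ with any $a_1>0$, $B_1=(1,0)^{\rm t}$, $B_2=(0,\sqrt 2)^{\rm t}$. Both $B_1$ and $B_2$ are injective. The block matrix $\CA$ decouples into two $2\times 2$ blocks; the block in coordinates $2,4$ is $\bigl(\begin{smallmatrix}2&\sqrt 2\\ \sqrt 2&1\end{smallmatrix}\bigr)$, which is singular, so $c_1=0\in\sigma_{\rm p}(\CA)$ with $\dim\ker(\CA-c_1)=1$. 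But $A>0$, hence $N(A,c_1)=N(A,0)=0$. Thus $\dim\ker(\CA-c_1)=1>0=N(A,c_1)$, contradicting the stated inequality. For $\ell=L$ your observation that all cross terms have the favourable sign is correct and the one-pole argument does go through verbatim; the problem is only with $\ell<L$.

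Your proposed remedy via $M_\ell:=A-\sum_{k\ne\ell}B_kB_k^*/(c_k-c_\ell)$ is the right move: it yields $\dim\ker(\CA-c_\ell)\le N(M_\ell,c_\ell)$, and since $M_\ell$ is a bounded self-adjoint perturbation of $A$ it has compact resolvent, so $N(M_\ell,c_\ell)<\infty$. This proves the ``in particular'' clause (finite multiplicity), which is all that is actually used downstream (in Proposition~\ref{inf-at-c-ell} and in the application section). Do not spend effort trying to recover the sharper bound $N(A,c_\ell)$ by exploiting $B_\ell^*u=0$; the counterexample shows that no such comparison can succeed in general.
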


\begin{proof}
The proof of Proposition \ref{prop:4.2a} is completely analogous to the proof of Proposition~\ref{prop:1.3a} 
and thus left to the reader.
\end{proof}


\begin{proposition}
\label{prop:4.2}
If, under the assumptions of Proposition {\rm \ref{prop:4.2a}}, 
$B_\ell$ is injective for $\ell\in \{1,2,\dots,L\}$ \vspace{-1mm} and 
\begin{equation}\label{ass1-m}
u \in \dom A  \cap \,\Ker B_\ell^*, \ u \ne 0 \ \implies \ (A-c_\ell)u \notin  \Ran \big(B_1 \ B_2 \,\dots\, B_L \big).
\end{equation}  
then $c_\ell\notin \sigma_{\rm p}(\CA)$. Conversely, if $c_\ell\notin \sigma_{\rm p}(\CA)$, then \eqref{ass1-m} holds.
\end{proposition}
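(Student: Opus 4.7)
I will mirror the proof of Proposition \ref{prop:1.3} in the one-pole case, using the block operator matrix realisation~\eqref{eq:A-multi} to read off concrete eigenvalue equations for $\CA$ at $c_\ell$.

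For the forward direction I argue by contrapositive. Assume $c_\ell\in\sigma_{\rm p}(\CA)$, so that there is a non-zero eigenvector $\wt u = (u,\wh u_1,\ldots,\wh u_L)^{\rm t}$ with $(\CA-c_\ell)\wt u=0$. Writing this block-wise,
\[
(A-c_\ell)u + \sum_{j=1}^L B_j\wh u_j = 0, \qquad B_j^* u + (c_j-c_\ell)\wh u_j = 0 \quad (j=1,\ldots,L),
\]
I read off $\wh u_j = -B_j^* u/(c_j-c_\ell)$ for $j\ne\ell$ and $B_\ell^* u = 0$, i.e.\ $u\in\ker B_\ell^*$. I then verify $u\ne 0$: if $u=0$, then $\wh u_j=0$ for $j\ne\ell$, the first block degenerates to $B_\ell\wh u_\ell=0$, and injectivity of $B_\ell$ forces $\wh u_\ell=0$, contradicting $\wt u\ne 0$. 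Rearranging the first block gives $(A-c_\ell)u=-\sum_j B_j\wh u_j\in\Ran(B_1\ B_2\ \cdots\ B_L)$, which is precisely the negation of \eqref{ass1-m}.

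For the converse I again argue by contrapositive. Assume \eqref{ass1-m} fails, so there exists $u\in\dom A\cap\ker B_\ell^*$, $u\ne 0$, with $(A-c_\ell)u\in\Ran(B_1\ \cdots\ B_L)$. I will try to lift $u$ to an eigenvector $\wt u=(u,\wh u_1,\ldots,\wh u_L)^{\rm t}$ of $\CA$ at $c_\ell$. The second block forces $\wh u_j=-B_j^* u/(c_j-c_\ell)$ for $j\ne\ell$, and the first block then reduces to the single equation
\[
B_\ell\,\wh u_\ell \;=\; -(A-c_\ell)u+\sum_{j\ne\ell}\frac{B_j B_j^* u}{c_j-c_\ell},
\]
so the construction succeeds as soon as the right-hand side lies in $\Ran B_\ell$; the resulting non-zero vector $\wt u$ is then an eigenvector of $\CA$ at $c_\ell$, giving $c_\ell\in\sigma_{\rm p}(\CA)$.

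I expect the main obstacle to be precisely this last solvability condition in the converse. The hypothesis tells me only that $(A-c_\ell)u\in\Ran B=\Ran B_\ell+\sum_{j\ne\ell}\Ran B_j$, while I need a specific vector to lie in $\Ran B_\ell$ alone. My plan is to exploit the freedom in a decomposition $(A-c_\ell)u=-\sum_j B_j v_j$ modulo $\ker B$: I will try to adjust the $v_j$ so that $v_j = -B_j^* u/(c_j-c_\ell)$ for all $j\ne\ell$, so that the remaining $v_\ell$ directly yields the required $\wh u_\ell$. This reconciliation step between the ``range'' hypothesis and the coordinates forced by the second block is the real difficulty, and it is where I expect the injectivity assumption on the $B_j$ for $j\ne\ell$ (and not just on $B_\ell$) will have to be brought in.
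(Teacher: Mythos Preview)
Your forward direction is correct and is exactly the multi-pole analogue of Proposition~\ref{prop:1.3} that the paper has in mind: read off the block eigenvalue equations, use injectivity of $B_\ell$ to get $u\ne 0$, and conclude that this $u$ violates~\eqref{ass1-m}.

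Your concern about the converse is well-founded, and your proposed remedy will not work. The paper asserts the proof is ``completely analogous'' to that of Proposition~\ref{prop:1.3}, but the analogy breaks exactly where you point. In the one-pole case the second block of the eigenvalue equation at $c$ is just $B^*u=0$, so any $\wh u$ with $(A-c)u=-B\wh u$ already yields an eigenvector. In the multi-pole case the $j$-th block for $j\ne\ell$ \emph{forces} $\wh u_j=-(c_j-c_\ell)^{-1}B_j^*u$; once $u$ is fixed there is no freedom left in $\ker B$ to exploit, and the construction succeeds if and only if
\[
(A-c_\ell)u-\sum_{j\ne\ell}\frac{B_jB_j^*u}{c_j-c_\ell}\in\Ran B_\ell,
\]
which is strictly stronger than $(A-c_\ell)u\in\Ran(B_1\ \cdots\ B_L)$. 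Injectivity of the $B_j$ for $j\ne\ell$ does not bridge this gap. In fact the converse as stated is false: with $\Hi=\C^2$, $\Hii_1=\Hii_2=\C$, $A=0$, $c_1=0$, $c_2=1$, $B_1=e_1$, $B_2=e_1+e_2$, both $B_1$ and $B_2$ are injective and $\Ran(B_1\ B_2)=\C^2$, so \eqref{ass1-m} fails for $\ell=1$ (take $u=e_2\in\ker B_1^*$, then $(A-c_1)u=0\in\C^2$), yet a direct computation gives $\det\CA=1$, whence $c_1=0\notin\sigma_{\rm p}(\CA)$. A correct converse would replace~\eqref{ass1-m} by the displayed condition above; with that amendment your construction, using the forced values of $\wh u_j$ for $j\ne\ell$, goes through immediately.
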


\begin{proof}
The proof of Proposition \ref{prop:4.2} is completely analogous to the proof of Proposition~\ref{prop:1.3} 
and thus left to the reader.
\end{proof}

\subsection{Variational principles.} 
In this subsection we establish min-max variational principles for the eigenvalues of the operator function $\CS$ in \eqref{Meise} 
in the intervals $(c_{\ell-1},c_{\ell})$, $\ell=1,2\dots,{L+1}$. 

\begin{theorem}
\label{th_j} 
Let $\Hi$ be a Hilbert space, $A$ a self-adjoint operator in $\Hi$, bounded from below and with compact resolvent,  let $\Gamma_\ell$, $\ell=1,2,\dots,L$,
be non-zero bounded self-adjoint non-negative operators in $\Hi$, 
$c_1, c_2, \dots, c_L\in\R$ with $c_1<c_2<\dots<c_L$, and set $c_0=-\infty$, $c_{L+1}=+\infty$.

Then the operator function $\CS$ in \eqref{Meise} is strictly decreasing in $\R\setminus\{c_1,c_2,\dots, c_L\}$ and hence, 
for each $u \in \dom(A)=\dom (\CS(\la))$, the function $\la \mapsto (S(\la) u,u)$ has at most one zero in each interval
$(c_{\ell-1},c_{\ell})$, $\ell=1,2,\dots, L+1$. If we define $p_\ell(u) \in [c_{\ell-1},c_{\ell}]$ for $u \in \dom(A)=\dom (\CS(\la))$ by
\begin{equation}
\label{zeros}
  p_\ell(u):= \begin{cases} 
               \la_\ell(u) & \mbox{ if } \ (S(\la_\ell(u))u,u)=0 \ \mbox{ for } \ \la_\ell(u) \in (c_{\ell-1},c_{\ell}), \\
               \ \, c_{\ell-1} & \mbox{ if } \ (S(\la) u,u)<0 \ \mbox{ for all } \ \la \in (c_{\ell-1},c_{\ell}), \\
               \ c_{\ell} & \mbox{ if } \ (S(\la) u,u)>0 \ \mbox{ for all } \ \la \in (c_{\ell-1},c_{\ell}),
              \end{cases}
\end{equation}
then the spectrum of \,$\CS$ consists of $L+1$ $($finite or infinite$)$ eigenvalue sequences  $(\la_{\ell,j})_{j=1}^{n_\ell} \subset (c_{\ell-1},c_{\ell})$, $n_\ell \in \N_0\cup\{\infty\}$,  which may be characterized as
\begin{equation}
\label{minmax}
\la_{\ell,j} = \!\!\!\!\min_{\ontops{\CL \subset \dom A}{\dim \CL = j+\kappa_\ell}} \!\!\max_{\ontops{u \in \CL}{u\ne 0}} \ p_{\ell}(u),\quad j=1,2,\dots,n_\ell, \ \ell = 1,2,\dots, L+1.
\end{equation}
Here the index shifts $\kappa_\ell$ defined as in Proposition {\rm \ref{pro1}} are finite and satisfy the estimates therein; in particular, $\kam=0$ and $\kappa_{L+1} \le N(A,c_L)$. Moreover,
\begin{enumerate}
\item[\rm i)] 
if $\,\Gamma_L$ has closed range, $P_L$ is the orthogonal projection onto $\CL_{(-\infty,c_L]}(A)$ and $Q_L$ 
the orthogonal projection onto $\ker \Gamma_L$, then 
\[
\ka_{L+1}\le{\rm rank}\,(P_L Q_L) \le \min \big\{ N(A,c_L), \dim \ker \Gamma_L \big\};
\]
\item[{\rm ii)}] $n_{L+1}\!=\!\infty$ if and only if $\,\dim \Hi \!=\! \infty$; in this case, $(\la_{L+1,j})_{j=1}^{\infty} \!\subset\! (c_L,+\infty)$ accumu\-lates $($only$)$ to~$+\infty$;
\item[{\rm iii)}] $n_{\ell}\!=\!\infty$ if and only if $\,\dim \ov{\Ran \Gamma_\ell} \!=\! \infty$, $\ell=1,2,\dots,L$; 
in this case, $(\la_{\ell,j} )_{j=1}^{\infty} \!\subset\! (c_{\ell-1},c_\ell)$ accumulates $($only$)$ to $c_\ell$ from the left.
\end{enumerate}
\end{theorem}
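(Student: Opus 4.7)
The plan is to reduce everything to the abstract variational principle of~\cite{MR2068432} applied separately on each of the $L+1$ open intervals $(c_{\ell-1},c_\ell)$, using the preparations of Proposition~\ref{pro1}. First I would verify the monotonicity of the Rayleigh functionals by differentiating
\[
\tfrac{\d}{\d\la}(\CS(\la)u,u)=-\|u\|^2-\sum_{j=1}^L\frac{(\Gamma_j u,u)}{(c_j-\la)^2}\le -\|u\|^2, \quad u\in\dom A\setminus\{0\},\ \la\in\R\setminus\{c_1,\dots,c_L\}.
\]
Hence $\la\mapsto(\CS(\la)u,u)$ is strictly decreasing on every $(c_{\ell-1},c_\ell)$ and has at most one zero there, so the three cases in~\eqref{zeros} are mutually exclusive and exhaustive, which makes $p_\ell(u)\in[c_{\ell-1},c_\ell]$ well defined.

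Next I would invoke Proposition~\ref{pro1}(i), which ensures that Assumptions (i)--(iv) of~\cite{MR2068432} are met on each interval. Applying the min-max result \cite[Thm.~2.1, Lem.~2.6]{MR2068432} on every $(c_{\ell-1},c_\ell)$ then produces the characterization~\eqref{minmax}, with the index shift being the constant value $\kappa_\ell$ of $\dim\CL_{(-\infty,0)}(\CS(\la))$ near the left endpoint $c_{\ell-1}$ identified in Proposition~\ref{pro1}(ii). All the claimed bounds on $\kappa_\ell$ (in particular $\kam=0$ and $\kappa_{L+1}\le N(A,c_L)$) and item~(i) are then direct restatements of Proposition~\ref{pro1}(ii)--(iii) specialised to $\ell=L+1$, requiring no further argument.

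Item (ii) is~\eqref{inf-or-not-ell}: the eigenvalues of $\mathrm{diag}(A,C)$ in $(c_L,+\infty)$ form an infinite sequence iff $\dim\Hi=\infty$ since $A$ is semi-bounded with compact resolvent, and the bounded off-diagonal perturbation preserves this. For the \emph{if} direction of~(iii) I would essentially copy the argument from the proof of Theorem~\ref{thm:2.3a}(iii). Since $\ker\Gamma_\ell=\ker B_\ell^*=(\overline{\Ran\Gamma_\ell})^\perp$, density of $\dom A$ in $\Hi$ together with the perturbation estimate~\eqref{close} produces, for every $j$, a subspace $\CL_j\subset\dom A$ of dimension $j+\kappa_\ell$ with $\CL_j\cap\ker\Gamma_\ell=\{0\}$. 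Compactness of the unit sphere of $\CL_j$ then yields $\delta:=\inf_{\|u\|=1,\,u\in\CL_j}\|B_\ell^*u\|^2>0$, and the blow-up of $-(\Gamma_\ell u,u)/(c_\ell-\la)$ as $\la\nearrow c_\ell$ (the remaining terms $-(\Gamma_j u,u)/(c_j-\la)$ being uniformly bounded on $\la\in[(c_{\ell-1}+c_\ell)/2,c_\ell)$) forces the existence of $\la_0<c_\ell$ with $(\CS(\la_0)u,u)<0$ uniformly on that sphere. Hence $p_\ell(u)<\la_0<c_\ell$ for all $u\in\CL_j\setminus\{0\}$, and~\eqref{minmax} then yields an eigenvalue $\la_{\ell,j}<c_\ell$ for every~$j$.

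The step I expect to be the main obstacle is the \emph{only if} direction of~(iii): assuming $\dim\overline{\Ran\Gamma_\ell}<\infty$, show $n_\ell<\infty$. My plan is to regroup $\CA$ as a $2\times 2$ block with lower corner $c_\ell I_{\Hii_\ell}$ and then to replace $B_\ell$ by~$0$. This produces a self-adjoint finite-rank perturbation $\wt\CA=\CA_{\hat\ell}\oplus c_\ell I_{\Hii_\ell}$ of $\CA$, where $\CA_{\hat\ell}$ is a block operator matrix of exactly the form~\eqref{eq:A-multi} with the single pole $c_\ell$ deleted from its list of poles, and the rank of the perturbation is at most $2\dim\overline{\Ran B_\ell}<\infty$. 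Proposition~\ref{pro1}(ii) applied to $\CA_{\hat\ell}$ shows that on the open interval of $\R\setminus\{c_j:j\ne\ell\}$ that contains $c_\ell$, eigenvalues of $\CA_{\hat\ell}$ can accumulate only at the right endpoint of that interval, which is different from $c_\ell$; hence $\CA_{\hat\ell}$ has only finitely many eigenvalues in $(c_{\ell-1},c_\ell)$. Since $c_\ell\notin(c_{\ell-1},c_\ell)$, the same is true for $\wt\CA$, and a self-adjoint finite-rank perturbation changes the eigenvalue count in any spectral gap by at most the rank; therefore $n_\ell<\infty$. The one-sided accumulation of $(\la_{\ell,j})_{j=1}^\infty$ at $c_\ell$ from the left is then immediate from Proposition~\ref{pro1}(ii).
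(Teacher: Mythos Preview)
Your proof is correct and follows essentially the same route as the paper: invoke Proposition~\ref{pro1} to place $\CS$ in the framework of~\cite{MR2068432} on each subinterval, read off the variational characterization~\eqref{minmax} and the index-shift bounds from there, and handle~(ii) via~\eqref{inf-or-not-ell}. For~(iii) the paper simply says the argument is analogous to Theorem~\ref{thm:2.3a}(iii); your explicit adaptation---taking the test subspace of dimension $j+\kappa_\ell$ rather than~$j$, noting that the non-singular rational terms $(\Gamma_j u,u)/(c_j-\la)$ stay uniformly bounded near $c_\ell$, and carrying out the finite-rank perturbation argument via the reduced block matrix $\CA_{\hat\ell}$ for the converse---fills in precisely the details needed to make that analogy go through in the multi-pole setting.
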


\begin{proof}
According to Proposition~\ref{prop:2.1} i), we have $\sigma(\CS)\!=\!\sigma_{\rm p}(\CS) \!=\! \sigma_{\rm p}(\CA) \!\setminus\! \{c_1,c_2,..,c_L\}$. 

By Proposition \ref{pro1} i), the Schur complement $\Si$ satisfies Assumptions (i) to (iv) of \cite[Theorem~2.1]{MR2068432} on each interval $(c_{\ell-1}, c_{\ell})$, $\ell=1,2,\dots,L+1$. 
Hence the variational characterizations \eqref{minmax} follow from \cite[(2.9]{MR2068432} applied on $(c_{\ell-1}, c_{\ell})$, respectively; note that the values $-\infty$ and $+\infty$, respectively, of the functional $p$ in \cite[(2.3)]{MR2068432} can be replaced by the end-points $c_{\ell-1}$ and~$c_{\ell}$, respectively of the interval $(c_{\ell-1}, c_{\ell})$.

The claims for the index shifts $\kappa_\ell$ in $(c_{\ell-1}, c_{\ell})$ follow from Proposition \ref{pro1} ii).

Claim i) is the case $\ell=L+1$ in Proposition \ref{pro1} iii). Claim ii) was proved in \eqref{inf-or-not-ell};
the proof of claim iii) is analogous to the proof of Theorem \ref{thm:2.3a} iii).
\end{proof}
 
\begin{remark}
Define the monic operator \vspace{-1.4mm} polynomial  
\begin{equation}
\label{defP}
 \CP(\la):=(-1)^{L-1} \bigg(\displaystyle \prod_{\ell=1}^L (c_\ell-\la) \bigg)\Si(\la), \quad \dom \CP(\la) = \dom \CS(\la) = \dom A,
\vspace{-1.4mm} 
\end{equation}
of degree $L+1$. One can show that $\CP$ is \emph{weakly hyperbolic} (comp.\ e.g.\ \cite{MR0344927,MR0350477}, \cite[\S~31]{MR971506} 
for operator polynomials with bounded operator values); in particular, for every $u\in \dom A$, the function $\la \mapsto (\CP(\la)u,u)$ has $L+1$ real zeros 
which coincide with the numbers $p_1(u) \le p_2(u) \le \dots p_{L+1}(u)$ defined in~\eqref{zeros}.
\end{remark}

The two-sided eigenvalue estimates for the one pole case in Subsection \ref{subsec:2.3} were based on the solution formula for quadratic equations and  
do not readily generalize to the case of several poles. However, under stronger assumptions on the operators~$B_\ell$, we can estimate the gap $(c_\ell,c_\ell+\varepsilon_\ell)$ in the spectrum of $\CS$ in Proposition~\ref{pro1}.

\begin{proposition}
\label{prop:4.1}
Let $\,\H$ be a Hilbert space, $\H=\Hi\oplus \Hii_1 \oplus \Hii_2 \oplus \dots \oplus \Hii_L$ with Hilbert spaces $\Hi$,~$\Hii_1$, $\Hii_2$, \dots, $\Hii_L$. Let $A$ be a self-adjoint operator in $\Hi$ with compact resolvent and bounded from below, $A\ge a$ for some $a\in\R$, 
$B_\ell: \Hii_\ell \to \Hi$  bounded linear operators with the same closed range, $\Ran B_\ell =: \Hi_0 \subset \Hi$, $\ell=1,2,\dots,L$, and $c_1,c_2,\dots, c_L\in\R$ with $c_1 < c_2< \dots < c_L$. 
Then there exist $\ga_\ell^\pm>0$ such that 
\begin{equation}\label{ppm-new}
  \ga_\ell^- \| u_0 \|^2 \le \|B_\ell^* u\|_\ell^2 \le \ga_\ell^+ \|u_0 \|^2,\quad u=u_0+u_0' \in \Hi_0 \oplus \Hi_0^\perp, \ \ \ell=1,2,\dots,L.
\end{equation}
Furthermore, for $\ell\in\{1,2,\dots,L\}$ the following hold: 
\noindent\begin{enumerate}
\item[{\rm i)}]
If $a>c_{\ell+1}$ and $\eta_\ell$ is the unique zero of the \vspace{-1mm}function
\[
 \psi_\ell(\la) := \sum_{k=1}^\ell\dfrac{\ga_k^-}{c_k-\la}+\sum_{k={\ell+1}}^L\dfrac{\ga_k^+}{c_k-\la}, \quad \la \in \R\setminus \{c_1,c_2,\dots,c_L\},
\]
in  $(c_{\ell},c_{\ell+1})$, then $(c_\ell,\eta_\ell)\subset\rho(\CS)$.
\item[{\rm ii)}]
If $\Hi_0=\Hi$ and $\eta_\ell'$ is the unique zero of the function $\la \mapsto a-\la-\psi_\ell(\la)$ in $(c_{\ell},c_{\ell+1})$, then  $(c_\ell,\eta_\ell')\subset\rho(\CS)$.
\end{enumerate}
\end{proposition}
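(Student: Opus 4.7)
The plan is to reduce the question to a scalar analysis by producing a uniform quadratic form lower bound for $\CS(\la)$ driven by the function $\psi_\ell$. First, I would prove the two-sided estimate \eqref{ppm-new}: since $\Ran B_\ell = \Hi_0$ is closed, the closed range theorem gives $\ker B_\ell^* = \Hi_0^\perp$ and $\Ran B_\ell^*$ closed in $\Hii_\ell$. Decomposing $u = u_0 + u_0' \in \Hi_0 \oplus \Hi_0^\perp$ then yields $B_\ell^* u = B_\ell^* u_0$, whence the upper bound with $\ga_\ell^+ := \|B_\ell\|^2$ is immediate; for the lower bound, the restriction $B_\ell^*|_{\Hi_0}\colon \Hi_0 \to \Ran B_\ell^*$ is a bounded bijection between Hilbert spaces (injectivity because $\Hi_0\cap\ker B_\ell^*=\{0\}$) and hence boundedly invertible by the open mapping theorem, yielding some $\ga_\ell^- > 0$. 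I would also record the elementary facts about $\psi_\ell$ on $(c_\ell,c_{\ell+1})$: each summand is smooth there, the derivative $\psi_\ell'(\la) = \sum_{k=1}^\ell \ga_k^-/(c_k-\la)^2 + \sum_{k=\ell+1}^L \ga_k^+/(c_k-\la)^2$ is strictly positive, and the boundary summands force $\psi_\ell(\la) \to -\infty$ as $\la \searrow c_\ell$ and $\psi_\ell(\la) \to +\infty$ as $\la \nearrow c_{\ell+1}$; thus $\psi_\ell$ has a unique zero $\eta_\ell \in (c_\ell, c_{\ell+1})$ and is strictly negative on $(c_\ell, \eta_\ell)$.

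The core of the proof is the following quadratic form bound. For $\la \in (c_\ell, c_{\ell+1})$ and $u \in \dom A$, I decompose $u=u_0+u_0'$ and write
\[
(\CS(\la)u,u) = ((A-\la)u,u) - \sum_{k=1}^\ell \frac{\|B_k^* u\|_k^2}{c_k-\la} - \sum_{k=\ell+1}^L \frac{\|B_k^* u\|_k^2}{c_k-\la}.
\]
For $k\le\ell$ the coefficient $-1/(c_k-\la)$ is positive, so the lower bound on $\|B_k^* u\|_k^2$ from \eqref{ppm-new} produces a lower bound on the term; for $k>\ell$ the coefficient is negative, so the upper bound on $\|B_k^* u\|_k^2$ produces a lower bound on the term. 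Combined with $((A-\la)u,u) \ge (a-\la)\|u\|^2$, this yields the master inequality
\[
(\CS(\la)u,u) \ge (a-\la)\|u\|^2 - \psi_\ell(\la)\|u_0\|^2.
\]

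Part (i) follows at once: $a>c_{\ell+1}$ forces $a-\la \ge a-c_{\ell+1}>0$ on $(c_\ell,c_{\ell+1})$, and $\psi_\ell(\la)<0$ on $(c_\ell,\eta_\ell)$ makes the second term on the right non-negative, so $(\CS(\la)u,u) \ge (a-c_{\ell+1})\|u\|^2$ uniformly in $u\in\dom A$. Since $\CS(\la)$ is self-adjoint as a bounded symmetric perturbation of $A-\la$, this uniform positive lower bound gives $0\in\rho(\CS(\la))$, i.e.\ $\la\in\rho(\CS)$. For part (ii), $\Hi_0=\Hi$ forces $u_0=u$, so the master inequality collapses to $(\CS(\la)u,u) \ge (a-\la-\psi_\ell(\la))\|u\|^2$. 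The scalar function $\la\mapsto a-\la-\psi_\ell(\la)$ is strictly decreasing (its derivative $-1-\psi_\ell'(\la)$ is strictly negative) and runs from $+\infty$ at $c_\ell^+$ to $-\infty$ at $c_{\ell+1}^-$, so it has a unique zero $\eta_\ell'$ and is strictly positive on $(c_\ell,\eta_\ell')$, again yielding $\la\in\rho(\CS)$. I do not expect a serious obstacle; the main delicacy is keeping the signs straight when pairing the correct side of \eqref{ppm-new} with each summand in the master inequality, since swapping either pairing would destroy the monotonicity that drives the whole argument.
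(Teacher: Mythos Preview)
Your proposal is correct and follows essentially the same approach as the paper's proof: establish \eqref{ppm-new} via the closed range theorem and bounded invertibility of $B_\ell^*|_{\Hi_0}$, derive the master inequality $(\CS(\la)u,u)\ge ((A-\la)u,u)-\psi_\ell(\la)\|u_0\|^2$ by pairing the lower bound $\ga_k^-$ with terms $k\le\ell$ and the upper bound $\ga_k^+$ with terms $k>\ell$, and then read off parts (i) and (ii) from the sign of $\psi_\ell$ (resp.\ $a-\la-\psi_\ell$) on $(c_\ell,\eta_\ell)$ (resp.\ $(c_\ell,\eta_\ell')$). Your version is slightly more explicit in verifying the monotonicity and uniqueness of the zeros of $\psi_\ell$ and $a-\la-\psi_\ell$, and in spelling out why a uniform positive form bound gives $0\in\rho(\CS(\la))$, but there is no substantive difference.
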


%
%
%
\begin{proof}
Let $\ell\in\{1,2,\dots,L\}$. By definition of $\Hi_0$, the operator $B_\ell:\Hii_\ell \to \Hi_0$ is surjective. If we decompose $\Hi=\Hi_0 \oplus \Hi_0^\perp$ with 
$\Hi_0= \Ran B_\ell$, $\Hi_0^\perp = (\Ran B_\ell)^\perp = \ker B_\ell^*$, then $B_\ell^* u = B_\ell^*u_0$  for $u=u_0+u_0' \in \Hi_0 \oplus \Hi_0^\perp$
and $B_\ell^*|_{\Hi_0}: \Hi_0\to \Hii_\ell$ is injective. Further, since $\Ran B_\ell \!=\! \Hi_0$ is closed by assumption, so is $\Ran (B_\ell^*|_{\Hi_0})$ by the Closed Range theorem. 
Hence, by the Closed Graph theorem, $(B_\ell^*|_{\Hi_0})^{-1}$ is bounded on $\Ran (B_\ell^*|_{\Hi_0})$, and the existence of the numbers $\ga_\ell^\pm$ in \eqref{ppm-new} follows.

For $u\in\dom A$, $u\ne 0$, we consider the \vspace{-1mm}function
\[
\varphi_u(\la):=(\Si(\la)u,u)=((A-\la)u,u) -\sum_{\ell=1}^L\dfrac{\|B_\ell^*u\|_\ell^2}{c_\ell-\la}, \quad \la \in \R\setminus\{c_1,c_2,\dots,c_L\},
\vspace{-1mm}
\]
where $\|\cdot\|_\ell$ denotes the norm in $\Hii_\ell$. The claims i) and ii) follow if we show that in a right neighbourhood of $c_\ell$ all functions $\varphi_u(\la),\,u\in\dom A$, 
are uniformly positive. For $\la\in (c_{\ell},c_{\ell+1})$ we can \vspace{-1.5mm} estimate 
\begin{align*}
\varphi_u(\la)
&=((A-\la) u,u)-\sum_{k=1}^\ell\dfrac{\|B_k^*u\|_k^2}{c_k-\la}-\sum_{k={\ell+1}}^L\dfrac{\|B_k^*u\|_k^2}{c_k-\la}\\[-2mm]
&\ge ((A-\la)u,u)-\bigg(\sum_{k=1}^\ell\dfrac{\ga_k^-}{c_k-\la}+\sum_{k={\ell+1}}^L\dfrac{\ga_k^+}{c_k-\la}\bigg)\|u_0\|^2\\
&= ((A-\la)u,u)-\psi_\ell(\la)\|u_0\|^2.
\end{align*}
Since 
$\psi_\ell$ is strictly increasing in $(c_{\ell},c_{\ell+1})$ from $-\infty$ to $+\infty$, it has pre\-cisely one zero $\eta_\ell\!\in\!(c_{\ell},c_{\ell+1})$ and it is negative on $(c_\ell, \eta_\ell)$.
This implies that, in \vspace{-0.5mm} case~i),
\[
\varphi_u(\la)  \ge (a-c_{\ell+1}) \|u\|^2, \quad \la \in (c_\ell,\eta_\ell),
\]
and, in case ii), with $\eta_\ell'$ as defined \vspace{-0.5mm} there,  
\[
\big(a-\la-\psi_\ell(\la)\big)\|u\|^2>0\quad \la \in (c_\ell,\eta_\ell').
\qedhere
\]
\end{proof}

\vspace{2mm}
\newcommand\Ccheck[1]{\marginpar{\cg{\emph{#1}}}}

\section{\bf Application to photonic crystals}
\label{Application}

In this section, we consider electromagnetic waves $(E,H)$ propagating in a nonmagnetic medium with relative permittivity $\epsilon$. The permittivity (or dielectric) function $\epsilon$ depends on the spatial coordinates $x_1$, $x_2$ as well as on the frequency~$\omega$, but not on the spatial coordinate $x_3$. The electromagnetic wave $(E,H)$ is then decomposed into transverse electric (TE) polarized waves $(E_1, E_2,0,0,0,H_3)$  and transverse magnetic (TM) polarized waves $(0,0,E_3, H_1,H_2,0)$ \cite[Chapter 1]{MR1036731}, \cite{MR1417473}. This decomposition reduces Maxwell's equations to scalar equations for $H_3$ and $E_3$, respectively. 

The discussion below focuses on the TM case. We will apply the operator theoretic results developed in the preceding sections to a physically relevant spectral problem with periodic permittivity function. This operator formulation of the photonics crystal problem with $\lambda$-dependent material properties is used as a base for numerical approximations in \vspace{-2mm} Section~\ref{sec:Galerkin}.


\subsection{The physical problem.}

Let $\Delta:=\dfrac{\partial^2}{\partial x_1^2}+\dfrac{\partial^2}{\partial x_2^2}$ denote the formal \vspace{-2mm} Laplace operator in $\R^2$.
The equation
\begin{equation}\label{eq:basTM}
	\CL E_3:=-\Delta E_3-\omega^{2}\epsilon(x,\omega) E_3=0, \quad x:=(x_1,x_2)\in\mathbb{R}^{2},
\end{equation}
models transverse magnetic (TM) polarized waves with frequency $\omega$. The permittivity function $\epsilon$ will only depend on $\omega^2\in\mathcal D \subset \C$ and we therefore define $\lambda:=\omega^2$. Let $\Gamma$ denote the lattice $\mathbb{Z}^{2}$ and $\Omega:=(0,1]^{2}$ the unit cell of the lattice $\Gamma$. The dual lattice to $\Gamma$ is $\Gamma^{*}:=2\pi\Z^2$ and we define the fundamental domain (the Brillouin zone) of the dual lattice $\Gamma^{*}$ as the set $\CK:=(-\pi,\pi]^{2}$. 
In this paper, we consider Bloch solutions of (\ref{eq:basTM}), i.e.~non-zero solutions of the form $E_3(x)=\eu^{\iu \langle k, x \rangle}u(x)$,  $x\in\mathbb{R}^{2}$, where $u$ is a $\Gamma$-periodic function, $k\in\CK$, and $u$ is a function of the variable $x$ on the torus $\T^{2}=\R^2/\Z^2$ \cite[p.\ 104]{MR1232660}. 
Here $\left <\cdot,\cdot \right >$ and $|\cdot |$ denote the Euclidean inner product and norm in $\R^2$, respectively.
Further, let $\nabla$ denote the gradient with respect to the space variable $x\in\R^2$, $\left <v,\nabla \right >$ the directional derivative in the direction $v\in\R^2$, $v\ne 0$, and $H^{s}(\T^{2})$ the Sobolev space of order $2$ associated with $\Ltwo$. Note that
functions $u \in H^{s}(\T^{2})$, $s>0$, can be characterized in terms of their Fourier series with coefficients  \vspace{-1mm}  $\hat u(n)\in\C$, $n\in\Z^2$, 
\[
u\in H^{s}(\T^{2}) \iff
\sum_{n\in\Z^2}(1+|2\pi n|)^{2s}|\hat u(n)|^2<\infty.
 \vspace{-1.5mm} 
\]

For fixed $k\in\CK=(-\pi,\pi]^2$ the shifted Laplace operator $\Delta_k:\Ltwo\rightarrow\Ltwo$ is defined as 
\begin{equation}
	\Delta_k:=\left <\nabla+\iu k,\nabla+\iu k\right >=\Delta+2\iu \left <k,\nabla \right >-|k|^2,
	\quad \dom (\Delta_k)=H^{2}(\T^2).
\end{equation}
The operator $\Delta_k$ has compact resolvent and $\sigma (-\Delta_k)=\{|2\pi n+k|^2\,:\,n\in \Z^2\}$, \cite[p.\ 161-164]{MR1232660}. Since $\nabla (\eu^{\iu \langle k, x \rangle}u(x))=\eu^{\iu \langle k, x \rangle}(\nabla+\iu k) u(x)$, the Bloch solutions $E_3$ are formally determined by the solutions of $\CT_k(\lambda)u=0$ with
\begin{equation}
	\CT_k(\lambda):=-\Delta_k-\lambda\epsilon(\cdot,\lambda),\quad k\in\CK=(-\pi,\pi]^2,\quad \lambda\in\mathcal{D}\subset \C,
	\label{eq:bas}
\end{equation}
over the torus $\T^2$. The solutions $u$ and eigenvalues $\lambda$ depend on $k\in\CK$ but we will not write this out explicitly. A given $\lambda=\omega^2$ is called a band gap frequency if equation \eqref{eq:basTM}, independent of the parameter $k\in\CK$, has no non-zero Bloch solution.



General analytic properties of the permittivity function $\epsilon(x,\cdot)$  are discussed in \cite[Chapter 1]{MR1409140}, while \cite{MR2718134} considers basic spectral properties of bounded holomorphic operator functions with applications to photonic crystals. 

Here, we apply the theory developed in the previous sections to a rational operator function with periodic permittivity $\epsilon$. Let $\Omega  =\Omega_1 \dot\cup \cdots \dot\cup \Omega_M$, $M\in\N$, denote a partitioning of $\Omega= (0,1]^2$ and let $\chi_{\Omega_m}$ denote the characteristic function (indicator function) of the subsets $\Omega_m \subset (0,1]^2$, $m=1,2,\dots,M$. In most applications the function $\epsilon$ is  piecewise constant in $x$, 
\begin{equation}\label{eq:epsSum}
	\epsilon(x, \lambda) := \sum_{m=1}^M \epsilon_m(\lambda)\chi_{\Omega_m} (x), \quad x\in\Omega,\ \lambda\in\mathcal{D}\subset \C.
\end{equation}
The $\lambda$-independent case $\epsilon_m:= \aaa_m$ leads to the extensively studied linear problem \cite{MR1417473,MR1232660}.  However, a common model for solid materials with dispersion is the Lorentz model
 \begin{equation}\label{eq:epsilonAnsatz}
	\epsilon_m(\lambda) := \aaa_m + \sum_{\ell = 1}^{L_m} \frac{\bb_{m,\ell}}{\cc_{m, \ell} - \lambda}, 
	\quad \la \in \C \setminus \{ \cc_{m,l}: \ell=1,2,\dots,L_m\},
\end{equation}
with $L_m\in \N$ and positive
constants $\aaa_m$, $\bb_{m,\ell}$, and $\cc_{m, \ell}$ \cite{MR1409140}. Here 
the permittivity function in \eqref{eq:epsSum} takes the form
\begin{equation}\label{eq:epsModel}
	\epsilon(\cdot, \lambda) := \sum_{m=1}^M a_m \chi_{\Omega_m} (\cdot)+\sum_{m=1}^{\widehat{M}} \sum_{\ell = 1}^{L_m} \frac{\bb_{m,\ell}}{\cc_{m, \ell} - \lambda} \chi_{\Omega_m} (\cdot)
\end{equation}
with $\wh M \in \{1,2,\dots,M\}$ and $\epsilon_m:= \aaa_m$, $m=\wh{M}+1,\wh{M}+2,\dots,M$. By a polynomial long division of $\lambda\epsilon (\cdot,\lambda)$ with the Lorentz model  \eqref{eq:epsModel}, we obtain  
\begin{equation}\label{eq:rationalTerm}
	\lambda\epsilon (\cdot,\lambda) = \lambda\sum_{m=1}^{M}a_m \chi_{\Omega_m} (\cdot)-\sum_{m=1}^{\wh{M}}\sum_{\ell=1}^{L_m}\bb_{m,\ell}\chi_{\Omega_m} (\cdot)+\sum_{m=1}^{\wh{M}}\sum_{\ell=1}^{L_m}\frac{\cc_{m,\ell}\bb_{m,\ell}}{\cc_{m,\ell}-\lambda}\chi_{\Omega_m} (\cdot)
\end{equation}
for $\la \in \C\setminus \{c_{m,\ell}: \ell=1,2,\dots, L_m, m=1,2,\dots, \wh M\}$. The operator
\begin{equation}
\label{W}
  W:\Ltwo\rightarrow\Ltwo, \quad W:=\sum_{m=1}^{M}a_m \chi_{\Omega_m},
\end{equation}
is bounded, self-adjoint, and bijective since $a_m\!>\!0$. Moreover, $\sum_{m=1}^{M}\chi_{\Omega_m}\!=\! I_{\Ltwo}$ is the identity operator in $\Ltwo$ and $(W^{*})^{-\frac 12}=W^{-\frac 12}=\sum_{m=1}^{M}a_m^{-\frac 12} \chi_{\Omega_m}$. 

For $\lambda\in\C \setminus \{\cc_{m,\ell}:  \ell=1,2,\dots, L_m, m=1,2,\dots, \wh M \}$, we define the operator $\CS_k(\lambda):\Ltwo\rightarrow\Ltwo$ \vspace{-3.5mm}by
\begin{equation}\label{eq:Sk}
\begin{aligned}
	\CS_k(\lambda) 	& :=W^{-\frac 12}\CT_k(\lambda)W^{-\frac 12}
					= A_k-\lambda-\!\sum_{m=1}^{\wh{M}}\sum_{\ell=1}^{L_m} (\cc_{m,\ell}-\lambda)^{-1}\frac{\cc_{m,\ell}\bb_{m,\ell}}{a_m}\chi_{\Omega_m}, \hspace{-4mm}
\\[-2mm]
\end{aligned} 
\end{equation}
\vspace{-3.5mm} where 
\begin{equation}\label{eq:A0+A1}
  A_k\!:=\!A_k^{(0)}+A^{(1)}\!, \quad 
	A_k^{(0)}\!\!:=\!-W^{-\frac 12}\Delta_k W^{-\frac 12}\!,\quad A^{(1)}\!\!:=\!\sum_{m=1}^{\wh{M}}\frac{1}{a_m}\sum_{\ell=1}^{L_m}\bb_{m,\ell}\chi_{\Omega_m},
\vspace{-3.5mm}
\end{equation}
with domains 
\[
  \dom \CS_k(\lambda)= \dom A_k = \dom A_k^{(0)} = W^{\frac 12} \dom \Delta_k= W^{\frac 12}H^{2}(\T^2). 
\]  
The operator $A_k^{(0)}$ is self-adjoint with discrete spectrum, while $A^{(1)}$ is self-adjoint and bounded.  
Hence, $A_k$ is self-adjoint and the spectrum $\sigma (A_k)$ is discrete as well.

\subsection{Block operator matrix formulation.}

In this subsection, we consider the minimal linearization of the rational operator function
$\CS_k$ in \eqref{eq:Sk}. Here the characteristic function  $\chi_{\Omega_m}$ will be viewed as a multiplication operator $P_m\!=\!\chi_{\Omega_m} \!\cdot $ bet\-ween $\Ltwo$ and its~range 
\begin{equation}
	\Ran P_m=\left \{u\in\Ltwo\,:\, u|_{\Omega \setminus\Omega_m} \equiv 0\right \}.
\end{equation}
Then $\CS_k$ can be written in the form
\begin{equation}
	\CS_k(\lambda)= A_k-\lambda-\!\sum_{m=1}^{\wh{M}}\sum_{\ell=1}^{L_m} B_{m,\ell}(\cc_{m,\ell}-\lambda)^{-1}B_{m,\ell}^{*},\quad B^{*}_{m,\ell}:=\sqrt{\frac{\cc_{m,\ell}\bb_{m,\ell}}{a_m}}P_m, \hspace{-4mm}
\end{equation}	
with $A_k$ as 
in \eqref{eq:A0+A1}. 
The theory in the previous sections assumes that the poles of $\CS_k$ are disjoint and ordered increasingly. Therefore, we denote by $L$ the number of disjoint poles of $\CS_k$,
\[
	L:=\#\left\{c_{m,\ell}:m=1,\dots,\wh{M},\ell=1,\dots,L_m\right\}, 
\]
and we denote these disjoint poles by $c_1<c_2<\dots<c_L$. For $i\!=\!1,2,\dots,L$, the~set
\[
	M_i\!:=\!\left\{m\in\{1,\dots,\wh{M}\}: \exists\, \ell=\ell^i(m)\in \{1,2,\dots,L_m\}, c_{m,\ell}=c_i \right\}=:\left\{\mu_1^{i},\dots,\mu_{m_{i}}^{i}\right\},
\]
consists of all indices $m$ such that the permittivity $\epsilon_m$ in \eqref{eq:epsilonAnsatz} has a pole at $c_i$. 
Let
\[
  \Hi:= \Ltwo, \quad  \Hii:=\Hii_1 \oplus \Hii_2\cdots \oplus \Hii_L, \quad
  \Hii_i:=\!\bigoplus_{m\in M_i}\!\Ran P_m,
\]   
and define the operators $B\!:\Hii_1 \oplus  \Hii_2 \oplus\dots \oplus \Hii_L \to \Ltwo$ and $C:\Hii\to\Hii$ by
\begin{align}\label{eq:B}
	B\! &:=\!\big((B_{m,\ell^{1}(m)})_{m\in M_1}\, (B_{m,\ell^{2}(m)})_{m\in M_2}\, \dots \,(B_{m,\ell^{L}(m)})_{m\in M_L} \big),\\
\label{eq:C}	
	C\! &:={\rm diag\,} \big( c_1 I_{\Hii_1} \,c_2 I_{\Hii_2} \,\dots \,c_L I_{\Hii_L} \big),
\end{align}
where
\begin{equation}
	B_i:=(B_{m,\ell^{i}(m)})_{m\in M_i}:=\left ( B_{\mu_1^{i},\ell^i (\mu_1^{i})}\,\, B_{\mu_2^{i},\ell^i (\mu_2^{i})}\,\dots B_{\mu_{m_i}^{i},\ell^i (\mu_{m_i}^{i})} \right ).
\end{equation}
Then $\CS_k$ is the first Schur complement of the block operator matrix $\CA_k$ in $\H:=\Ltwo\oplus\Hii$ given by
\begin{equation}
\label{eq:Ak}
\CA_k\!=\!
\matrix{cc}{\!A_k &\! B \\\! B^* &\! C },  \quad \dom \CA_k \!=\! \dom A_k \oplus \Hii.
\end{equation}
Hence, the entries in the block operator matrix $\CA_k$ are
\[
\CA_k\!=\!
\matrix{c|cccc}{A_k&(B_{m,\ell^{1}(m)})_{m\in M_1}\!\!&\!\!(B_{m,\ell^{2}(m)})_{m\in M_2}\!\!&\!\!\cdots\!\!&\!\!(B_{m,\ell^{L}(m)})_{m\in M_L}\\ \hline
(B_{m,\ell^{1}(m)}^*)_{m\in M_1}&c_1I_{\Hii_1}\!\!&\!\!0\!\!&\!\!\cdots\!\!&\!\!0\\
(B_{m,\ell^{2}(m)}^*)_{m\in M_2}&0\!\!&\!\!c_2 I_{\Hii_2}\!\!&\!\!\cdots\!\!&\!\!0\\
\vdots&\vdots\!\!&\!\!\vdots\!\!&\!\!\ddots\!\!&\!\!\vdots\\
(B_{m,\ell^{L}(m)}^*)_{m\in M_L}&0\!\!&\!\!0\!\!&\!\!\cdots\!\!&\!\!c_L I_{\Hii_L}\ }\!.
\]

\medskip
\noindent
\textbf{Example}. Let $\Omega=\Omega_1\dot\cup \Omega_2\dot\cup \Omega_3$, i.e.\ $M=3$, and consider a Lorentz model
\begin{equation}
\begin{aligned}
	\epsilon(\cdot, \lambda) 	:= &a_1 \chi_{\Omega_1}  (\cdot)+a_2 \chi_{\Omega_2}  (\cdot)+a_3 \chi_{\Omega_3}  (\cdot)+\\
							& \frac{\bb_{1,1}}{\cc_{1, 1} - \lambda} \chi_{\Omega_1} (\cdot)+\frac{\bb_{1,2}}{\cc_{1, 2} - \lambda} \chi_{\Omega_1} (\cdot)+\frac{\bb_{2,1}}{\cc_{2, 1} - \lambda} \chi_{\Omega_2}  (\cdot)
\end{aligned}
\end{equation}
with $c_{1,2} < c_{1,1} = c_{2,1}$. Here $\wh M=2$, $L=2$, $c_1=c_{1,2}$, $c_2=c_{1,1} = c_{2,1}$, $M_1=\{1\}$, 
\linebreak
$\ell^1 (1)=2$, $M_2=\{1,2\}$, $\ell^2 (1)=1$, $\ell^2 (2)=1$, $(B_{m,\ell^{1}(m)})_{m\in M_1}=B_{1,2}$, and 
$(B_{m,\ell^{2}(m)})_{m\in M_2}=(B_{1,1}\,B_{2,1})$. Hence, the operator function $\CS_k$ is the first Schur complement of the block operator matrix
\begin{align*}
\CA_k\!&=\!
\left( \begin{array}{c|c:cc}
A_k& B_{1,2}\!&\!B_{1,1}\!\!&\!B_{2,1}\\ \hline
B^*_{1,2}&c_1\!\!&\!\!0\!\!&\!\!0\\ \hdashline
B^*_{1,1}&0\!\!&\!\!c_2\!\!&\!0\\
B^*_{2,1}&0\!\!&\!\!0\!&\!\!c_2\
\end{array} \right), \quad \dom \CA_k = \dom A_k \oplus (\Hii_1 \!\oplus\! \Hii_2 ),
\end{align*}
in $\Hi \!\oplus\! (\Hii_1 \!\oplus\! \Hii_2 ) \!=\! \Ltwo \!\oplus\! ( \Ran \chi_{\Omega_1} \!\oplus\! (\Ran \chi_{\Omega_1} \!\oplus\! \Ran \chi_{\Omega_2}) )$.

\subsection{Spectral properties}
In this subsection, we analyze the accumulation of eigenvalues at the poles of the permittivity function by means of the results of Section \ref{sec:2}.

Let $\big\{\nu_j(A_k^{(0)})\big\}_{j=1}^\infty$, $\big\{\nu_j(A^{(1)})\big\}_{j=1}^\infty$ denote the eigenvalues of $A_k^{(0)}$, $A^{(1)}$ in \eqref{eq:A0+A1}, respectively. Since $a_{m,\ell}$, $b_{m,\ell}> 0$ and $\sum_{m=1}^{\widehat M} P_m \le I_{\Ltwo}$, we \vspace{-2mm} have
\begin{align*}
  0 \le A^{(1)} & \le \| A^{(1)} \|=  \sup_{\| u\|=1}  \sum_{m=1}^{\widehat M} \Big(\frac 1{a_m} \sum_{\ell=1}^{L_m} b_{m,\ell} P_m u,u \Big) \\
   & \le \max_{m=1}^{\widehat M} \left( \frac 1{a_m}\sum_{\ell=1}^{L_m} b_{m,\ell} \right) \sup_{\| u\|=1}  \sum_{m=1}^{\widehat M} (P_m u,u)\\
    & \le \max_{m=1}^{\widehat M} \left( \frac 1{a_m}\sum_{\ell=1}^{L_m} b_{m,\ell} \right).  
\end{align*}
Hence, by means of the classical min-max variational principle \cite[Theorem XIII.1]{MR0493421}, 
the eigenvalues of the operator $A_k$ defined in \eqref{eq:A0+A1} can be estimated by
\begin{equation}\label{eq:estAk}
	|k|^2 \le\nu_j(A_k^{(0)})\leq \nu_j(A_k) 
	\leq\nu_j(A_k^{(0)})+\max_{m=1}^{\widehat M} \left( \frac 1{a_m}
\sum_{\ell=1}^{L_m} b_{m,\ell} \right).
\end{equation}
In order to bound the norm of $B$ defined in \eqref{eq:B}, we use the inequality
\[  
	\|(B_{m,\ell^{i}(m)})_{m\in M_i}\|\le \max_{m\in M_i} \sqrt{\frac{\cc_{m,\ell (m)}\bb_{m,\ell (m)}}{a_m}}\sum_{m\in M_i} \| P_m \|, 
\]
and $\sum_{m\in M_i} \| P_m \|\le 1$ to obtain
\begin{equation}\label{eq:estB}
	\| B\|\le \sum_{i=1}^{L}\|(B_{m,\ell^{i}(m)})_{m\in M_i} \|\le  \sum_{i=1}^{L}\max_{m\in M_i} \sqrt{\frac{\cc_{m,\ell (m)}\bb_{m,\ell (m)}}{a_m}}.
\end{equation}
Clearly, dim $\,\Hii_{m}=\infty$ and the operators $P_m^{*}:\Ran P_m \to \Ltwo$ are injective. 
Take $i \in \{1,2,\dots,L\}$ arbitrary and assume $u_i=(u_m)_{m\in M_i} \in \ker B_i$, 
i.e.
\[
0=\big(B_{m,\ell^{i}(m)}\big)_{m\in M_i}u_i= \sum_{m\in M_i}  B_{m,\ell^{i}(m)} u_m =  \sum_{m\in M_i} \sqrt{\frac{\cc_{m,\ell (m)}\bb_{m,\ell (m)}}{a_m}} P_m^* u_m. 
\]
Then $P_m^* u_m=0$ since the coefficients of $P_m^* u_m$ are positive, the support of $P_m^* u_m$ is contained in $\Omega_m$, 
and the subsets $\Omega_m$ are pairwise disjoint. Hence $u_m=0$ for all $m\in M_i$, i.e.\ $u_i=0$, which implies that $B_i$, $i=1,2,\dots, L$, is injective. 

Now it follows from Propositions \ref{inf-at-c-ell} and \ref{prop:4.2a} that all poles of the permittivity function $\epsilon$ in \eqref{eq:epsModel}, i.e.\ all points 
$c_1<c_2<\dots<c_L$, are accumulation points of eigenvalues of $\CA_k$. Moreover, Proposition \ref{pro1} implies that, for fixed $k$, there are gaps in the spectrum of $\CA_k$ to the right of all $\cc_{\ell}$. Note that only $A_k^{(0)}$ depends on $k$ and the dependence is analytic.

In general, we will not have a band gap above the poles for all material models \eqref{eq:epsModel} since the width of the gaps may tend to $0$ if $k$ varies, i.e.\ we may have $\la_{\ell+1,1}(k)\to c_{\ell}$, $k\rightarrow k_0$, for some $k_0\in (-\pi,\pi)^{2}$ and $\ell \in \{1,2,\dots,L\}$.  
An example for this 
is given in Section \ref{sec:Galerkin} where Figure \ref{Fig:BandTwoProjections} shows that numerically we do not find a band gap above $c_1=c_{1,1}$. 

However, Proposition \ref{prop:4.1} provides a  concrete estimate for a gap above a pole $c_\ell$ in the case when $A_k>c_{\ell+1}$ and all 
operators $B_{m,\ell}$, $\ell=1,2,\dots,L$, have the same closed range. Since 
the range of $B_{m,\ell}$ equals the range of $P_m^{*}$, which is
\begin{equation}
	\Ran B_{m,\ell}= \Ran P_m^{*}=\left \{u\in\Ltwo\,:\, u|_{\Omega_m} \equiv 0\right \}^{\bot},
\end{equation}
operators $B_{m,\ell}$, $B_{n,\ell}$ with $m\neq n$ do not have the same range. Thus Proposition~\ref{prop:4.1} only applies to permittivity functions of the form
\begin{equation}\label{eq:OneP}
	\epsilon(\cdot, \lambda) := \sum_{m=1}^M a_m \chi_{\Omega_m} (\cdot)+\sum_{\ell = 1}^{L_1} \frac{\bb_{1,\ell}}{\cc_{1, \ell} - \lambda} \chi_{\Omega_1} (\cdot)
\end{equation}
and to poles $c_{1,\ell}$ such that $A_k > c_{1,\ell+1}$. 
Since $\epsilon(\cdot,\la)$ and hence the operators $B_{m,\ell}$, do not depend on $k$, Proposition \ref{prop:4.1} i) shows that 
if $A_k > c_{1,\ell+1}$, $k \in \CK=(-\pi,\pi]^2$, 
then there exists an $\eta_\ell > c_\ell$, independent of $k$, such that 
\begin{equation}
\label{band-gap}
  (c_{\ell},\eta_{\ell}) \subset \rho(\CA_k), \quad k \in \CK=(-\pi,\pi]^2,
\end{equation}
which means that in this case our abstract results guarantee a band gap above $c_{\ell}$. 

\medskip
\noindent
\emph{Explicit bounds on the band gap above $c_{\ell}$}.
In this subsection, we consider a case where an estimate of the band gap in \eqref{band-gap} can be derived explicitly. Assume the 
permittivity function has the form \eqref{eq:OneP} with $M=2$,  $L_1=2$, and set $c_1:=c_{1,1}<c_{1,2}=:c_2$.

Let $u=u_0+u'_0\in\Hii_1 \oplus \Hii_1^\perp$. By the definition of $B_{1,\ell}^*$, 
\begin{equation}
	 \|B_{1,\ell}^* u\|^2= \frac{\bb_{1,\ell}\cc_{1,\ell}}{\aaa_1} \|u_0\|^2,\quad \ell=1,2,
\end{equation}
which implies that the constants in Proposition \ref{prop:4.1} are
\begin{equation}
	\gamma_1^-=\frac{\bb_{1,1}\cc_{1,1}}{\aaa_1},\quad \gamma_2^+=\frac{\bb_{1,2}\cc_{1,2}}{\aaa_1}.
\end{equation}
According to  Proposition \ref{prop:4.1}, we let
\begin{equation}
	\psi_2(\lambda)=\frac{\gamma_1^-}{c_{1,1}-\lambda}+\frac{\gamma_2^+}{c_{1,2}-\lambda}, \quad \la \in \R\setminus\{c_{1,1}, c_{1,2}\}.
\end{equation}
It is easy to see that the unique zero $\eta_1$ of $\psi_2$ in the interval $(c_{1,1},c_{1,2})$ is given by
\begin{equation}
	\eta_1=\cc_{1,1}+\frac{\cc_{1,1}\bb_{1,1}}{\cc_{1,2}\bb_{1,2}+\cc_{1,1}\bb_{1,1}}(\cc_{1,2}-\cc_{1,1}).
\end{equation}
Hence, for $c_{1,2}$ such that $A_k>c_{1,2}$ for  all $k \in \CK=(-\pi,\pi]^2$, Proposition \ref{prop:4.1} ii) implies that
we have a band gap above $c_{1,1}$, namely
\begin{equation}\label{PC:TwoPoles}
	\left ( \cc_{1,1},\cc_{1,1}+\frac{\cc_{1,1}\bb_{1,1}}{\cc_{1,2}\bb_{1,2}+\cc_{1,1}\bb_{1,1}}(\cc_{1,2}-\cc_{1,1}) \right ) \subset\rho(\CA_k), 
	\quad k \in \CK=(-\pi,\pi]^2.
\end{equation}
The accumulation of eigenvalues at the poles of the permittivity function from the left and two-sided estimates for them will be 
discussed and illustrated in Section~\ref{sec:Galerkin}. 

\subsection{Space independent permittivity}

The case $\epsilon(x,\lambda)=\epsilon (\lambda)$, where the eigenvalues of $\CA_k$ can be calculated explicitly, gives additional insight into what we can expect from the numerical calculations in Section \ref{examples}.  In particular, it shows that all the abstract two-sided eigenvalue estimates in \eqref{twosidedestimates} are optimal. 

Let $a$, $b$, and $c$ be positive \vspace{-0.5mm} constants,
\begin{equation}
\label{eps-space-indep}
	\epsilon (\lambda):=\aaa+ \frac{\bb}{\cc - \lambda}, \quad \la\in \C\setminus\{c\}. 
\end{equation}
Here $M=\wh M=1$, $P_1=I_{\Ltwo}$, and $W=a I_{\Ltwo}$ since $\Omega=\Omega_1$. Hence $A_k=- \frac 1a \Delta_k + \frac ba$ and $B=B_1= \sqrt{\frac{cb}a} I_{\Ltwo}$  so that the estimate in \eqref{eq:estB} is, 
\vspace{-1mm} in fact, an equality, $\|B\|=\sqrt{\frac{cb}a}$, and we know the eigenvalues of $A_k$ and of $\CA_k$ explicitly. With the notation $\{\nu_j(-\Delta_k)\}_{j=1}^\infty =\{|2\pi n+k|^2:n\in \Z^2\}$ introduced earlier for the eigen\-values of $-\Delta_k$ in increasing order, we have
\begin{equation}
	\{\nu_j (A_k)\}_{j=1}^\infty = \Big\{\frac{\nu_j(-\Delta_k)+b}{\aaa}\Big\}_{j=1}^\infty=\big\{\nu_j (A_k^{(0)})\big\}_{j=1}^\infty+\frac{b}{a}
\end{equation}
and the two sequences of eigenvalues of $\CA_{k}$ accumulating at $c$ and at $\infty$ are \vspace{-1mm} given~by 
\[
\begin{aligned}
	\{\la_{1,j}\}_{j=1}^\infty &= 	\bigg\{\frac 12 \Big(\nu_j(A_k) \!+\! c \Big) \!-\! \sqrt{\frac 14 \Big( \nu_j(A_k) \!-\! c \Big)^2 \!+\!   \frac{cb}a}
	\bigg\}_{j=1}^{\infty}, \hspace{-3mm}\\
	\{\la_{2,j}\}_{j=1}^\infty &= \bigg\{\frac 12 \Big(\nu_j(A_k) \!+\! c \Big) \!+\! \sqrt{\frac 14 \Big( \nu_j(A_k) \!-\! c \Big)^2 \!+\! \frac{cb}a} 
	\bigg\}_{j=1}^{\infty}. \hspace{-3mm}
\end{aligned}
\]
The order of convergence of $\la_{1,j}\to c$ and of $\la_{2,j}\to\infty$ is ${\rm O}(\nu_j (A_k)^{-1})$ as expected from Remark~\ref{re:order}. 

To compare with the two-sided eigenvalue estimates in Theorem~\ref{thm:2.3}, we need to determine
the index shift $\kappa_2$ from Proposition \ref{new-est} which is defined as the number of negative eigenvalues 
of $\CS_k(c+\delta)$ for any $\delta\!>\!0$ such that $c\!<\!\delta+c\!<\!\la_{2,1}$. It is easy to see that the eigenvalues $\mu_j$ 
of $\CS_k(\la)= A_k - \la - \frac{cb}a \frac 1{c-\la}$ for $\la\!=\!c+\delta$ \vspace{-0.5mm} are given~by 
\begin{equation}
	\{\mu_j\}_{j=1}^\infty = 
	\Big\{ \nu_j(A_k) - (c+\delta)  + \frac{cb}a \frac 1\delta \Big\}_{j=1}^{\infty}
\end{equation}
which implies that all eigenvalues are positive if $\delta>0$ is chosen small enough and hence $\kappa_2=0$.

Since $B=B_1= \sqrt{\frac{cb}a} I_{\Ltwo}$, we have $\min \sigma(BB^*) = \|B\|^2 = \frac{cb}a$, and
the two-sided estimates \eqref{twosidedestimates} all become equalities, i.e.
\begin{equation}\label{eq:EmptyBounds}
	\la^{L}_{1,j}=\la_{1,j} = \la^{U}_{1,j}, 
	\quad 
	\la^{L}_{2,j}= \la_{2,j}=\la^{U}_{2,j},
\end{equation}
which proves that they are optimal.
%

\section{\bf Galerkin approximations}\label{sec:Galerkin}

In this section, we consider  Galerkin finite element approximations of eigenpairs $(\la,u)$ 
of the 
operator function \eqref{eq:bas} with permittivity function \eqref{eq:epsModel} or, equivalently, eigenpairs of the  Schur complement \eqref{eq:Sk}.  
Since, in Section \ref{conforming}, \ref{SIP} below,  we use results that were derived for bounded forms, 
the forms associated with the operators \eqref{eq:bas} will be considered in $\Hone$ rather than in $\Ltwo$.
Moreover, here it is more convenient to use the operator $W$ in \eqref{W} in a new inner product and work with $\CT_k$ instead of $\CS_k$. 

In the space $\Hone\times\Hone$, for fixed $k\in\CK=(-\pi,\pi]^2$,  we consider the bounded sesquilinear \vspace{-1mm} forms  
\begin{alignat}{2}
\nonumber	
	\mathfrak{t}^{(0)}_k [u, v] &:=\int_{\Omega}(\nabla+\iu k) u\cdot\overline{(\nabla+\iu k)v}\, \mrm{dx},\quad 
	& \mathfrak{t}^{(1)} [u, v] &:=\sum_{m=1}^{\wh{M}}\sum_{\ell=1}^{L_m}\bb_{m,\ell}\mathfrak{b}_m [u, v],\\
\label{def:sesq}        
	\mathfrak{t}_k [u, v] &:= \mathfrak{t}^{(0)}_k [u, v]+\mathfrak{t}^{(1)} [u, v], & \\
\nonumber
        \mathfrak{b}_m [u, v] &:=\int_{\Omega_m}u\overline{v}\, \mrm{dx},\quad & (u,v)_w&:=\sum_{m=1}^{M}a_m\mathfrak{b}_m [u, v],
\end{alignat}
and denote by $\| u \|_{w}=\sqrt{(u,u)_w}$ the corresponding weighted $L_2$-norm. Then the eigenvalues of $\CT_k$ are determined by the following variational problem: 

Find $u\in\Hone\backslash\{0\}$ and $\lambda\in\R\backslash\{c_{m,\ell}: m=1,2,\dots,\wh M, l=1,2,\dots,L_m\}$ such that for all $v\in\Hone$
\begin{equation}\label{eq:sG}
	\mathfrak{t}_k(\lambda)[u, v]:=\mathfrak{t}_k[u, v]-\lambda (u,v)_w-\sum_{m=1}^{\wh{M}}\sum_{\ell=1}^{L_m}\frac{\cc_{m,\ell}\bb_{m,\ell}}{\cc_{m,\ell}-\lambda}\mathfrak{b}_m[u, v]=0.
\end{equation}
From the preceding sections, we know that the eigenvalues of $\CS_k$, and hence of $\CT_k$, are isolated and of finite multiplicity and hence so will be the values $\la$ for which the above problem admits a solution (comp.\ e.g.\  \cite[p.\ 131]{MR2373954}. 

Let $\Hi_{\wt N}$ denote an arbitrary $\wt N$-dimensional subspace of $\Hone$. Then the discrete (Galerkin) formulation of problem \eqref{eq:sG} is to seek the $\wt N$ eigenpairs 
$(\wt{\la}_{j}, \wt {u}_{j})\in (\R \backslash\{c_{m,\ell}: m=1,2,\dots,\wh M, l=1,2,\dots,L_m\}) \times \Hi_{\wt N}$ such that for all $v\in\Hi_{\wt N}$
\begin{equation}\label{eq:sGalerkin}
	\mathfrak{t}_k(\wt \la_{j})[\wt u_{j},v]=0, \quad j=1,2,\dots, \wt N.
\end{equation}
For a sequence of subspaces $\Hi_{\wt N} \subset \Hone$, the corresponding sequences of eigenpairs $\big( \wt \la_{j}, \wt u_{j} \big)$, $j=1,2,\dots, \wt N$, 
are viewed as approximations to the true eigenpairs $(\la_{j},u_{j})$ of the spectral problem $\CS_k(\la) u = 0$ (comp.\ e.g.\ \cite{MR962210}).

According to Theorem \ref{th_j} all eigenvalues of $\CS_k$, and hence of $\CT_k$, can be determined by means of the variational principles \eqref{minmax-}, \eqref{minmax+}.
For a material model \eqref{eq:epsModel} with $M=2$, $\wh M=1$, and $L_1=1$, 
the functionals $p_{1,2}$ in \eqref{ppm} 
are given by
\begin{equation}
\label{p12}
	p_{1,2}(u)
	=\frac{1}{2}\left (\frac{\mathfrak{t}_k [u, u]}{\| u \|^2_{w}}+c_{1,1}\right )\mp\sqrt{\frac{1}{4}\left (\frac{\mathfrak{t}_k [u, u]}{\| u \|^2_{w}}-c_{1,1}\right )^2
	+\frac{b_{1,1}c_{1,1}\mathfrak{b}_1 [u, u]}{\| u \|^2_{w}}}
\end{equation}
for $u\in \Hone$. Note that, in analogy to Remark \ref{formdomain}, the variational principles \eqref{minmax-}, \eqref{minmax+} hold with 
$\dom \Delta_k=\Htwo$ replaced by 
$\dom {\mathfrak t}_k =\Hone$. 


In the same way as for the eigenvalues of $\CS_k$, we divide the Galerkin eigenvalues into two groups. By
\begin{align}\label{eq:Gal_low}
	&\wt \la_{1,1}\leq \wt \la_{1,2}\leq \dots \leq \wt \la_{1,\wt N_1} < c_{1,1} \\ 
\intertext{we denote the $\wt N_1 $ $(\leq \wt N)$ eigenvalues of \eqref{eq:sGalerkin} below $c_{1,1}$ and by}
\label{eq:Gal_high}
	&\cc_{1,1} < \wt \la_{2,1}\leq \wt \la_{2,2}\leq \dots  \leq \wt \la_{2,\wt N_2} 
\end{align}
the $\wt N_2$ $(=\wt N-\wt N_1)$ eigenvalues above $\cc_{1,1}$; the corresponding eigenvectors are denoted by $(\wt u_{1,j})_{j=1}^{\wt N_1}$, $(\wt u_{2,j})_{j=1}^{\wt N_1} \subset \Hi_{\wt N}$.
Then the variational principle \eqref{minmax-} for $\mathfrak{t}_k$ together with  Remark \ref{formdomain} implies that
\begin{align} \label{GalerkinMinmax}
  \lanm & \!=\!\! \min_{\ontops{\CL \subset \Hone}{\dim \CL = j}} \!\! \max_{\ontops{u \in \CL}{u\ne 0}} \ p_1(u) 
  		  \le \max_{\ontops{\!u\in{\rm span}\{\!\wt u_1,\dots,\wt u_{j}\!\}\!}{u\ne 0}} p_{1}(u) 
		   \le  \wt \la_{1,j}, \quad j=1,2,\dots,\wt N_1,
\end{align}
and, similarly, the variational principle \eqref{minmax+} implies $\lanp\le  \wt \la_{2,j}$, $j=1,2,\dots,\wt N_2$. 
So, in analogy to the classical min-max (or Rayleigh-Ritz) variational principle, we have the chain of inequalities
\begin{equation}
\label{June5}
 \la_{1,j} \le   \wt \la_{1,j} < c_{1,1} < \la_{2,k} \le   \wt \la_{2,k}, \quad j=1,2,\dots, \wt N_1, \ k=1,2,\dots, \wt N - \wt N_1, 
\end{equation}
for the eigenvalues of $\CT_k$ and the Galerkin eigenvalues to the left and to the right of the pole $c_{1,1}$.

Regarding the eigenfunctions, let $(\la_{1,j}, u_{1,j}) \in \R\times \Htwo$, $j\in\N$, be a sequence of eigenpairs of $\CS_k$, here given by
\begin{equation}
	\CS_k(\lambda)= A_k-\lambda-B_{1,1}(\cc_{1,1}-\lambda)^{-1}B_{1,1}^{*},\quad B^{*}_{1,1}:=\sqrt{\frac{\cc_{1,1}\bb_{1,1}}{a_1}}P_1,
\end{equation}	
such that $\la_{1,j}\nearrow c_{1,1}$, $j\to\infty$, and $\|u_{1,j}\|=1$, $j\in\N$. 
Then the corresponding eigenvectors are of the form $(u_{1,j} \ \wh u_{1,j})^{\rm t}$ with $\wh u_{1,j} = - (c_{1,1}-\la)^{-1} B^* u_{1,j}$, $j\in\N$, and
$\big( (u_{1,j} \ \wh u_{1,j})^{\rm t} \big)_{j=1}^\infty$
is a singular sequence of $\CA_k$ at $\cc_{1,1}$, i.e.
\[
  \|(\CA_k\!-\!\cc_{1,1}) (u_{1,j} \, \wh u_{1,j})^{\rm t} \|^2\!\! =\|(A_k\!-\!\cc_{1,1}) u_{1,j} \!+\!B \wh u_{1,j} \|_{\Ltwo}^2\!+\!\|B^*u_{1,j}\|_{\Hii}^2 \!\to\! 0,\ j\to \infty.
\]
This implies that
\begin{equation}
	\|B^* u_{1,j} \|_{\Hii}^2=\bb\cc\int_{\Omega_1}| u_{1,j}(x) |^2 \d x\to 0,\quad j\to \infty.
\end{equation}
Thus we may expect that, for large $j$, the $L_2$-norm of the eigenfunctions $u_{1,j}$ of $\CS_k$ on $\Omega_1$ is very small. 
We mention that while the order of convergence for the eigenvalues $\la_{1,j}\nearrow c_{1,1}$ is ${\rm O}(\nu_j (A_k)^{-1})$ according to Remark \ref{re:order},
the order of convergence $\|B^* u_{1,j}\|_{\Hii}\to 0$ will depend on the geometry and other parameters. 
Furthermore, since $\big(\cc_{1,1},\cc_{1,1}+\varepsilon \big)\subset\rho(\CA_k)$ for some~$\varepsilon >0$, we may not expect that the $||B^* u_{2,j}||_{L^2(\Omega_1)}$ 
is small for eigenvectors $u_{2,j}$ of $\CS_k$ corresponding to eigenvalues $\la_{2,j}$ above $\cc_{1,1}$. 

Figure \ref{Fig:EigVector} illustrates this by numerical computations for a particular example where $\cc_{1,1}=12.7367$. 
The constants in the material model \eqref{eq:epsModel} are $M=2$, $\wh M=1$, $L_1=1$, and $\Omega_1$ is a disk of radius $r=0.475$. 
Numerically the convergence $\|B^* u_{1,j}\|_{\Hii}\to 0$ is faster than $(c_{1,1}-\lambda_{1,j})^{-1}\to\infty$ in this case, 
which can be seen in the third plot from the left. The numerical method used here will be outlined in the next two \vspace{-4mm}  sections.

\begin{figure}[h]
\begin{center}
\includegraphics[width=3cm]{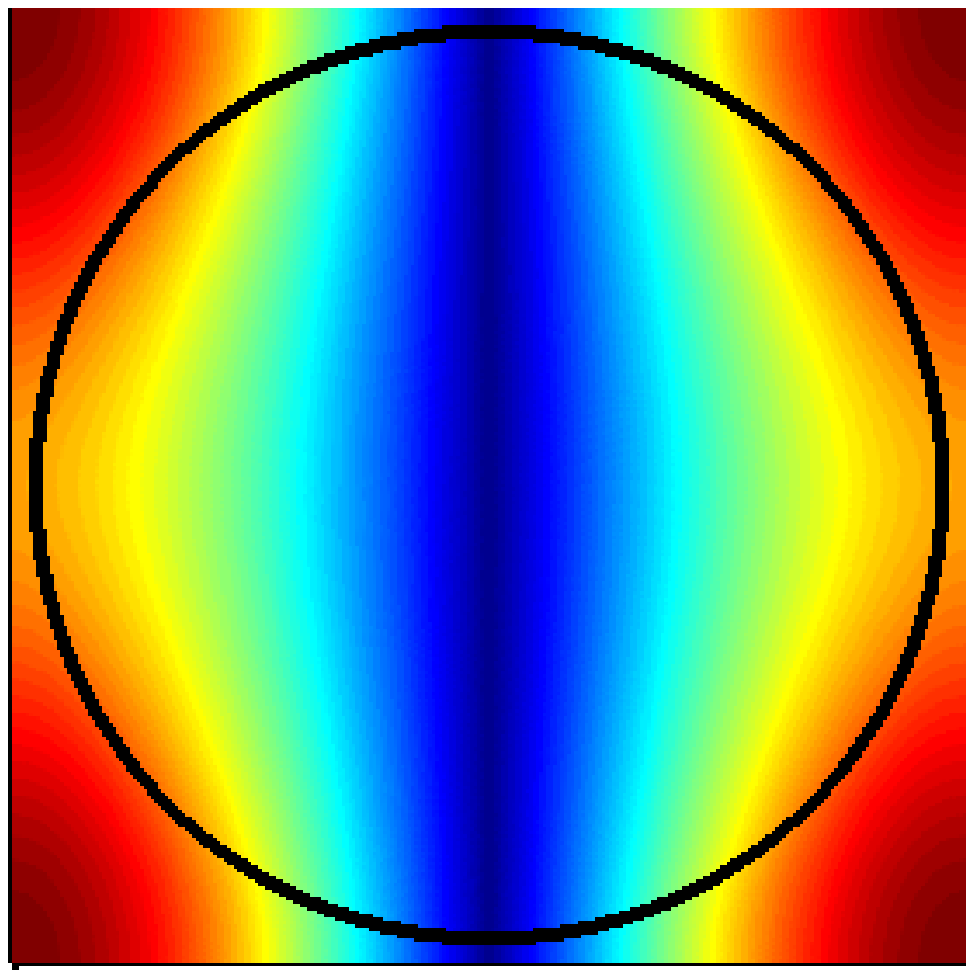}
\includegraphics[width=3cm]{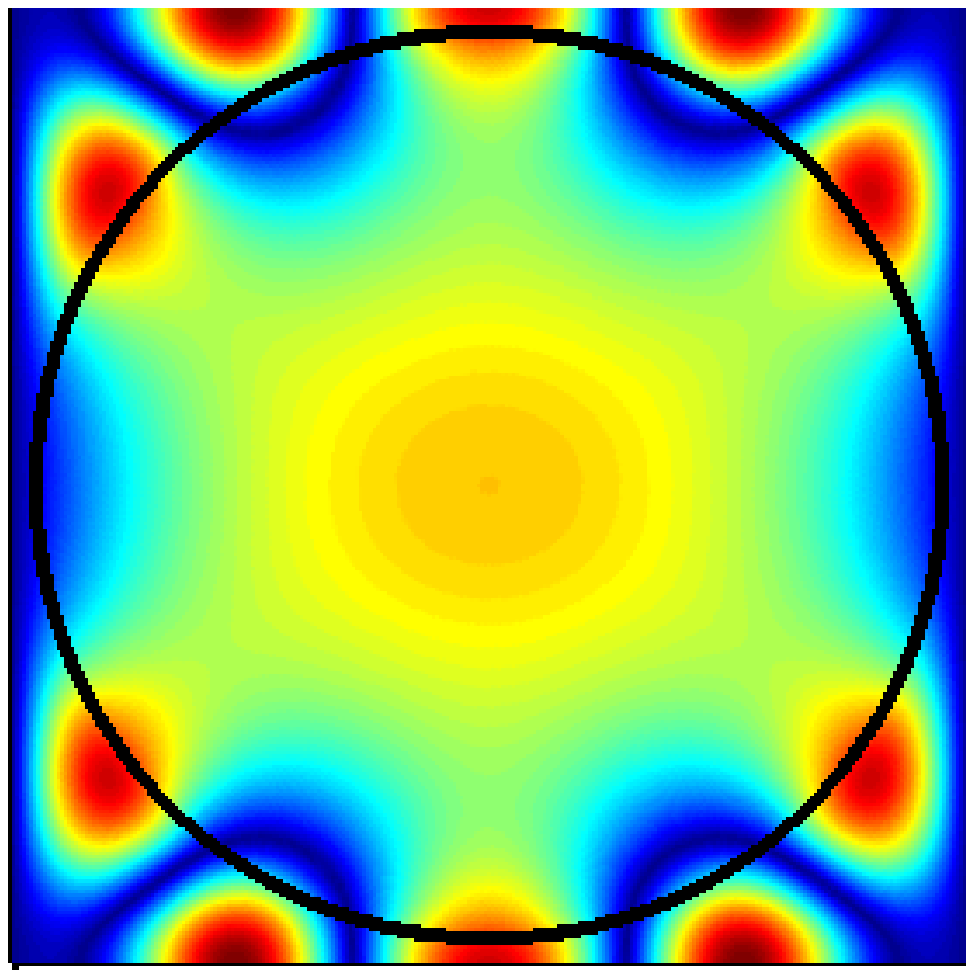}
\includegraphics[width=3cm]{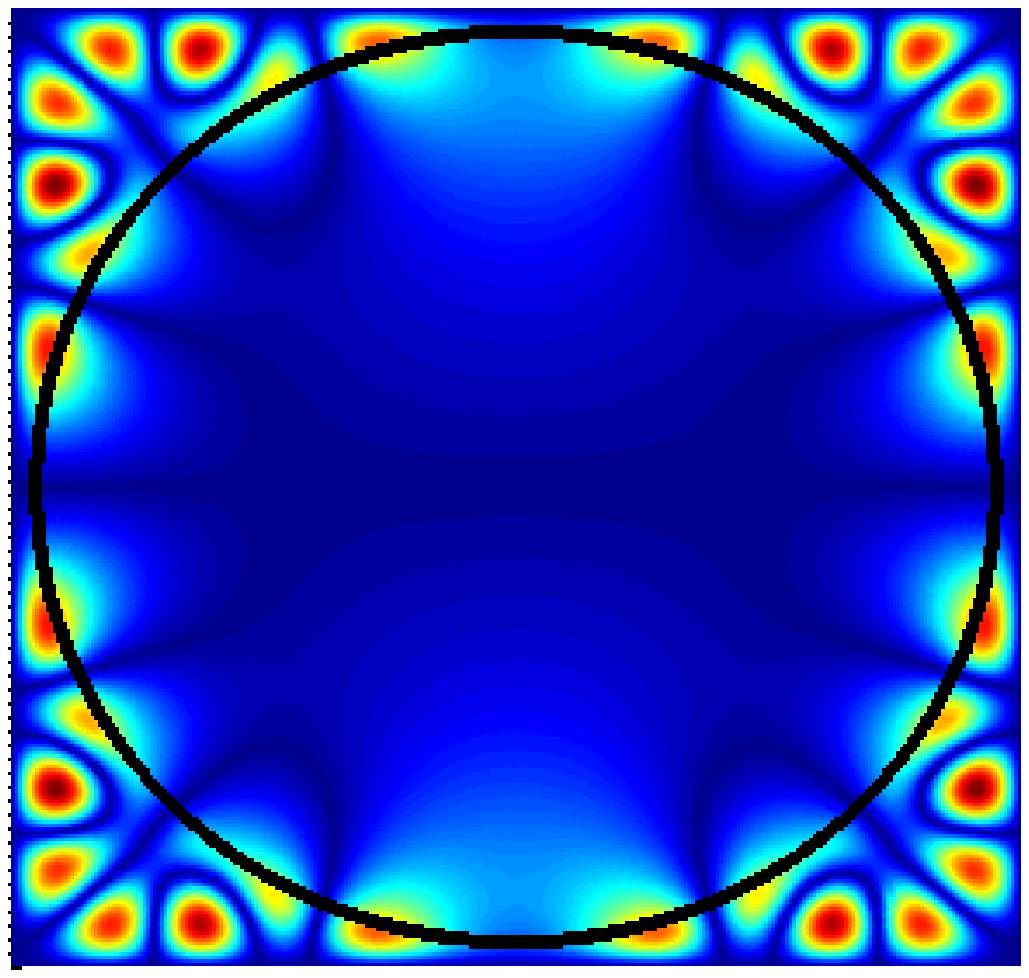}
\includegraphics[width=3cm]{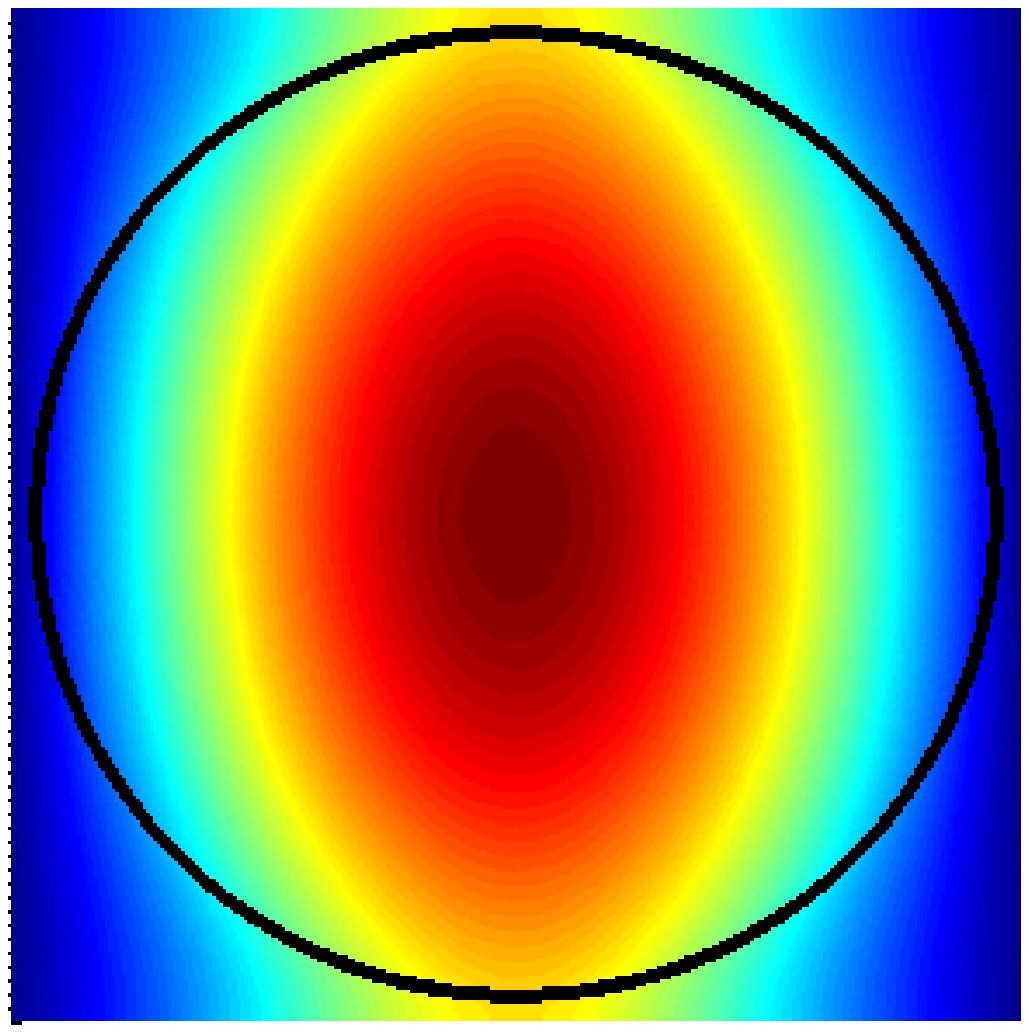}
\caption{\small $\Omega_1$ disk of radius $r=0.475$ (black circle) and  pole $\cc_{1,1}=12.7367$. The absolute value of the approximate eigenvectors $u$ 
corresponding to (counting from the left) the eigenvalues $\wt\la_{1,1}=1.4012$, $\wt\la_{1,2}=11.6218$, $\wt\la_{1,3}=12.613$, and $\wt\la_{2,1}=14.3949$. 
Blue corresponds to $|u(x)|$ close to zero and red to large function \vspace{-4mm} values.}
\label{Fig:EigVector}
\end{center}
\end{figure}

In the following numerical computations with space-dependent permittivity function $\epsilon$ we will approximate the analytic bounds on the eigenvalues of $\CA_k$ established in Theorem \ref{thm:2.3}. 
The estimates \eqref{upbound+}, \eqref{lobound+} only involve the eigenvalues of the positive densely defined unbounded self-adjoint operator $A_k$, 
which has compact resolvent;  note that we already know $\min \sigma(BB^*) =0$ and $\|B\|$, or can estimate the latter.
The spectrum of the operator $A_k$ can be obtained from the following variationally posed eigenvalue problem: 

Find $u\in\Hone\backslash\{0\}$ and $\nu\in\R$ such that for all $v\in\Hone$ 
\begin{equation}\label{eig:a}
	\mathfrak{t}_k[u, v]=\nu (u,v)_{\mrm{w}}.
\end{equation}
The Galerkin approximations $\wt \nu_{j}$ of $\nu_{j}(A_k)\in\sigma(A_k)$, $j=1,2,\dots, \wt N$, with $\wt \nu_1\leq \wt\nu_2\leq \dots \leq \wt\nu_{\wt N}$
and the corresponding eigenvectors are eigenpairs $(\wt \nu, \wt u)\in \R \times \Hi_{\wt N}\backslash\{0\}$ 
such that for all $\wt v\in\Hi_{\wt N}$  
\begin{equation}
\label{eig:aG}
 \mathfrak{t}_k[\wt u, \wt v]=\wt \nu (\wt u,\wt v)_{\mrm{w}}.
\end{equation}
The classical min-max variational principle \cite[Theorem XIII.1]{MR0493421} applies and shows that the Galerkin eigenvalues are upper bounds for true eigenvalues
\cite[Theorem XIII.3]{MR0493421},
\begin{equation}\label{eq:A_Min_Max}
	\nu_{j}(A_k)\le \wt{\nu}_{j},\quad j=1,2,\dots,\wt N.
\end{equation}	
Therefore, by Remark \ref{boundsmonotonic}, since the bounds $\la_{1,j}^{U}$ and $\la_{2,j+\kappa_2}^{U}$ are increasing functions of $\nu_{j}(A_k)$, the eigenvalues satisfy the estimates
\begin{align} 
  \lanm&\le  \lanmU\le  \TlanmU, \\
  \la_{2,j+\kappa_2} &\le  \la_{2,j+\kappa_2}^{U}\le  \wt{\la}_{2,j+\kappa_2}^{U},
\end{align}
where $\lanm$ and $\lanp$ are defined  by \eqref{upbound+} and \eqref{lobound+}, respectively, and
where $\wt \la_{1,j}^U$, $\wt \la_{2,j}^U$  are obtained from $\lanmU$, $\lanpU$ by replacing $\nu_{j}(A)$ by its upper bound $\wt \nu_{j}$.

\subsection{The conforming finite element method (FEM)}
\label{conforming}

The conforming Galerkin FEM is characterized by a family of triangulations $\mathcal{Q}_{h}$ of $\Omega$ with quadrilateral elements~$Q$. Let $F_{Q}$ be a bijective mapping of a chosen 
reference square $Q_{\mrm{ref}}$ onto an element~$Q$ and let $\mathbb{P}^{p}$ 
denote the space of polynomials on $\R^2$ of degree $\le p$, i.e. $\mathbb{P}^{p}:=\text{span}\,\{x^{n}_1 x^{m}_2, 0\leq n,\, m\leq p\}$. 
Then the space of polynomials on $\R^2$ of degree $\le p$ that are piecewise polynomials on the triangulation of $\Omega$ is defined as
\begin{equation}
 \mathcal{P}_{h}^{p}:=\{u\in\Hone:\,u|_{Q}\circ F_{Q}\in \mathbb{P}^{p}\};
\end{equation} 
here $\left.u\right|_Q$ denotes the restriction of $u$ to $Q$. The topology of the torus is imposed by mapping the parallelogram edges situated on one boundary on the corresponding 
parallelogram edge on the opposite side. 
From the Cauchy--Schwarz inequality it follows that for $t\in [0,1]$  and $n\in\Z^2$ the inequality $|2\pi n+k|^2\geq (2\pi)^2|n|^2(1-t)+|k|^2$ holds when $|k|\leq t\pi$. 
Hence, for $|k|\leq t\pi$ with $t\in [0,1]$, the inequality 
\begin{equation}
	\mathfrak{t}_k[u, u]\geq\int_{\Omega}|(\nabla+\iu k) u|^2\, \mrm{dx}\geq (1-t)||\nabla u||^2+|k|^2||u||^2, \quad u \in \Hone, 
\end{equation}
follows. Thus, for $k\neq 0$, $\sqrt{\mathfrak{t}_k[u,u]}$ is a norm equivalent to the standard norm on $\Hone$ and $\mathfrak{t}_k[u,v]$ can be used as an inner product. 
As above, let $\{\nu_j(A_k)\}_{j=1}^\infty$ 
denote the eigenvalues of $A_k$ ordered increasingly and let 
$\{u_j\}_{j=1}^\infty$ be a system of eigenvectors that are orthonormal with respect to the inner product $\mathfrak{t}_k[\cdot,\cdot]$,
i.e.\  $\mathfrak{t}_k[u_i, u_j]=\delta_{ij}$, $i,j\in\N$. 

The corresponding finite dimensional problem is \eqref{eig:aG} with the special choice $\Hi_{\wt N} = \mathcal{P}_{h}^{p}$, i.e.
to seek $\wt u\in\mathcal{P}_{h}^{p}\backslash\{0\}$ and $\wt \nu\in\R$ such that for all $\wt v\in\mathcal{P}_{h}^{p}$
\eqref{eig:aG} holds.
Let $0\leq\wt \nu_1\leq \wt \nu_2\leq\dots\leq \wt \nu_{\wt N}$ denote the eigenvalues of \eqref{eig:aG} 
with corresponding eigenvectors $\wt u_1,\wt u_2,\dots ,\wt u_M$ orthornormal with respect to $\mathfrak{t}_k[\cdot,\cdot]$, 
i.e.\  $\mathfrak{t}_k[\wt u_i, \wt u_j]=\delta_{ij}$, $i,j\in\N$. Moreover, let $P$ denote the orthogonal projection of $\Hone$ onto $\mathcal{P}_{h}^{p}$. 
Then, the well-known estimate \cite{MR0400004,MR2206452} 
 \begin{equation}
 	0\leq\frac{\wt \nu_{j}-\nu_{j}(A_k)}{\wt \nu_{j}}\leq\sum_{i=1}^{j}\|u_i-Pu_i\|_{\mathfrak{t}_k}^{2}
 \end{equation}
holds which, under the assumption that $\sum_{i=1}^{j}\|u_i-Pu_i\|_{\mathfrak{t}_k}^{2}<1$, can be written in the equivalent \vspace{-2mm}  form
 \begin{equation}\label{eq:estA}
 	0\leq\frac{\wt \nu_{j}-\nu_{j}(A_k)}{\nu_{j}(A_k)}\leq\frac{\sum_{i=1}^{j}\|u_i-Pu_i\|_{\mathfrak{t}_k}^{2}}{1-\sum_{i=1}^{j}\|u_i-Pu_i\|_{\mathfrak{t}_k}^{2}}.
 \end{equation}
Note that (\ref{eq:estA}) depends on the approximation errors for all eigenfunctions  $u_1,u_2$, $\dots ,u_{j}$. 
It is possible to derive error estimates that depend mainly on the approximation error for $u_{j}$ \cite{MR2206452} 
but this will not be important in our setting since all eigenvectors are (piecewise) analytic. 
In the following we will express the convergence rates in terms of the number of degrees of freedom $\wt N$. 
If $u_{j}$ is analytic, the convergence rate for the $p$-version of the finite element method is exponential,
\begin{equation}
\label{Rate_p}
	\| u_i-Pu_i\|_{\mathfrak{t}_k}\leq C\eu^{-\gamma \wt N^{1/2}},
\end{equation}
for some positive constants $C$ and $\gamma$ \cite{MR954788}. Moreover, the estimate (\ref{Rate_p}) 
holds when the eigenfunctions are analytic in each subdomain up to the interfaces and the interfaces are exactly resolved by curvilinear cells \cite{MR954788}.

In the following, we discretize the operator $A_k$ and  the block operator matrix (\ref{eq:A-multi}) 
using the implementation available in the software package Concepts \cite{MR1955543}. The shape functions are based on 
Jacobi polynomials and a blending technique is applied to construct element mappings.  
Concepts uses curvilinear quadrilateral cells that resolve the curved analytic material interfaces in the numerical examples. Further implementation details can be found in \cite{MR1955543,Schmidt+Kauf2009}. 

Let $\{\phi_1,\phi_2,\dots,\phi_{\wt N}\}$ be a basis of $\mathcal{P}_{h}^{p}$. For $u\in\mathcal{P}_{h}^{p}$, the approximate eigenfunctions $\wt u$ are of the \vspace{-2mm} form
\begin{equation}
	\wt u=\sum_{j=1}^{\wt N}\alpha_{j}\phi_{j}.
	\vspace{-2mm} 
\end{equation}
The elements in the matrices $\widetilde{T_k}$, $\widetilde{W}$, and $\widetilde{B}_m$ are
\begin{equation}\label{eq:elements}
	(\widetilde{T_k})_{ij}=\mathfrak{t}_k[\phi_{j},\phi_{i}],\quad	
	(\widetilde{W})_{ij}=(\phi_{j},\phi_{i})_{w},\quad	
	(\widetilde{B}_m)_{ij}=\mathfrak{b}_m[\phi_{j},\phi_{i}],
\end{equation}
where the sesquilinear forms are defined by (\ref{def:sesq}). The finite element approximation of the rational eigenvalue problem \eqref{eq:sG} then is
\begin{equation} \label{eq:RathatSh}
	\widetilde{\CT_k}(\lambda) x=0,\quad x=(\alpha_1,\alpha_2,\dots,\alpha_{\wh N})^t,
\end{equation}
where, for $ \lambda\in\C \setminus \{\cc_{m,\ell}: m=1,2,\dots,\wh M, \, \ell=1,2,\dots,L_m \}$,
\begin{equation} \label{eq:hatSh}
	\widetilde{\CT_k}(\lambda):=\widetilde{T_k}-\lambda  \widetilde{W}-\sum_{m=1}^{\wh{M}}\sum_{\ell=1}^{L_m}\frac{\cc_{m,\ell}\bb_{m,\ell}}{\cc_{m,\ell}-\lambda}\widetilde{B}_m.
\end{equation}
The eigenvalues of $\widetilde{A_k}:=\widetilde{W}^{-1/2} \widetilde{T_k} \widetilde{W}^{-1/2}$ are used to approximate the eigenvalues of $A_k$ and hence the bounds on the eigenvalues of $\CA_k$ in \eqref{upbound+}, \eqref{lobound+}.  
The rational eigenvalue problem $\widetilde{\CS_k}(\lambda)x:=\widetilde{W}^{-1/2} \widetilde{\CT_k}(\lambda) \widetilde{W}^{-1/2} x=0$ with $\wh M=1$ can be written in the form \eqref{eq:A-multi} and we will calculate the approximate eigenvalues using this block matrix form. 

\begin{remark}
	Note that $\widetilde{B}_m$ is not block diagonal for conforming methods and the method outlined in this section can only be used when $\wh M=1$. However, the approximation spaces for discontinuous Galerkin (DG) methods are localised in each element and $\widetilde{B}_m$ is then block diagonal. A DG method that can handle $\wh M>1$ is outlined in Section \ref{SIP}.
\end{remark}

The variational characterization of the eigenvalues (\ref{GalerkinMinmax}) shows that (ignoring rounding errors) the approximate eigenvalues are upper bounds on the eigenvalues of $\CA_k$ (see \eqref{June5}). 

Given two closed subspaces $\CH_1$ and $\CH_2$ of a Hilbert space $\CH$ the proximity of the spaces is measured in terms of the containment gap \cite[Section IV.2.1]{MR1335452}
\[
        \wh \delta(\CH_1,\CH_2)\!:=\!\max \big\{ \delta(\CH_1,\CH_2), \delta(\CH_2,\CH_1) \big\}, \quad 
	\delta(\CH_1,\CH_2)\!:=\!\sup_{u_1\in \CH_1}\!\inf_{u_2\in \CH_2}\!\frac{\|u_2-u_1\|_{\CH}}{\|v_1\|_{\CH}}.
\]
For the gap $\wh\delta(E_{h}^{p}(\la),E(\la))$  between the discrete and continuous eigenspaces $E_{h}^{p}(\la)$ and $E(\la)$ of $A_k$, respectively,
it is known that $\wh \delta(E_{h}^{p}(\la),E(\la))\to 0$ 
when $h\!\to\!0$ or $p\!\to\!\infty$ \cite{MR0383117}.
This implies several decisive properties including non-pollution of the spectrum \cite{Davies+Plum2004}. 
Hence, without risk of spectral pollution, we can calculate accurate approximations of the estimates  $\lanmL$ in  \eqref{upbound-} and $\lanpL$ in \eqref{lobound-}.  
We mention that a complete convergence theory for $\CA_k$ will be studied in a forthcoming paper; a block operator formulation of a related problem was considered in \cite{MR3164142}.
Since all eigenfunctions are analytic in each subdomain up to the interfaces we expect exponential convergence. The presented continuous finite element method 
will be used in Section \ref{examples} to compute the eigenvalues but the discontinuous Galerkin method described in Section \ref{SIP} below is a strong alternative.

\subsection{The symmetric interior penalty method.}
\label{SIP}

In the following, we discretize \eqref{eq:sG} with a discontinuous finite element method called the symmetric interior penalty method (SIP) \cite{MR2220929}. 
More specifically, we use the $p$-version of SIP 
\pagebreak
in this paper since in the numerical examples all eigenvectors are analytic in each subdomain and the subdomains have 
analytic interfaces. Discontinuous Galerkin methods, such as SIP, are interesting partly because they are more flexible in the choice of basis functions and in the mesh design \cite{MR2946411}. 
Moreover, the mass matrix is block diagonal and can therefore be inverted at low computational cost. This is frequently used to obtain explicit time integration, but a block diagonal mass matrix 
is also an advantage for the solution of our non-linear eigenvalue problem.

Let $\mathcal{T}_{h}$ denote the triangulation of $\Omega$, let 
$F_{T}$ be a bijective mapping of a chosen reference triangle 
$T_{\mrm{ref}}$ onto an element $T$, and let $\mathcal{V}_{h}^{p}$ denote the space of polynomials on $\R^2$ of degree $\le p$ that are piecewise constant on the triangulation of $\Omega$,
\begin{equation}
	\mathcal{V}_{h}^{p}:=\{u\in L^{2}(\Omega)\,:\, u|_{T}\circ F_{T}\in \mathbb{P}^{p} \};
\end{equation}
here $\left.u\right|_T$ denotes the restriction of $u$ to $T$. Consider two adjacent
triangles $T_{+}, T_{-} \in \mathcal{T}_{h}$ 
with outward pointing normals $\vec{n}_{+}$, $\vec{n}_{-}$ on the shared edge $\partial T_{+}\cap\partial T_{-}$. The symbols $\nabla_h$ and $\nabla_h\cdot$ denote the elementwise (broken) gradient operator and divergence operator, respectively. 
The solutions of \eqref{eq:sG} are in $H^{2}(\Omega)$. Hence, for all $T\in \mathcal{T}_{h}$ the traces on the edges are well-defined. The averages $\{\cdot\}$ and jumps $[\cdot]$ of $w$ and $\nabla_h w$ on $T_+\cup T_-$ are then defined as
\begin{alignat}{2}
	\{ w \} &:= \tfrac{1}{2} (w_{+} + w_{-}), &\qquad [w] &:= w_{+}\vec{n}_{+} + w_{-}\vec{n}_{-},\\
	\{ \nabla_h w \} &:= \tfrac{1}{2} (\nabla_h w_{+} + \nabla_h w_{-}), &\qquad 
\end{alignat}
where $w_{\pm}$ and $\nabla_h w_{\pm}$ denote the traces on $\partial T_{\pm}$. Let $\mathcal{E}$ denote the set of all edges $e$ of $\mathcal{T}_{h}$. We will use the convention
\begin{equation}
	\int_{\Omega}u_h \mrm{dx}:=\sum_{T \in\mathcal{T}_{h}}\int_{T}u_h \mrm{dx},\quad \int_{\mathcal{E}}u_h \mrm{ds}:=\sum_{e \in\mathcal{E}}\int_{e}u_h \mrm{ds}.
\end{equation}
For $u_h, v_h \in \mathcal{V}_{h}^{p}$ we consider the sesquilinear forms
\begin{equation} \label{eq:bilinearform}
	\begin{aligned}
		\wh{\mathfrak{t}}_k^{(0)}[u_h,v_h] &:= \int_{\Omega}\left <(\nabla_h+\iu k) u_h,\overline{(\nabla_h+\iu k)v_h }\right >\mrm{dx}+ \int_{\mathcal{E}} \left <\beta [u_h],[\bar{v}_h]\right >\mrm{ds}\\
		\phantom{a} & \phantom{{} = b}-\int_{\mathcal{E}}\left <\{(\nabla_h+\iu k) u_h\},[\bar{v}_h]\right >+\left <\{\overline{(\nabla_h+\iu k) v_h}\},[u_h]\right >\mrm{ds},\\
	\wh{\mathfrak{t}}_k [u_h,v_h] &:=\wh{\mathfrak{t}}_k^{(0)}[u_h,v_h]+\mathfrak{t}^{(1)}[u_h,v_h].
\end{aligned}
\end{equation}
Let $\{\varphi_1,\varphi_2,\dots,\varphi_{\wt N}\}$ be a basis of $\mathcal{V}_{h}^{p}$. The finite element matrices and 
the corresponding rational matrix function $\wh \CT_k$ are then formed as 
in \eqref{eq:elements}--\eqref{eq:hatSh}, with 
\begin{equation}
	(\widehat{T_k})_{ij}=\wh{\mathfrak{t}}_k[\varphi_{j},\varphi_{i}],\quad	
	(\widehat W)_{ij}=(\varphi_{j},\varphi_{i})_{w},\quad	
	(\widehat B _m)_{ij}=\mathfrak{b}_m[\varphi_{j},\varphi_{i}].
\end{equation}
The periodic boundary conditions are imposed weakly by identifying opposite sides of the unit cell and enforcing periodicity of the solution via the corresponding penalty terms in~\eqref{eq:bilinearform}. To generate the mesh, we use the software package $Emc^{2}$~
\cite{Saltel+Hecht1995} and for the discretization we build on the {\sc Matlab} version~\cite{MR2372235} of \texttt{NUDG++} (\texttt{www.nudg.org}). In order to preserve the high accuracy of the $p$-version of 
SIP it is essential to use curvilinear elements. A Gordon--Hall blending \cite{MR0451775} is used to preserve the approximation properties of the basis.  One advantage of SIP is the 
block diagonal mass matrix which directly splits into mass matrices for $\Omega_m$, $m=1,2,\dots,M$. Hence, $\widehat{B}_m$, $m=1,2,\dots,\wh{M}$, are block diagonal matrices. 
Let, for example, $\wh{M}=2$ and write $\widehat{B}_1$, $\widehat{B}_2$ in the form $\widehat{B}_1=\text{diag}\, (D_1,0)$ and $\widehat{B}_2=\text{diag}\, (0,D_2)$, respectively. 
The matrices $D_1\in\R^{n_1\times n_1}$ and $D_2\in\R^{n_2\times n_2}$ are positive definite, where $n_1$, $n_2$ correspond to the number of basis functions supported in domains 
$\Omega_1$, $\Omega_2$. Let $D_{j}=:L^{*}_{j} L_{j}$ denote the Cholesky decomposition of $D_{j}$, $j=1,2$. Then the Cholesky decomposition of the block diagonal matrix $\widehat{W}$ \vspace{-1mm} is
\begin{equation}
\widehat{W}:= L_{\widehat W}^{*} L_{\widehat W}:=\matrix{cc}{\sqrt{\aaa_1} L_1^* & 0 \\ 0 & \sqrt{\aaa_2} L_2^* }\matrix{cc}{\sqrt{\aaa_1} L_1 & 0 \\ 0 & \sqrt{\aaa_2} L_2}.
 \end{equation}
Note that, for a given polynomial degree $p$,  $D_{j}$, $j=1,2$, are block diagonal with block sizes $(p+1)(p+2)/2$ ~\cite[p.\ 171]{MR2372235} 
and $L_{j}$ therefore has only triangular blocks on the diagonal. Hence, the inverses of $L_{j}$ and of $L_{\widehat W}$ can be computed at low costs. 
Let $I_{n_j}$, $j=1,2$, denote the identity matrix on $\R^{n_j\times n_j}$ and define
\[
	\widetilde{\Si}_k(\lambda) \!:=\! (L^{*}_{\widehat W})^{-1}\widehat{\CT_k}(\lambda)(L_{\widehat W})^{-1}\!=\!\widetilde{A_k}-\lambda-\widetilde B_{1,1}(\cc_{1,1}-\lambda)^{-1}
	\widetilde B_{1,1}^*-\widetilde B_{2,1}(\cc_{2,1}-\lambda)^{-1}\widetilde B_{2,1}^*
\]	
where $\widetilde A_k\!:=\!(L^{*}_{\widehat W})^{-1}\widehat T_k(L_{\widehat W})^{-1}\!$, $\widetilde B_{1,1}\!:=\!\sqrt{\frac{\cc_{1,1}\bb_{1,1}}{\aaa_1}}(I_{n_1}\, 0)$, \vspace{-1mm}
$\widetilde B_{2,1}\!:=\!\sqrt{\frac{\cc_{2,1}\bb_{2,1}}{\aaa_2}}(0\, I_{n_2})$. The function $\widetilde \Si_k$ is the first Schur complement of 
\begin{equation}
\widetilde \CA_k =\!
\matrix{c|cc}{\widetilde A_k& \widetilde B_{1,1}\!\!&\!\!\widetilde B_{2,1}\\ \hline \\[-3.5mm]
\widetilde B^*_{1,1}&c_{1,1}\!\!&\!\!0\\
\widetilde B^*_{2,1}&0\!\!&\!\!c_{2,1}
}\!.
\vspace{-1mm}
\end{equation}
Hence, the eigenvalues can be computed from the $2(n_1+n_2)\times 2(n_1+n_2)$ matrix $\widetilde \CA_{k}$. 
For the operator $A_k$ convergence in the gap of the  eigenspaces  is known \cite{MR2220929,MR2974168}, which implies non-pollution of the spectrum. 
However, a convergence analysis of the discretization of $\CA_k$ is beyond the scope of the current~paper.  The main aim of 
our calculations in Section \ref{examples} is to illustrate the general theory in the previous sections and to show how our abstract results apply in concrete~cases. 
Note that, for both finite element methods, convergence theory is known for the operator $A_k$, but the bounds \eqref{June5} only apply to the conforming finite~element~method.

\subsection{Numerical examples.}
\label{examples}

In the last twenty years physicists and engineers have studied dispersive ($\lambda$-dependent) materials in periodic structures extensively \linebreak \cite{PhysRevB.49.11080,PhysRevLett.90.196402,Huang2004,ADMA:ADMA200600106}. In particular, polaritonic materials have received much interest, mainly because they exhibit a strong resonance at infrared frequencies \cite{2040-8986-14-5-055103}. A common model for polaritonic materials is
\begin{equation}\label{eq:polaritonic}
	\epsilon (\lambda):=\epsilon_{\infty}\left (1+\frac{\omega_{\mrm{L}}^2-\omega_{\mrm{T}}^2}{\omega_{\mrm{T}}^2-\lambda} \right ).
\end{equation}
Contrary to ordinary non-dispersive structures, the physical lattice constant $d$ (the physical unit cell is $d\times d$) plays an important role. Let $d=21\mrm{\mu m}$ denote the lattice constant and $\nu$ the speed of light. The material constants of  gallium arsenide (GaAs) in units of $2\pi \nu/a$ then are \cite{2040-8986-14-5-055103},
\begin{equation}\label{eq:GaAs}
   \epsilon_{\infty}=10.9, \quad \omega_{\mrm{L}}=0.612,\quad \omega_{\mrm{T}}=0.568. 
\end{equation} 
In all examples the standard IRA algorithm as provided by ARPACK \cite{MR1621681} was used to compute the eigenvalues. Moreover, in the one pole case residual inverse iteration \cite{MR799120} based on the Rayleigh functionals \eqref{ppm} was used to verify the computations of $\wt{\la}_{1,1}$ and $\wt{\la}_{2,1}$ defined by \eqref{eq:Gal_low}, \eqref{eq:Gal_high}.

\subsection{One pole case.}

The conforming finite element method in Section \ref{conforming} was used to study the polaritonic material model \eqref{eq:polaritonic} when the constants in the material model \eqref{eq:epsModel} are $M=2$, $\wh M=1$, and $L_1=1$.  The material in $\Omega_1$ is polaritonic with constants \eqref{eq:GaAs} and $\Omega_2$ is filled 
\pagebreak   
with air, $\epsilon=1$. Hence, the constants in the material model
\[
	\epsilon (x,\la)=\left (a_1+\frac{b_{1,1}}{\la-c_{1,1}}\right )\chi_{\Omega_1}(x)+a_2\chi_{\Omega_2}(x), \quad x\!\in\! \Omega\!=\!\Omega_1\dot\cup\Omega_2,
\vspace{-2mm}	
\]
are
\begin{equation}
	\aaa_1=\epsilon_{\infty},\quad \aaa_2=1,\quad \bb_{1,1}=\epsilon_{\infty}(\omega_{\mrm{L}}^2-\omega_{\mrm{T}}^2),\quad \cc_{1,1}=\omega_{\mrm{T}}^2
\end{equation}
with $\epsilon_\infty$, $\omega_{\mrm{L}}$, and $\omega_{\mrm{T}}$ as in \eqref{eq:GaAs}.

Table \ref{MinMaxA} shows that the approximations $\wt{\nu}_1$, $\wt{\nu}_2$ to the first two eigenvalues $\nu_1(A_k)$, $\nu_2(A_k)$ of $A_k$  decrease for higher polynomial degrees $p$, which is expected from the classical min-max principle (\ref{eq:A_Min_Max}). The numerical calculations indicate that the convergence is exponential and that the estimate (\ref{eq:estA}) with (\ref{Rate_p}) holds. Moreover, Table \ref{MinMaxA} suggests the same convergence behaviour for the approximations to the lowest eigenvalue $\la_{1,1}$ of the operator matrix $\CA_k$ and to the first eigenvalue $\la_{2,1}$ above $\cc_{1,1}$. This behaviour is expected from the variational principle (\ref{minmax-}), (\ref{GalerkinMinmax}).

\begin{table}[ht]
\caption{\small Polaritonic material model \eqref{eq:polaritonic}. Eigenvalues of $\widetilde{A}_k$ and $\widetilde \CA_k$ when $k\!=\!(\pi,0)$, $\aaa_1\!=\!10.9$, $\aaa_2\!=\!1$, $\bb_{1,1}\!=\!22.3419$, $\cc_{1,1}\!=\!12.7367$. 
The bold numbers show the digits in common with the solutions for $p\!=\!10$.
\vspace{-2mm}}
\begin{center}
\begin{tabular}{c c l l l l}
$p$ & $\wt N$ &  $\wt{\nu}_1$ & $\wt{\nu}_2$ &  $\wt{\la}_{1,1}$ & $\wt{\la}_{2,1}$ \\
\hline\\[-2.5mm]
4	& 320     &  \textbf{3.544}840275   &    \textbf{5.16}8021445 	&	\textbf{1.402}494821   &     \textbf{14.396}21821\\ 
6 	& 720     &  \textbf{3.54459}6578   &    \textbf{5.166}804859 	&   \textbf{1.40223}9773   &     \textbf{14.3960}3122\\ 
8 	& 1280   &  \textbf{3.5445959}50   &    \textbf{5.1667979}31 	&  	\textbf{1.40223895}8   &     \textbf{14.396029}80\\
10 	& 2000   &  \textbf{3.544595948}   &    \textbf{5.166797908}   	&    \textbf{1.402238956}   &     \textbf{14.39602979}  
\end{tabular}
\vspace{-1mm}
\end{center}
\label{MinMaxA}
\end{table}

Table \ref{Tabel:pol} shows the numerical approximations of the bounds  \eqref{lobound-}--\eqref{lobound+}  for a few eigenvalues when $\dim \Hi_{\wt N}=7320$ and $\dim\Hii_{\wt N}=3504$. Recall that  $N(\widetilde A_{k},c)$ denotes the number of eigenvalues of $\widetilde A_{k}$ in $\Hi_{\wt N}$ less than or equal to $c$. In the example we have $N(\widetilde A_k,c_{1,1})=4$. Hence, for the matrix problem the condition $\dim \Hi_{\wt N} -\dim\Hii_{\wt N} \ge N(\widetilde A_{k} ,c_{1,1})$  in Remark \ref{remark:shift} ensures that the shift $\kappa_2$ for the finite dimensional problem is computed correctly. In this example, numerical calculations show that $\widetilde{\CS}(c_{1,1}^{+})$ does not have any negative eigenvalues, which implies that the index shift $\kappa_2$ is zero.
 


Note that the chosen material parameters \eqref{eq:GaAs} give $\|B\|=\sqrt{\bb_{1,1}\cc_{1,1}/\aaa_1}\approx 5.1$, which is relatively large;  
in this case we cannot expect the bounds  \eqref{twosidedestimates} to be~tight.

\begin{table}[h]
\caption{\small Polaritonic material model \eqref{eq:polaritonic}. Bounds on $\wt{\la}_{1,j}$, $\wt{\la}_{2,j}$ for $\Omega_1$ disk of radius $r=0.475$, $k=(\pi,0)$, 
$N(\widetilde A_{k},c_{1,1})=4$, $\aaa_1=10.9$, $\aaa_2=1$, $\cc_{1,1}=12.73668500$, and here $\bb_{1,1}=22.3419$, $\kappa_2=0$.
\vspace{-2mm}}
\begin{center}
\begin{tabular}{c c c c c c c}
$n$ &  $\wt{\la}_{1,j}^{L}$ & $\wt{\la}_{1,j}$ & $\wt{\la}_{1,j}^{U}$ &  $\wt{\la}_{2,j}^{L}$ & $\wt{\la}_{2,j}$ & $\wt{\la}_{2,j}^{U}$\\
\hline\\[-2.5mm]
1	&	1.25       	&   	1.39		&	3.53 		&	$c_{1,1}$		&	14.38	&	15.01\\
2	&	2.57       	&   	3.09     	&     	5.14		&	$c_{1,1}$   	&   	14.82     	&     15.31\\
3	&	5.93       	&   	6.04     	&      9.76		&	$c_{1,1}$     	&   	16.44     	&     16.57\\
4	&	7.00		&	7.30		&	11.55	&	$c_{1,1}$		&	16.96	&	17.29\\
5	&	9.42	 	&	9.56		&	$c_{1,1}$		&	17.31	&	20.48	&	20.63\\
\end{tabular}
\label{Tabel:pol}
\vspace{-1mm}
\end{center}
\end{table}

Table \ref{Tabel:Low} shows the numerical approximations of the bounds  \eqref{twosidedestimates}  for a few eigenvalues when $\bb_{1,1}=1$, which gives a less strong rational term  $\|B\|\approx 1$ compared to the data used to compute the values in Table \ref{Tabel:pol}.  Notice that the bounds in the latter case are much tighter. The shift $\kappa_2=2$ is computed numerically as above.

\begin{table}[h]
\caption{\small Polaritonic material model \eqref{eq:polaritonic}. Bounds on $\wt{\la}_{1,j}$, $\wt{\la}_{2,j}$ for $\Omega_1$ disk of radius $r=0.475$,
$k=(\pi,0)$, $N(\widetilde A_{k},c_{1,1})=4$,  $\aaa_1=10.9$, $\aaa_2=1$, $\cc_{1,1}=12.73668500$, and here $\bb_{1,1}=1$, $\kappa_2=2$.
\vspace{-2mm}}
\begin{center}
\begin{tabular}{c c c c c c c}
$n$ &  $\wt\la_{1,j}^{L}$ & $\wt\la_{1,j}$ & $\wt\la_{1,j}^{U}$ &  $\wt\la_{2,j+2}^{L}$ & $\wt\la_{2,j}$ & $\wt\la_{2,j+2}^{U}$\\
\hline\\[-2.5mm]
1	&	1.60       	&   	1.61		&	1.71 		&	$c_{1,1}$		&		12.96		&	12.97\\
2	&	3.54       	&   	3.57     	&   3.67		&	$c_{1,1}$   		&   	13.05     	&   13.09\\
3	&	7.67       	&   	7.68     	&   7.90		&	15.47    	&   	15.82     	&   15.85\\
4	&	9.47		&		9.51		&	9.83		&	16.42 	&	16.67	&	16.72\\
5	&	12.36		&		12.38		&	$c_{1,1}$		&	20.07 	&	 20.21	&	20.23\\
\end{tabular}
\label{Tabel:Low}
\vspace{-2mm}
\end{center}
\end{table}




\subsection{Multi-pole case.}

For simplicity we consider only the case of two poles, where the constants in the material model \eqref{eq:epsModel} are $M=2$, $\wh M=1$, $L_1=2$, and $\Omega_1$ is a disk of radius $r=0.2$. 
The symmetric interior penalty method is used to discretize the block operator matrix \eqref{eq:A-multi}.  
A few eigenvalues $\wt \la_{1,j}(k)$, $\wt \la_{2,j}(k)$ are numerically calculated using a selection of vectors $k$ along the line segments between the points 
\begin{equation}
	\Gamma=(0,0),\quad X=(\pi,0),\quad M=(\pi,\pi). 
\end{equation}
The triangular path formed by these points is called the boundary of the irreducible Brillouin zone \cite{JJWM2008}. 

\begin{figure}[h]
\begin{center}
\includegraphics[scale=0.76]{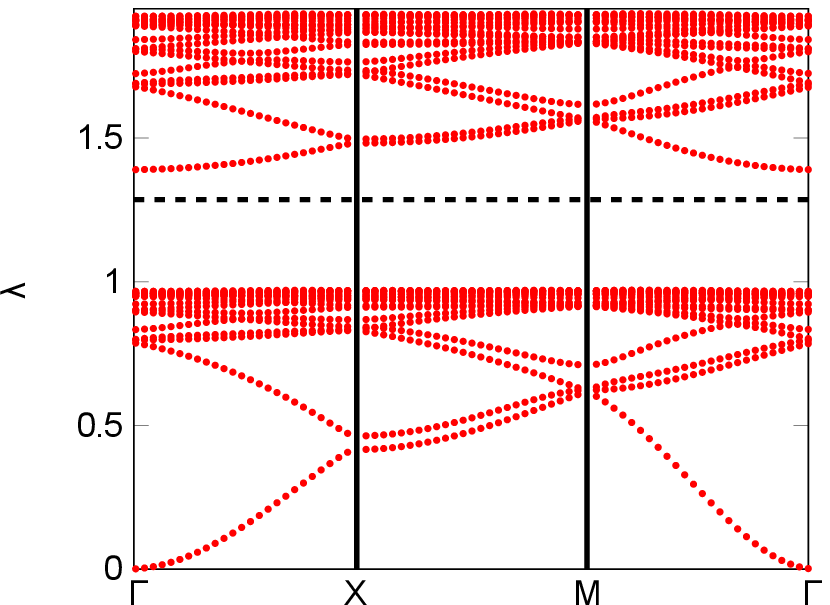}
\includegraphics[scale=0.76]{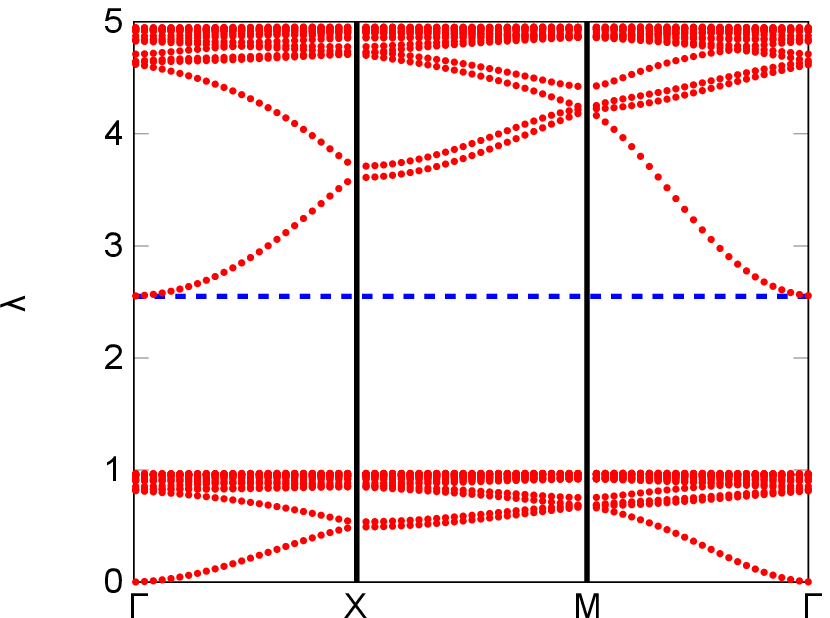}
\caption{\small Lorentz models \eqref{eps-last} with $2$ poles: Band structure with a disk inclusion $\Omega_1$ with $r=0.2$. The dashed line marks the upper limit of (\ref{PC:TwoPoles}). In both cases $\aaa_1=4$, $\aaa_2=1$, $\bb_{1,1}=8$, $\cc_{1,1}=1$; above $\bb_{1,2}=10$, $\cc_{1,2}=2$, below $\bb_{1,2}=2.53$, \vspace{-2mm} $\cc_{1,2}=5$.}
\label{Fig:Band}
\end{center}
\end{figure}

Figure \ref{Fig:Band} shows a 
few eigenvalues for the model $\aaa_1=4$, $\aaa_2=1$, $\bb_{1,1}=10$, $\cc_{1,1}=1$, $\bb_{1,2}=4$, $\cc_{1,2}=2$. The dashed line corresponds to the upper bound in (\ref{PC:TwoPoles}). 
Note that these bounds only require that $A_k>c_{1,2}$ holds. Hence, in this case we can guarantee a band gap using for instance the verified eigenvalue enclosures in \cite{MR2556599} 
to show that $A_k>c_{1,2}$. This problem will, in general, be much less demanding than proving the band gap directly.

Lastly, we consider a case of a permittivity function \eqref{eq:epsModel} where both materials are $\lambda$-dependent
\begin{equation}
\label{eps-last}
	\epsilon (x,\lambda)\!:=\!\left(\aaa_1 \!+\! \frac{\bb_{1,1}}{\cc_{1,1} \!-\! \lambda}\right ) \chi_{\Omega_1}(x)+\left(\aaa_2 \!+\! \frac{\bb_{2,1}}{\cc_{2,1} \!-\! \lambda}\right )\chi_{\Omega_2} (x),
	\quad x\!\in\! \Omega\!=\!\Omega_1\!\dot\cup\Omega_2.
\end{equation}
Here Proposition \ref{prop:4.1} does not apply since $B_{1,1}$ and $B_{2,1}$ do not have the same~range.  

Indeed, the claim of Proposition \ref{prop:4.1} does not seem to hold.
Figure \ref{Fig:BandTwoProjections} shows that numerically we have a $k$-dependent eigenvalue above $c_{1,1}$ that tends to $c_{1,1}$ 
if $k\!\rightarrow\! k_0$ for some values on $k_0\!\in\! (-\pi,\pi)^{2}\!\!$. So, in this case, we cannot expect a band~gap. 

\begin{figure}[hb]
\begin{center}
\includegraphics[scale=0.76]{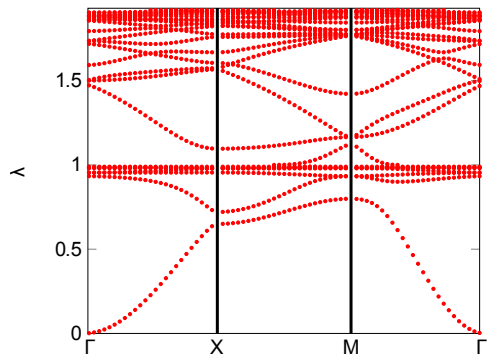}
\caption{\small Lorentz model \eqref{eps-last} with $2$ poles: Band structure with a disk inclusion $\Omega_1$ with $r=0.2$, $\aaa_1=1$, 
$\aaa_2=4$, $\bb_{1,1}=8$, $\cc_{1,1}=1$, $\bb_{2,1}=12$, $\cc_{2,1}=2$. No band gap above \vspace{-2mm} $c_{1,1}=1$.}
\label{Fig:BandTwoProjections}
\end{center}
\end{figure}


\vspace{2.5mm}

{\small
{\bf Acknowledgements.} \ 
The authors gratefully acknowledge the support of the Swedish Research Council under Grant No.\ $621$-$2012$-$3863$. 
H.\ Langer and C.\ Tretter thank the Institutionen f\"or matematik och matematisk statistik 
at Ume\aa \ universitet very much for the kind hospitality.
C.\ Tretter also gratefully acknowledges a guest professorship of the \emph{Knut och Alice Wallen\-bergs Stiftelse} at Stockholms universitet where this work was completed.
}


\bibliographystyle{alpha}
\bibliography{Bib-CH-CE2}


\end{document}